\newtheorem{theorem}{Theorem}[section]
\newtheorem{lemma}[theorem]{Lemma}
\newtheorem{claim}[theorem]{Claim}
\newtheorem{corollary}[theorem]{Corollary}
\newtheorem{definition}[theorem]{Definition}
\newcommand{\ignore}[1]{}
\newcommand{\remove}[1]{}
\newcommand{\sm}{\setminus}
\newcommand{\EQ}{\;=\;}
\newcommand{\LT}{\;<\;}
\newcommand{\GT}{\;>\;}
\newcommand{\LE}{\;\le\;}
\newcommand{\GE}{\;\ge\;}
\newcommand{\aaa}{\bm{a}}
\newcommand{\bbb}{\bm{b}}
\newcommand{\Array}{\mbox{\it Array}}
\newcommand{\Length}{\mbox{\it Length}}
\newcommand{\Get}{\mbox{\it Get}}
\newcommand{\Set}{\mbox{\it Set}}
\newcommand{\Grow}{\mbox{\it Grow}}
\newcommand{\Shrink}{\mbox{\it Shrink}}
\newcommand{\Allocate}{\mbox{\it Allocate}}
\newcommand{\Combine}{\mbox{\it Combine-Blocks}}
\newcommand{\Deallocate}{\mbox{\it Deallocate}}
\newcommand{\Copy}{\mbox{\it Copy}}
\newcommand{\Split}{\mbox{\it Split-Blocks}}
\newcommand{\Rebuild}{\mbox{\it Rebuild}}
\newcommand{\AAA}[2]{A[#1][#2]}
\newcommand{\NN}{\mathbb{N}}
\newcommand{\e}{{\rm e}}
\begin{document}

\newcommand\relatedversion{}

\title{Optimal resizable arrays \thanks{A preliminary version of this paper appeared in SOSA 2023.}}

\author{Robert E. Tarjan \thanks{Department of Computer Science, Princeton University, NJ, USA. E-mail: {\tt ret@princeton.edu}.  Research at Princeton University partially supported by an innovation research grant from Princeton and a gift from Microsoft.} \and Uri Zwick \thanks{Blavatnik School of Computer Science, Tel Aviv University, Tel Aviv, Israel. E-mail: {\tt zwick@tau.ac.il}. Research supported by ISF grant no.\ 2854/20.}}
\date{}

\maketitle

\begin{abstract}%
\setlength{\parindent}{0pt}
\setlength{\parskip}{3pt plus 2pt}\noindent
A \emph{resizable array} is an array that can \emph{grow} and \emph{shrink} by the addition or removal of items from its end, or both its ends, while still supporting constant-time \emph{access} to each item stored in the array given its \emph{index}. Since the size of an array, i.e., the number of items in it, varies over time,  space-efficient maintenance of a resizable array requires dynamic memory management. A standard doubling technique allows the maintenance of an array of size~$N$ using only $O(N)$ space, with $O(1)$ amortized time, or even $O(1)$ worst-case time, per operation. Sitarski and Brodnik et al.\ describe much better solutions that maintain a resizable array of size~$N$ using only $N+O(\sqrt{N})$ space, still with $O(1)$ time per operation. Brodnik et al.\ give a simple proof that this is best possible. 

We distinguish between the space needed for \emph{storing} a resizable array, and accessing its items, and the \emph{temporary} space that may be needed while growing or shrinking the array. For every integer $r\ge 2$, we show that $N+O(N^{1/r})$ space is sufficient for storing and accessing an array of size~$N$, if $N+O(N^{1-1/r})$ space can be used briefly during grow and shrink operations. Accessing an item by index takes $O(1)$ worst-case time while grow and shrink operations take $O(r)$ amortized time. Using an exact analysis of a \emph{growth game}, we show that for any data structure from a wide class of data structures that uses only $N+O(N^{1/r})$ space to store the array, the amortized cost of grow is $\Omega(r)$, even if only grow and access operations are allowed.
The time for grow and shrink operations cannot be made worst-case, unless $r=2$.
\end{abstract}

\section{Introduction}\label{S-intro}

Arrays and resizable arrays are perhaps the most widely used data structures. Surprisingly, we show that apparently not everything was already said about them. We describe simple new implementations of resizable arrays that in many cases are more memory-efficient than all implementations proposed so far. Our perspective is theoretical, but our proposed implementations may have some practical implications.

An \emph{array} of size~$N$ is a data structure that holds a sequence of items $a_0,a_1,\ldots,a_{N-1}$. The $i$-th item in the sequence, for any $0\le i<n$, can be retrieved or modified in constant time. An array is naturally implemented as a contiguous block of~$N$ words of memory, where each word holds an item, or a pointer to an item. Items are retrieved or modified in constant time using the \emph{random access} capabilities of the machine used. Arrays as defined here are inherently of a fixed size.

A \emph{resizable array}, also known as a \emph{dynamic array} or \emph{dynamic table}, is an array whose size can increase or decrease, while still allowing the retrieval or the modification of the $i$-th item for any $i$ in constant time. Decreasing the size of an array is easy. We simply do not use some of the memory allocated to the array. This, however, is not memory-efficient. Increasing the size of an array is harder, since the word just beyond the contiguous block allocated to the array may be in use for a different purpose. Thus, maintaining a resizable array requires dynamic memory management.

For simplicity, we assume throughout most of the paper that the resizable arrays grow and shrink only at their `far end'. Thus, when the size of the array is increased from~$N$ to~$N+1$ a new item~$a_N$ is added to the array. Similarly, when the size is decreased from~$N$ to~$N-1$, the item $a_{N-1}$ disappears. Such  resizable arrays form a generalization of a \emph{stack}. All our results extend to resizable arrays that can grow and shrink at both ends, however, and thus they give a generalization of a \emph{double-ended queue (deque)}. This follows easily: one can easily implement a double-ended resizable array using two single-ended resizable arrays.

The standard textbook solution for resizable arrays (see, e.g., Cormen et al.\ \cite[Section 17.4]{CLRS09}) is the \emph{doubling}, or \emph{geometric expansion/shrinking} technique. Start by allocating an array of some initial size. If the array is full, allocate a new array of, say, twice the size and copy all the items into it. When the array is, say, less than a quarter full, allocate a new array of, say, half the size and again copy all items into it. In both cases, the memory used by the old array is released and can be used for other purposes. A simple amortization argument shows that the \emph{amortized} cost of each size increase or decrease is $O(1)$. Each item can still be accessed and modified in $O(1)$ worst-case time. Furthermore, the time for an increase or  decrease operation can be made $O(1)$ \emph{worst-case} by a standard background-rebuilding technique.

According to Wikipedia \cite{enwiki:1078194534}, the geometric expansion/shrinking technique, with growth factors ranging from~$1.25$ to~$2$, is used to implement Java's ArrayList, Python's PyListObject, C++ STL's Vector and more.

The doubling technique allows storing a resizable array of size~$N$ using $O(N)$ space. This is nice, and may be sufficient in many applications. But it also means that at various times as much as 50\% of the memory allocated for an array is not used, even if only grow operations are considered. More generally, if a growth factor of $1+\alpha$ is used, a $\frac{1}{1+\alpha}$ fraction of memory may be wasted. By choosing a small value of $\alpha$ the fraction of wasted storage can be made arbitrarily small, but the amortized cost for a size increase becomes $O(\frac{1+\alpha}{\alpha})$, i.e., larger and larger.

Is it possible to store a resizable array of size~$N$ using $N+o(N)$ space while still maintaining $O(1)$ time per operation? 
Sitarski \cite{Sitarski96} and Brodnik et al.\ \cite{BCDMS99} answered this question affirmatively by describing simple implementations that use only $N+O(\sqrt{N})$ space. (See more details in Section~\ref{S-previous}.) Brodnik et al.\ \cite{BCDMS99} also show that in a certain sense this is optimal. More specifically, they show that any resizable array implementation must \emph{sometime} use $N+\Omega(\sqrt{N})$ space. (We review their simple argument in Section~\ref{S-lower}.)

We propose distinguishing between the amount of space needed to \emph{store} a resizable array currently of size~$N$, supporting retrieving and modifying items in $O(1)$ worst-case time, and the space that may be needed \emph{temporarily} to increase or decrease the size of the array, in $O(1)$ amortized time. In many settings it is preferable to have a more compact representation of the array, even if more storage is required, temporarily, for resizing the array. For example, an application may use many resizable arrays, only one of which is resized at any given moment. 

Our first implementation to `break' the lower bound of Brodnik et al.\ \cite{BCDMS99} is a simple implementation that uses only $N+O(N^{1/3})$ space to store an array, while temporarily using $N+O(N^{2/3})$ space to resize it. Items can be accessed in $O(1)$ worst-case time, and increasing or decreasing the size of the array takes $O(1)$ amortized time. 
Furthermore, we show that this is essentially optimal. Any implementation that uses only $N+O(N^{1/3})$ space for storing an array must, at certain times, use $N+\Omega(N^{2/3})$ space during resizing. 

More generally we show that, for every $r\ge 2$, it is possible to store a resizable array using $N+O(N^{1/r})$ space, while needing only $N+O(N^{1-1/r})$ space to resize it. Retrieving and modifying items by index takes $O(1)$ worst-case time, while resizing takes $O(r)$ amortized time. Since background rebuilding cannot be used, the $O(r)$ amortized time for resizing, which we show is optimal, cannot be made worst-case, unless $r=2$, in which case the implementation becomes very similar to the implementations of Sitarski \cite{Sitarski96} and Brodnik et al.\ \cite{BCDMS99}.

The rest of the paper is organized as follows. In Section~\ref{S-resize} we give a more precise definition of the problem and the computational model used. In Section~\ref{S-previous} we describe previous work, mostly that of Sitarski \cite{Sitarski96} and Brodnik et al.\ \cite{BCDMS99}. In Section~\ref{S-lower} we describe the simple lower bound of Brodnik et al.\ \cite{BCDMS99} and our extension of it. In Section~\ref{S-simple} we give our simple $r=3$ solution, i.e., $N+O(N^{1/3})$ space for storing, $N+O(N^{2/3})$ space for resizing. In Section~\ref{S-general-r} we describe, for any $r\ge 2$, an implementation that uses $N+O(rN^{1/r})$ space to store the array, while using only $N+O(N^{1-1/r})$ space during resize operations. The amortized cost of grow and shrink operations is $O(r)$. In Section~\ref{S-transform} we give a simple transformation that can be used to tune the implementation of Section~\ref{S-general-r} and reduce the space needed for storing the array to $N+O(N^{1/r})$ while maintaining constant time access and $O(r)$ amortized time for grow and shrink operations. (Note that this is significant only if $r=r(N)$ is considered to grow with~$N$.) To prove that the $O(r)$ amortized time of resizing operations is optimal, at least for a wide class of data structures that we call \emph{standard}, we introduce in Section~\ref{S-growth-game} an abstract \emph{growth game} and analyze it completely. In Section~\ref{S-lower-grow} we rely on the analysis of the growth game to obtain an $\Omega(r)$ lower bound on the amortized cost of grow operations, if only $N+O(N^{1/r})$ space can be used to store the array.
We end in Section~\ref{S-concl} with some concluding remarks.

\section{Resizable arrays}\label{S-resize}

A resizable array is an abstract data type that supports the following operations:

\smallskip
$A\gets \Array()$ - Create and return an initially empty array.

$A.\Length()$ - Return the current length of the array~$A$.

$A.\Get(i)$ - Return the $i$-th item $a_i$ in the array~$A$. It is assumed that $0\le i< A.\Length()$.

$A.\Set(i,a)$ - Change the $i$-th item in the array~$A$ to~$a$. It is assumed that $0\le i< A.\Length()$.

$A.\Grow(a)$ - Increase the length of array~$A$ by~$1$ and set the new and last item in it to~$a$.

$A.\Shrink()$ - Decrease the length of array~$A$ by~$1$, discarding its last item.

\smallskip
Note that items can only be added to or removed from the end of the array. Allowing items to be added or removed from arbitrary places of the array, updating the indices of the higher-index items accordingly, makes the problem much harder, even if space efficiency is not required. The best time bound that can be obtained simultaneously for all operations is then $O(\log N/\log\log N)$, where~$N$ is the size of the array (Dietz \cite{Dietz89}), and this is tight, as follows from Fredman and Saks \cite{FrSa89}. For any $r\ge 1$, it is possible to implement access operations in $O(r)$ worst-case time, but with insertions and deletions taking $O(N^{1/r})$ amortized time. 
(See, e.g., Goodrich and Kloss \cite{GoKl99}, Joannou and Raman~\cite{JoRa11}, Katajainen \cite{Katajainen16} and Bille et al.\ \cite{BCEG17}.) We stress again that we only allow adding or removing items at the end of the array.

We assume that a memory management system allows us to allocate and deallocate fixed-length arrays of arbitrary size. (As an example, consider the \emph{malloc} and \emph{free} functions of C or C++.) We assume, for simplicity, that each such call takes only constant time. All our amortized bounds hold, however, if allocating or deallocating a fixed-length array of size~$N$ requires $O(N)$ time.
We do not consider the inner workings of the memory management system.

An implementation of a resizable array must use a collection of dynamically allocated fixed-length arrays, sometimes referred to as \emph{blocks}. Typically, each one of these blocks is either a \emph{data block}, containing items of the resizable array, each item in a separate word, or an \emph{index block}, containing pointers to other blocks, each pointer in a separate word, or an \emph{auxiliary block}, containing auxiliary information used by the data structure, such as the lengths of the various blocks. All the data structures presented in this paper are of this form. If the resizable array is currently of size~$N$, then the total size of all the data blocks must be at least~$N$. (See also the discussion after the proof of Theorem~\ref{T-lower1} in Section~\ref{S-lower}.) Each block, except one index block referred to as the \emph{main} index block, must be indicated by a pointer in one of the index blocks, since otherwise the block would be inaccessible. The space used by the data structure is the sum of the lengths of all the blocks allocated.

To simplify the discussion of the various data structures considered in this paper, we introduce the following definition.

\begin{definition}[$(s(N),t(N))$-implementation]\label{D-st}
Let $s(N),t(N)$ be two non-decreasing functions. A resizable array data structure is said to be an \emph{$(s(N),t(N))$-implementation} if it uses at most $N+s(N)$ space to store an array of size~$N$, and at most $N+t(N)$ space during a grow or shrink operation on an array of size~$N$.
\end{definition}

\section{Previous work}\label{S-previous}

The problem of designing space-efficient implementations of resizable arrays belongs of course to the area of \emph{succinct data structures}. See, e.g., Raman et al. \cite{RRR01}, Munro and Srinivasa \cite{MuSr18} and the references therein. But only a handful of papers, which we describe next, seem to address the basic resizable arrays problem as defined here.

\subsection{Basic data structures.}

A \emph{basic} data structure for resizable arrays uses only a single fixed-size array to implement a resizable array. When the fixed-size array is full, a basic data structure needs to allocate a larger array, copy all items from the old array to the new array, add the new item, and release the old array. A basic data structure is free to choose the size of the new fixed-size array. Following some shrink operations, a basic data structure may decide that the fixed-size array is too empty, in which case it can allocate a small array and copy all items into it, releasing the old array.

The most space-efficient basic data structure keeps a fixed-size array whose size is exactly equal to the size of the resizable array. A new array needs to be allocated following each grow and shrink operation. The space used to store an array of size~$N$ is $N+O(1)$. (We need to store the length of the array and a pointer to the fixed-size array.) However, the (amortized) cost of grow and shrink operations is $\Omega(N)$. Furthermore, while implementing a grow or shrink operation, the data structure temporarily needs $2N+O(1)$ space, since two arrays need to stored together. Such a data structure should only be used if grow or shrink operations are rare.

A more practical basic data structure uses the geometric expansion/shrinking technique discussed in the introduction. Choose a fixed parameter $\alpha>0$. When the allocated array is full, allocate a new array of size $(1+\alpha)N$, copy all items into it and release the old array. When an allocated array of size~$N$ contains only $N/(1+\alpha)^2$ items, allocate a new array of size $N/(1+\alpha)$, copy all items into it and release the old array. Simple calculations 
show that the amortized cost of a grow operation is $\frac{1+\alpha}{\alpha}$ and that the amortized cost of a shrink operation is $\frac{1}{\alpha}$. \footnote{More specifically, the amortized cost of a grow operation is $\frac{(1+\alpha)N}{\alpha N}$ while the amortized cost of a shrink operation is 
$\frac{N}{(1+\alpha)^2} \big/ (\frac{N}{1+\alpha}-\frac{N}{(1+\alpha)^2})$. Note that following each resize operation, if~$N$ is the current number of items in the array then the size of the array is $(1+\alpha)N$.
} A judicious choice of $\alpha$ provides a satisfactory solution for many practical situations.

\subsection{The data structure of Sitarski.}\label{sub-HAT}

Sitarski \cite{Sitarski96} (see also \cite{enwiki:1076479510}) described a nice and simple implementation of resizable arrays that he calls a \emph{hashed array tree (HAT)}. This name is a bit unfortunate since no hashing is involved and the tree used by the data structure is actually a list.

The HAT data structure maintains an \emph{index block} $I$ of size~$B$, where $\sqrt{N}\le B<4\sqrt{N}$, and $\lceil N/B\rceil$ or $\lceil N/B\rceil+1$ \emph{data blocks} each of size~$B$, as shown in Figure~\ref{F-HAT}. The~$N$ items currently in the array are stored, in a sequential manner, in the first $\lceil N/B\rceil$ data blocks. The $\lceil N/B\rceil$-th data block is only partially full if~$N$ is not divisible by~$B$. The $(\lceil N/B\rceil+1)$-th data block, if it exists, is empty. The $i$-th entry in the index block contains a pointer to the $i$-th data block.

\begin{figure}[t]
\begin{center}
\includegraphics[scale=0.45]{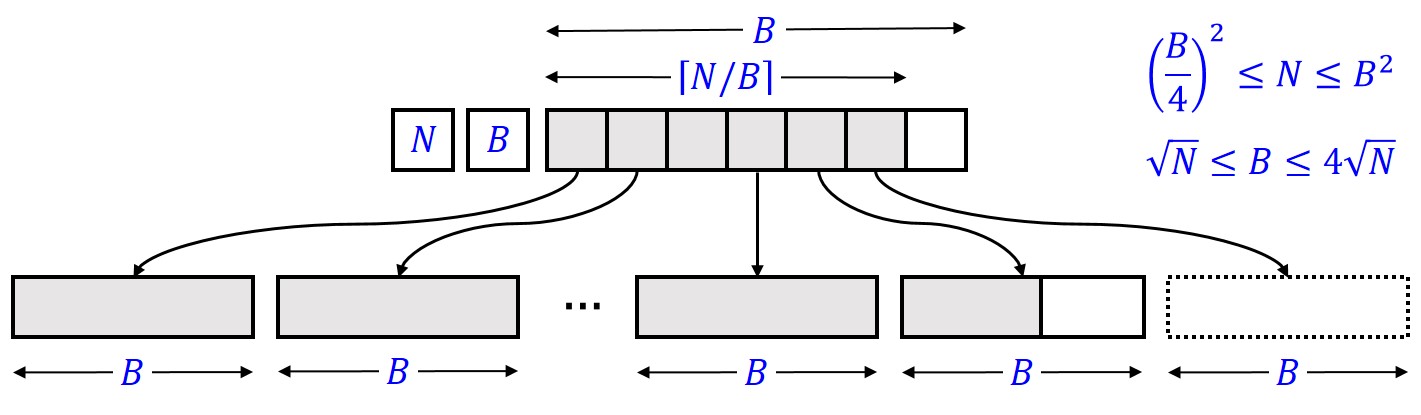} 
\end{center}
\vspace*{-10pt}
\caption{Sitarski's HAT data structure.}\label{F-HAT}
\end{figure}

The data structure uses only $N+3B+2=N+O(\sqrt{N})$ words of memory. (The $3B$ term accounts for the size of the index block and the two possibly empty data blocks.) The $i$-th item is stored at the $(i \bmod B)$-th position of the $\lfloor i/B\rfloor$-th data block, so accessing or modifying it takes $O(1)$ worst-case time. If $B$ is chosen to be a power of~2, computing these indices is especially easy.

If $\sqrt{N}< B$, or equivalently $N<B^2$, then extending the array is simple. If the $\lceil N/B\rceil$-th data block is not full, the new item is added as the last item of this data block. If the $\lceil N/B\rceil$-th data block is full but an $(\lceil N/B\rceil+1)$-st data block is allocated, the new item becomes the first item in this array. Otherwise, a new data block of size $B$ is allocated, and the new item becomes its first item. The extension takes O(1) worst-case time, if allocating a new array is assumed to take $O(1)$ time, or $O(1)$ amortized time, if allocating the new array is assumed to take $O(B)$ time.

When the data structure is full, i.e., $N=B^2$, the data structure is rebuilt with~$B$ doubled. If~$B$ is initially chosen to be a power of~$2$, it will stay a power of~$2$, simplifying the indexing operations. The amortized cost is still $O(1)$ since the $O(N)$ cost of the rebuilt operation can be charged to, say, the last $\frac{3}{4}N$ grow operations that must have occurred since the last rebuild. (Note that if $B=\sqrt{N}$ and $B$ is doubled, then the capacity of the data structure increases from~$N$ to~$4N$.)

The implementation of a shrink operation is similar. To avoid the deallocation and the immediate reallocation of a data block, the last data block is deallocated only if the last two data blocks are empty. When $B=4\sqrt{N}$, or equivalently $N=B^2/16$, the value of~$B$ is halved and the data structure is rebuilt. The amortized cost of grow and shrink operations is still $O(1)$, since at least $\Omega(N)$ operations must occur between two rebuild operations. 

It is important to note that a rebuild operation can be carried out while using only $O(\sqrt{N})$ extra storage. The new and old data blocks are allocated and deallocated one by one.

It is possible to deamortize the HAT data structure by keeping data blocks of two possible sizes and doing a background rebuilding, but the resulting data structure becomes more complicated. 
A simpler data structure with worst-case bounds is described next.

\subsection{The data structures of Brodnik et al.}

Brodnik et al.~\cite{BCDMS99}, apparently unaware of the HAT data structure of Sitarski \cite{Sitarski96}, described an elegant data structure that uses only $N+O(\sqrt{N})$ space and achieves $O(1)$ worst-case time bounds for grow and shrink operations (assuming that memory allocation takes $O(1)$ worst-case time). Appealing features of the data structure are that no rebuild operations are necessary and data items are never moved. 

The data structure of Brodnik et al.~\cite{BCDMS99} comes in two variants, depicted schematically in Figure~\ref{F-Brodnik}. In the first variant, shown in Figure~\ref{F-Brodnik}(a), the items of the array are stored consecutively in data blocks of sizes $1,2,3,\ldots$. As $1+2+\cdots+k=\frac{k(k+1)}{2}\approx \frac{k^2}{2}$, about $\sqrt{2N}$ blocks are needed to store the~$N$ items. More precisely, the number is $k=\lceil\frac{\sqrt{8N+1}-1}{2}\rceil$. The last data block may be only partially filled. An additional, $(k+1)$-st data block may be allocated, in which case it is completely empty. (This last data block once contained items that were subsequently removed.) An index block of size $B=\Theta(\sqrt{N})$ stores pointers to the data blocks. This index block is in fact a resizable array implemented using the na\"ive method.

Growing and shrinking the array are fairly simple operations. A new item is added as the last item of the partially filled block, if there is one, or as the first item of the empty block, if there is one. Otherwise, a new data block is allocated and the new item becomes its first item. A pointer to the new data block is added to the index block. If the index block is full, a larger index block is allocated and all pointers are copied to it. The old index block is deallocated. To obtain an $O(1)$ worst-case time bound, this copying should be done `in the background'. A shrink operation is similar.

The $i$-th item in the array can be accessed in $O(1)$ worst-case time, but calculating the index of the data block in which this item resides requires taking a square root. More precisely, the $i$-th item is stored in the $\ell$-th data block, where $\ell=\lceil\frac{\sqrt{8i+1}-1}{2}\rceil$, and it is the $(i-\frac{\ell(\ell-1)}{2})$-th item in this block.

To avoid the need to compute square roots, Brodnik et al.~\cite{BCDMS99} described a variant of their data structure, shown in Figure~\ref{F-Brodnik}(b). Here the size of each data block is a power of 2. More precisely, for $k=0,1,\ldots$, there is a virtual \emph{super block} composed of $2^{\lfloor k/2\rfloor}$ blocks of size $2^{\lceil k/2\rceil}$. Computing the super block in which the $i$-th block resides, and its position in this super block, can now be done using simple shift operations. For the exact details see \cite{BCDMS99}.\footnote{Actually, there is a small inaccuracy in the details given in \cite{BCDMS99}, but they can easily be fixed.}

\begin{figure}[t]
\centering
\begin{tabular}{lc}
\begin{minipage}[t]{2cm}\vspace*{1cm}(a)\end{minipage} & \includegraphics[scale=0.45,valign=t]{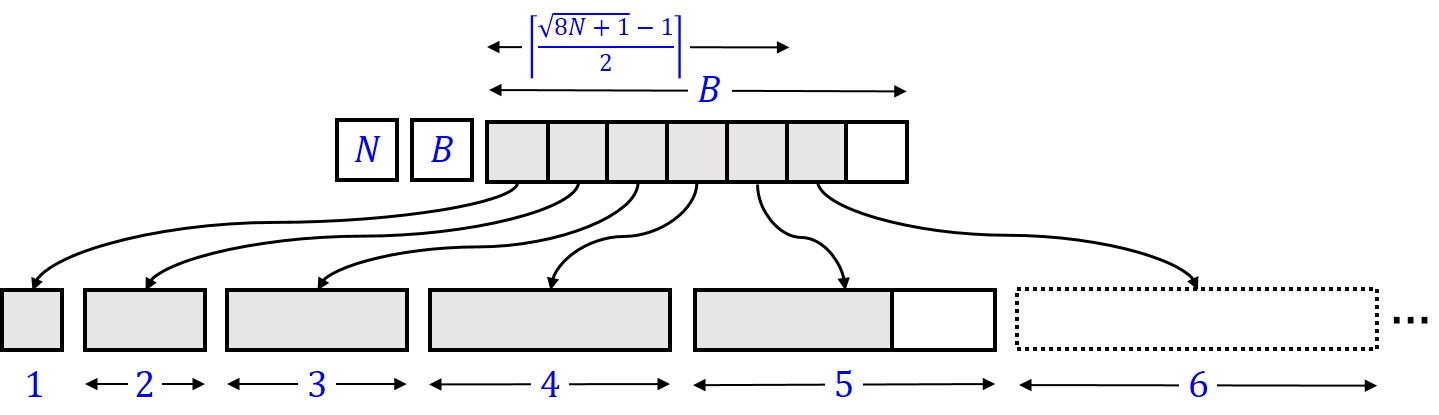} \\ \\[0.1cm]
\begin{minipage}[t]{2cm}\vspace*{1cm}(b)\end{minipage} & \includegraphics[scale=0.45,valign=t]{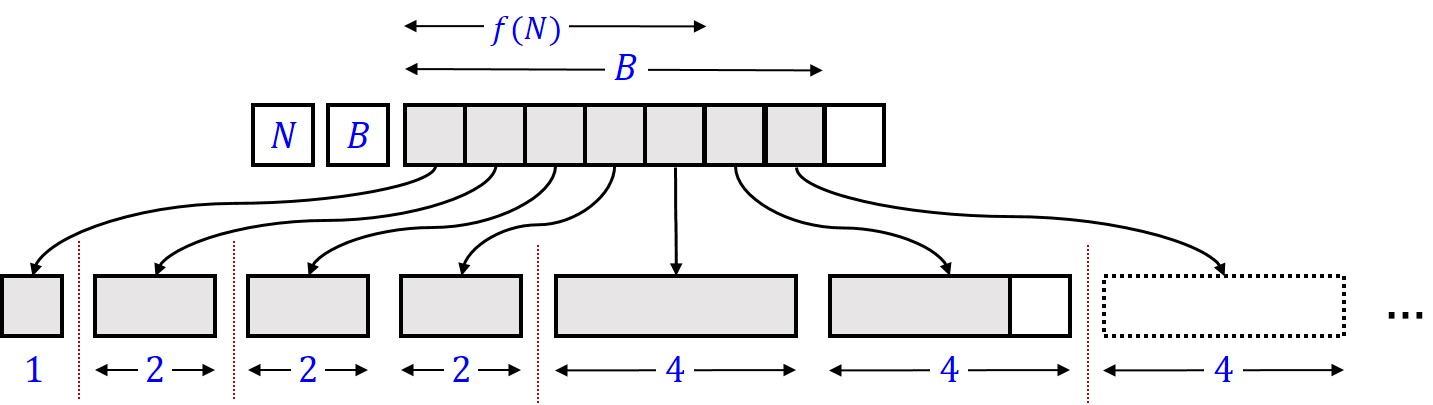} \\ \\[-0.1cm]
\end{tabular}
\caption{Two variants of the data structure of Brodnik et al. (a) The basic version of the data structure. (b) A version in which the sizes of blocks are powers of 2.}\label{F-Brodnik}
\end{figure}

\section{The lower bound of Brodnik et al.\ and its extension}\label{S-lower}

We describe a slightly modified version of a lower bound of Brodnik et al.~\cite{BCDMS99}.

\begin{theorem}\label{T-lower1}
Any data structure for maintaining a resizable array must at certain times use $N+\Omega(\sqrt{N})$ space, where~$N$ is the current length of the array, even if only grow and access operations are performed.
\end{theorem}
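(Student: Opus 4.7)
The plan is to carry out the elegant balancing argument of Brodnik et al., which pits the size of the largest data block against the number of data blocks needed. I would fix a large target length $N$, run $N$ consecutive grow operations on the initially empty array, and lower-bound the space at two carefully chosen moments. Throughout, let $f(N')$ denote the excess over $N'$ of the space used while the array has length $N'$; my goal is to show $f(N) = \Omega(\sqrt{N})$ for at least one moment in the sequence.

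Let $B$ be the maximum size of any data block ever allocated during this growth sequence. I obtain two opposing estimates. For the upper estimate on $B$, focus on the instant immediately after a data block of size $B$ is first allocated. At that moment, at most one item (the one triggering the allocation) can lie inside the new block, so at least $B - 1$ of its slots hold no item. Denoting the array length at this instant by $N' \le N$, the total space used is therefore at least $N' + (B-1)$, forcing $f(N') \ge B - 1$. Since $f$ is non-decreasing in the worst case along the run, we conclude $B \le f(N) + 1$.

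For the lower estimate on $B$, consider the instant when the array length is exactly $N$. Since every data block has size at most $B$, the structure must contain at least $\lceil N/B \rceil$ data blocks. Each of them, with the possible exception of one acting as the main block, must be referenced by a pointer stored in some index block, and those pointer slots are not counted as item storage. Hence the index blocks collectively contribute at least $\lceil N/B \rceil - O(1)$ cells of overhead, yielding $f(N) \ge N/B - O(1)$. Combining the two inequalities gives $f(N) \ge \max(B - 1,\; N/B - O(1))$; balancing at $B \approx \sqrt{N}$ produces $f(N) = \Omega(\sqrt{N})$, which is the desired bound.

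The main subtlety I expect is bookkeeping: giving clean, implementation-agnostic definitions of data block, index block, and auxiliary block so that the pointer slots counted in the second estimate genuinely qualify as overhead and not as item storage, and so that the block whose allocation we inspect in the first estimate need not be the final largest block (taking $B$ as the supremum over the entire sequence and invoking monotonicity handles this). A related point to state carefully is that the argument does not require a fancy adversary — a single monotone sequence of grows suffices, since the tension between block size and block count is a structural constraint on any such data structure, not an artifact of a clever input.
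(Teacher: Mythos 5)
Your overall strategy — pit the size of the largest block against the number of blocks, note that either quantity forces $\Omega(\sqrt N)$ overhead, and balance at $B\approx\sqrt N$ — is exactly the paper's argument, and the two estimates themselves are sound. There is, however, one step that does not hold as written: the claim that the excess $f$ is non-decreasing along the run, which you use to push the inequality $f(N')\ge B-1$ forward to $f(N)\ge B-1$. The excess can perfectly well go \emph{down}: a textbook doubling array has excess roughly $2\cdot 2^i$ right after allocating a fresh block of size $2^{i+1}$ while the length is still $2^i$, and excess roughly $2^i-1$ one grow later after the old block is freed. So you cannot conclude anything about $f(N)$ from the first estimate, and ``combining the two inequalities'' at the single moment $N$ is not justified.

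The fix is to stop trying to localize everything at the final moment and instead phrase the conclusion as a case split, which is what the paper does. If $B\ge\sqrt N$, then at the moment $N'\le N$ when that block was allocated the space is at least $N'+(B-1)\ge N'+\sqrt{N}-1\ge N'+\sqrt{N'}-1$, so the excess bound holds at that earlier moment (the theorem only asks that the bound hold \emph{at some} moment, measured against the length at that moment). If $B<\sqrt N$, then at the final moment there are at least $N/B>\sqrt N$ blocks, each needing a pointer, so $f(N)\ge\sqrt N-O(1)$. Either way some moment of length $\ell$ exhibits $\ell+\Omega(\sqrt{\ell})$ space. With that adjustment your proof is the same as the paper's; as written, it relies on a monotonicity property of $f$ that is simply false.
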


\begin{proof}
Let~$k$ be the number of contiguous memory blocks used by the data structure after a sequence of~$N$ grow operations. These blocks must contain all~$N$ items, since otherwise not all items can be accessed. Hence the total size of the blocks is at least~$N$. (See the discussion after the proof.) The data structure must also maintain a pointer to each such block, since otherwise it would not be able to access the block. Thus, if $k\ge \sqrt{N}$, the data structure uses at least $N+k \ge N+\sqrt{N}$ space.

If $k<\sqrt{N}$, then the the largest allocated block must be of size $\ell>\sqrt{N}$. This block is allocated when the array contained $N'\le N$ items. Just after this block was allocated, and its size included in the total space used by the data structure, the $N'$ items must be stored in other blocks. Thus, the total space used at that moment was $N'+\ell>N'+\sqrt{N'}$.
\end{proof}

The lower bound assumes that storing~$N$ items requires~$N$ words of memory. This assumption holds if each word of memory is $w$-bit long and an item is an arbitrary $w$-bit string. The lower bound also assumes that storing~$k$ pointers requires $k$ words of memory. This holds if memory is assumed to be of size $2^w$. If the memory is only of size $2^{cw}$, where $0<c<1$, then only $cw$ bits are required to represent a pointer, and $k$ pointers can be represented using about $ck$ words. This changes the lower bound on the extra space used, but by only a constant factor.

The lower bound of Brodnik et al.~\cite{BCDMS99} can be easily extended to a lower bound that considers separately the space used for storing a resizable array and the temporary space needed while growing or shrinking it. Recall from Definition~\ref{D-st} that an $(s(N),t(N))$-implementation of resizable arrays is an implementation that uses only $N+s(N)$ space to store an array of size~$N$, and at most $N+t(N)$ space during resize operations.

\begin{theorem}
Any $(s(N),t(N))$-implementation of resizable arrays must have $s(N)t(N)\ge N$, even if only grow and access operations are supported.
\end{theorem}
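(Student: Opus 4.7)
The plan is to lift the two-case argument of Theorem~\ref{T-lower1} and make the dependence on $s(N)$ and $t(N)$ explicit, rather than balancing both quantities against $\sqrt{N}$. Throughout, fix a sequence of $N$ grow operations and look at the state of the data structure immediately after the last one.

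First I would bound the number of allocated blocks. Every block, except the main index block, must be reachable via a pointer stored in some index block, so if the data structure currently holds $k$ blocks it must also hold at least $k-1$ pointers. The data blocks together store all $N$ items, so their total size is at least $N$, and the $k-1$ pointers consume $k-1$ additional words. Hence the total space used after the $N$-th grow is at least $N + (k-1)$, which by hypothesis is at most $N + s(N)$, giving $k \le s(N) + 1$. Consequently some single allocated block has size at least
\[
\ell \;\ge\; \frac{N}{k} \;\ge\; \frac{N}{s(N)+1}.
\]

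Next I would rewind to the moment this particular block was allocated. Since each block is a fixed-length array whose size is set at allocation time, this block had size $\ell$ already at that moment. Suppose this allocation took place during a grow operation performed when the array contained $N' \le N$ items (the new item is either being placed into this block or has just been placed into it, so $N' \le N$ either way). Immediately after the allocation, the $N'$ items of the array were stored in the other blocks, which together occupy at least $N'$ words, and the newly allocated block contributes $\ell$ more. Hence the total space at that instant is at least $N' + \ell$, and by the definition of an $(s(N),t(N))$-implementation this is at most $N' + t(N')$. So $\ell \le t(N')$, and since $t$ is non-decreasing and $N' \le N$, $\ell \le t(N)$. Combining with the lower bound on $\ell$ yields
\[
\frac{N}{s(N)+1} \;\le\; t(N),
\]
which, after absorbing the additive $1$ (or simply restating the bound as $s(N)\,t(N) \ge N$ up to constants and the conventions already used in Theorem~\ref{T-lower1}), is the claimed inequality.

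The main obstacle is the bookkeeping around the "one pointer per block" charge: one has to be careful that the $k-1$ pointers are genuinely additional to the $N$ items, i.e., that they are not stored inside the data blocks and double-counted, and that the main index block is properly exempted from the pointer count. The paper already addresses this point in the discussion after Theorem~\ref{T-lower1} (including the remark about word size $w$ versus memory size $2^w$), so I would invoke that discussion rather than redoing it. The only other small subtlety is that we need $N' \le N$ so that $t(N') \le t(N)$; this follows because the allocation happened during the same sequence of $N$ grows (any shrinks are forbidden by the hypothesis that only grow and access operations are performed, so $N'$ is monotonically non-decreasing).
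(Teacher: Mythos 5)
Your proposal is correct and follows essentially the same argument as the paper's proof: count the blocks to find one of size at least $N/s(N)$, rewind to the moment it was allocated, and observe that at that instant the $N'$ items plus the freshly allocated block already force the temporary space up to $N'+N/s(N)$, so $t(N)\ge N/s(N)$ by monotonicity. The only cosmetic difference is that you track the $k$ versus $k-1$ pointer count a bit more explicitly, leading to a $+1$ you then absorb; the paper simply states $k\le s(N)$ directly.
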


\begin{proof}\label{T-lower2}
Consider the state of the data structure after~$N$ grow operations. All~$N$ items must be spread among a certain number of contiguous memory blocks. Since a pointer must be kept for each such block, and since the extra space used is at most $s(N)$, the number of blocks is at most $s(N)$. Hence at least one of the blocks, call it~$B$, is of size at least $N/s(N)$. Let $N'\le N$ be the index of the grow operation that allocated~$B$. Just after~$B$ was allocated it contained no items. Thus the temporary extra space used by the data structure at that time was at least $n/s(N)$. Thus $t(N)\ge t(N')\ge N/s(N)$, as required. (Note the use of the monotonicity of $t(N)$.)
\end{proof}

\begin{corollary}\label{C-lower2}
For any integer $r\ge 1$, any data structure for that uses only $N+O(N^{1/r})$ space for storing a resizable array of size~$N$ must occasionally use $N+\Omega(N^{1-1/r})$ space during a grow operation, even if only grow and access operations are performed.
\end{corollary}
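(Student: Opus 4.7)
The plan is essentially to invoke Theorem~\ref{T-lower2} directly and solve for $t(N)$. That theorem asserts the product bound $s(N)\,t(N)\ge N$ for any $(s(N),t(N))$-implementation, so once we fix the hypothesis $s(N)=O(N^{1/r})$ there is nothing left to do except divide.

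More precisely, I would first restate the hypothesis in the $(s(N),t(N))$-language of Definition~\ref{D-st}: the given data structure is an $(s(N),t(N))$-implementation with $s(N)\le c\,N^{1/r}$ for some constant $c>0$ and all sufficiently large $N$. Then I would apply Theorem~\ref{T-lower2}, which tells us that for every such $N$,
\[
t(N) \;\ge\; \frac{N}{s(N)} \;\ge\; \frac{N}{c\,N^{1/r}} \;=\; \frac{1}{c}\,N^{1-1/r}.
\]
Unwinding Definition~\ref{D-st}, this means that there is some moment, during a grow operation performed on an array of size at most $N$, at which the data structure uses at least $N+\Omega(N^{1-1/r})$ total space. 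Since this holds for every sufficiently large~$N$ in an infinite family, the word ``occasionally'' in the statement is justified.

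The proof is essentially a one-line algebraic substitution, so there is no real obstacle; the only point to be a little careful about is the direction of the quantifiers — Theorem~\ref{T-lower2} locates a bad moment for each target size $N$, not a single moment that is bad for all $N$ simultaneously — but this matches exactly the ``occasionally'' wording in the corollary. No new construction or adversary argument is needed beyond what was already used in the proof of Theorem~\ref{T-lower2}.
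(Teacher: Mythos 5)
Your proof is correct and matches the paper's intended derivation exactly: the corollary is an immediate consequence of the theorem $s(N)\,t(N)\ge N$, obtained by substituting $s(N)=O(N^{1/r})$ and dividing. The paper leaves this step implicit, and your careful note about the quantifier structure (a bad moment for each $N$, not a uniform one) correctly accounts for the word ``occasionally.''
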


In the next sections we show that this lower bound is tight.

\section{Simple \texorpdfstring{$(O(N^{1/3}),O(N^{2/3}))$}{(O(N*{1/3}),O(N*{2/3}))}-implementations}\label{S-simple}

In this section we describe simple $(O(N^{1/3}),O(N^{2/3}))$-implementations with $O(1)$ worst-case access time and $O(1)$ amortized time for grow or shrink operations.

\begin{figure}[t]
\centering
\includegraphics[scale=0.45]{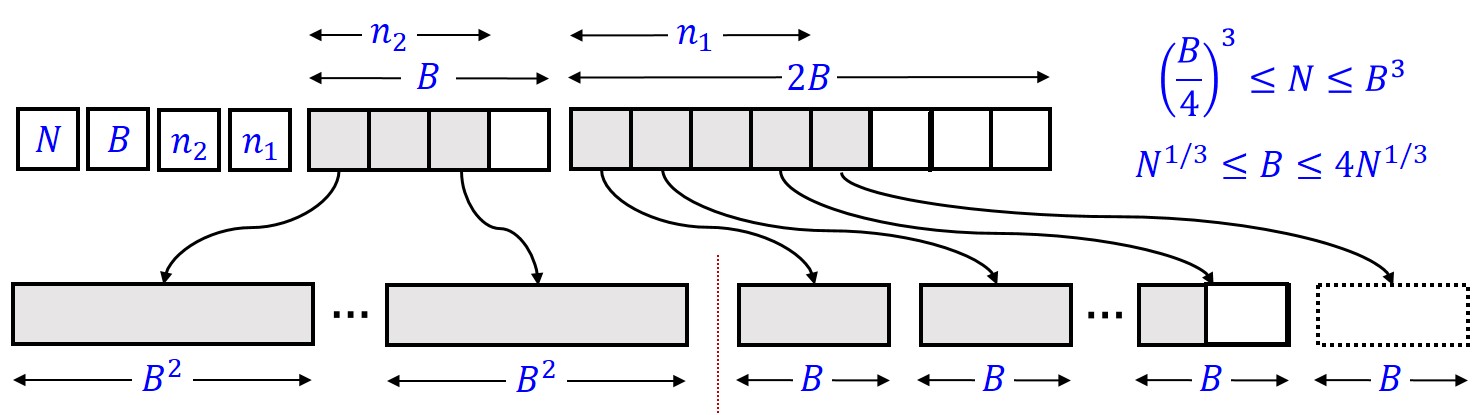} 
\caption{A simple $(O(N^{1/3}),O(N^{2/3}))$-implementation.}\label{F-r3}
\end{figure}

The basic idea is simple. Start by considering the HAT data structure but with data blocks of size roughly~$N^{2/3}$. The length of the index block is roughly $N^{1/3}$. Unfortunately, since the last data block may be almost empty, the total space used by the data structure is $N+O(N^{2/3})$. To fix this, we use a HAT data structure to handle the last partially filled data block of size $N^{2/3}$. The extra space used to store the array is now only $O((N^{2/3})^{1/2})=O(N^{1/3})$. When the size of the last data block reaches $N^{2/3}$, it is copied into a new data block of size $N^{2/3}$. During this copy operation $\Omega(N^{2/3})$ extra space is used. 

It is also easy to obtain a similar data structure based on the data structure of Brodnik et al.~\cite{BCDMS99}. The data structure uses blocks of size $1,4,\ldots,i^2,\ldots,k^2$ and then $1,2,3,\ldots$.

A complete, non-recursive, description of the first data structure is given in Figure~\ref{F-r3}. The data structure maintains a parameter $B$ such that $N^{1/3}\le B\le 4N^{1/3}$. It uses about $N/B^2$ data blocks of size $B^2$, which we call \emph{large} blocks, and at most $2B$ data blocks of size $B$, which we call \emph{small} blocks. There are now two index blocks, one for the large data blocks and one for the small data blocks. Large blocks are always full. At most one small block is partially filled, and at most one small block is empty. When $2B$ small blocks are full, $B$ of them are copied into a new allocated large block. When there are no non-empty small blocks, a large block is split into~$B$ newly allocated small blocks. It is easy to see that all operations require only $O(1)$ time, amortized for grow and shrink, and worst-case for access and modify. A formal proof is given in the next section.

\section{An \texorpdfstring{$(O(rN^{1/r}),O(N^{1-1/r}))$}{}-implementation, for every \texorpdfstring{$r\ge 2$}{}}\label{S-general-r}

Generalizing the construction of Section~\ref{S-simple}, we obtain an $(O(rN^{1/r}),O(N^{1-1/r}))$-implementation, for every $r\ge 2$. It is possible to get such implementations in a recursive manner, as was done in the beginning of Section~\ref{S-simple}. In this section we obtain non-recursive versions of these implementations.
The data structure for a given $r\ge 2$ maintains a parameter~$B$ such that $N^{1/r}\le B< 4N^{1/r}$. It keeps the items in blocks whose sizes are powers of~$B$, i.e., $B,B^2,\ldots,B^{r-1}$. (See Figure~\ref{F-general}.) 

If only grow, and not shrink, operations are to be supported, we can maintain the invariant that there are always at most~$B$ blocks of each size. (To allow efficient shrink operations we shall soon relax this condition and allow at most~$2B$ blocks of each size, as shown in Figure~\ref{F-general}.) Blocks of sizes $B^2,\ldots,B^{r-1}$ are always full. All blocks of length~$B$, except possibly the last one, are also full. If the last block of size~$B$ is not completely full, then a grow operations is easy. If it is full, and there are less than~$B$ blocks of size~$B$, then a new block of size~$B$ is allocated and the new item is placed in it. If there are $B$ full blocks of some size $B^i$, a new block of size $B^{i+1}$ is allocated, the items in these $B$ blocks are copied, in the appropriate order, to the new block, and the $B$ blocks of size $B^i$ are deallocated. A new block of size~$B$ is now allocated and the new item is placed in it. Note that this mimics the operation of a base-$B$ counter.

\begin{figure}[t]
\centering
\includegraphics[scale=0.45]{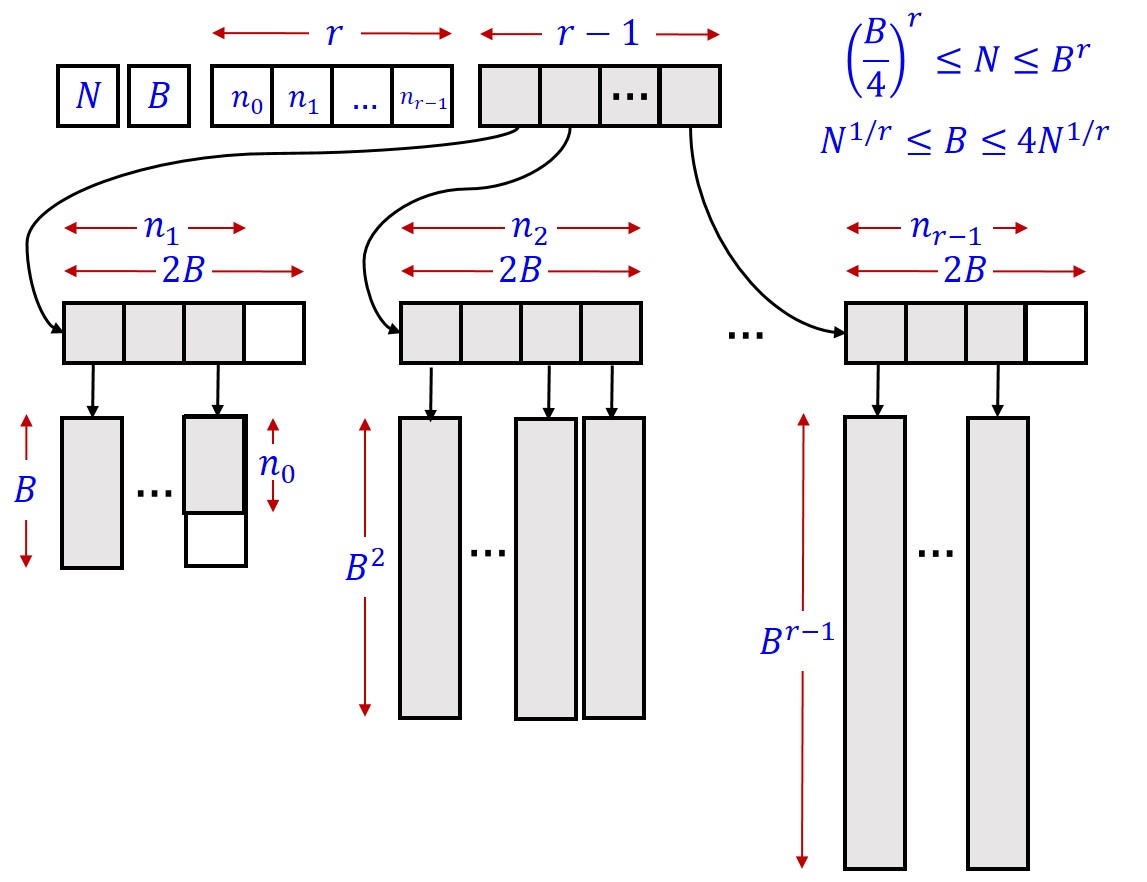} 
\caption{An $(O(rN^{1/r}),O(N^{1-1/r}))$-implementation, for every $r\ge 2$.}\label{F-general}
\end{figure}

When $N=B^r$, the value of~$B$ is doubled and the whole data structure is rebuilt. (We can delay the rebuilding until the data structure is completely full, i.e., $N=B^r+B^{r-1}+\ldots+1=\frac{B^{r+1}-1}{B-1}$, but nothing much is gained by it. For simplicity, we rebuild when $N=B^r$.)

The data structure maintains $r-1$ index blocks, one for each possible block size. The total space used to store an array containing~$N$ items is at most $N+(2r-1)+2(r-1)B+B=N+O(rN^{1/r})$. The maximum amount of extra space used by the implementation is $O(N^{1-1/r})$, when a new block of size $B^{r-1}$ is allocated.

It is easy to see that each item can be accessed in $O(r)$ worst-case time. With some more care, this can be reduced to $O(1)$ worst-case time, as we show in Section~\ref{sub-access}.

The amortized cost of a grow operation is $O(r)$ since each item is first placed in a block of size~$B$ and eventually moved to blocks of size~$B^i$, for $i=2,\ldots,r-1$. A complete rebuilding adds only a constant to the amortized cost of a grow operation since the total number of items in the data structure increases by some constant factor between two rebuild operations.

Allowing shrink operations while maintaining an $O(r)$ amortized cost per grow and shrink operation is not difficult. A standard solution is to use a \emph{redundant} base-$B$ counter. We allow up to $2B$ blocks of each size, not just~$B$. When there are $2B$ full blocks of a given size $B^i$, for $1\le i<r-1$, a new block of size $B^{i+1}$ is allocated, The first~$B$ blocks of size~$B^i$ are copied into it and then deallocated. If a shrink operation occurs when there are no blocks of size~$B$, a block of size~$B^i$, for the smallest possible~$i$, is split into $B-1$ blocks of each size $B^{i-1},\ldots,B^2$ and $B$ blocks of size~$B$. We show in Section~\ref{sub-amortize} that the amortized cost of both grow and shrink operations is again~$O(r)$. A similar redundant base-$B$ counter in used in a different setting by Kaplan et al.~\cite{KTZZ22}.

\subsection{Implementation details.}\label{sub-details}

\begin{figure}[t] 
    \centering
    \parbox{2.9in}{
    \begin{algorithm}[H]
    \Fn{$\Grow(a)$}
    {
        \BlankLine
        \uIf{$N=B^r$}
        { $\Rebuild(2B)$ \;}
        \uElseIf{$n_1=2B$ {\bf and} $n_0=B$}
        {\Combine() \;}
        \ElseIf{$n_1=0$ {\bf or} $n_0=B$}
        {
            $\AAA{1}{n_1}\gets \Allocate(B)$ \;
            $n_1\gets n_1+1$ \; 
            $n_0\gets 0$ \;
        }
        \BlankLine
        $\AAA{1}{n_1-1}[n_0]\gets a$ \;
        $n_0\gets n_0+1$ \;
        $N\gets N+1$ \;
    }
    \end{algorithm}
    } \hspace*{-1cm}
    \parbox{3,8in}{
    \begin{algorithm}[H]
    \Fn{$\Combine()$}
    {
        $k\gets \min\{\, i\in [r-1] \mid n_i<2B\,\}$ \;
        \lIf{$k=\infty$}{\bf error}
        \For{$i\gets k-1$ {\bf downto} $1$}
        {
            $\AAA{i+1}{n_{i+1}}\gets \Allocate(B^{i+1})$ \;
            \For{$j\gets 0$ {\bf to} $B-1$}
            {
                $\Copy(\AAA{i}{j},0,\AAA{i+1}{n_{i+1}},jB^i,B^i)$ \;
                $\Deallocate(\AAA{i}{j})$ \;
                $\AAA{i}{j}\gets \AAA{i}{j+B}$ \tcp*{\rm Shift indices.}
            }
            $n_i\gets B$ \;
            $n_{i+1}\gets n_{i+1}+1$ \;

        }
    }
    \end{algorithm}
    }
    
    \vspace*{10pt}
    \parbox{2.9in}{
    \begin{algorithm}[H]
    \Fn{$\Shrink()$}
    {
        \BlankLine
        \uIf{$N=(B/4)^r$}
        {$\Rebuild(B/2)$ \;}
        \ElseIf{$n_1=0$}
        {\Split() \;}
        \BlankLine
        $n_0\gets n_0-1$ \;
        $N\gets N-1$ \;
        \BlankLine
        \If{$n_0=0$}
        {
            $\Deallocate(\AAA{1}{n_1-1})$ \;
            $n_0\gets B$ \;
            $n_1\gets n_1-1$ \;
        }
    }
    \end{algorithm}
    } \hspace*{-1cm}
    \parbox{3.8in}{
    \begin{algorithm}[H]
    \Fn{$\Split()$}
    {
        $k\gets \min\{\, i\in [r-1] \mid n_i>0\,\}$ \;
        \lIf{$k=\infty$}{\bf error}
        \For{$i\gets k-1$ {\bf downto} $1$}
        {
            $n_{i+1}\gets n_{i+1}-1$ \;
            \For{$j\gets 0$ {\bf to} $B-1$}
            {
                $\AAA{i}{j} \gets \Allocate(B^{i})$ \;
                $\Copy(\AAA{i+1}{n_{i+1}},jB^i,\AAA{i}{j},0,B^i)$ \;
            }
            $\Deallocate(\AAA{i+1}{n_{i+1}})$ \;
        }
    }
    \end{algorithm}
    }
    
    \caption{Pseudocode of grow and shrink operations of the $(O(rN^{1/r}),O(N^{1-1/r}))$-implementation.}
    \label{P-general-r}
\end{figure}

Pseudocode for grow and shrink operations 
is given in Figure~\ref{P-general-r}. As in Figure~\ref{F-general}, we assume that $n_i$, for $i\in[r-1]$, is the number of data blocks of size~$B^i$. We denote these blocks by $\AAA{i}{0},\ldots,\AAA{i}{n_i-1}$. If $n_1>0$, we let~$n_0$ be the number of items in $\AAA{1}{n_1-1}$, the last data block of size~$B$.
With this notation, $A[i]$, for $1\le i\le r-1$, is the $i$-th index block, with size is~$2B$, and~$A$ is the block of pointers to the index blocks, with size~$r$. Indices in each block start from 0. (For convenience, we leave the cell~$A[0]$ empty.) The numbers $n_0,n_1,\ldots,n_{r-1}$ are kept in a block~$n$ of size~$r$. (For simplicity we write $n_i$ instead of~$n[i]$.)
The blocks $A$ and $A[i]$, for $1\le r\le r-1$, are allocated and deallocated only during rebuild operations.

The pseudocode assumes the existence of a procedure $\Allocate(\ell)$ that allocates a new block of size~$\ell$ and returns a pointer to the newly allocated block, and a procedure $\Deallocate(X)$ that deallocates a previously allocated block~$X$. It also assumes that $\Copy(X,x,Y,y,\ell)$ copies the items of $X[x..x+\ell-1]$ to $Y[y..y+\ell-1]$, where $X[x..x+\ell-1]$ is shorthand for $X[x],X[x+1],\ldots,X[x+\ell-1]$. (In other words, it does the assignments $Y[y+i]\gets X[x+i]$, for $i=0,1,\ldots,\ell-1$.) A procedure $\Rebuild(B')$, not shown, rebuilds the data structure with the new parameter $B'$, usually either~$2B$ or~$B/2$.

$\Grow(a)$ works as follows. If $N=B^r$, the data structure is rebuilt with $B\gets 2B$. If the first level is completely full, i.e., $n_1=2B$ and $n_0=B$, $\Combine()$ is called. This procedure finds the smallest index $k\in [r-1]$ for which $n_k<2B$. (Such an index must exist.) For $i\gets k-1,k-2,\ldots,1$, it combines $B$ blocks of size $B^i$ into a new block of size $B^{i+1}$, allocating and deallocating blocks as needed. More specifically, to maintain order, the first $B$ blocks of size $B^i$, i.e., $A[i][0],\ldots,A[i][B-1]$, are combined to form a new last block $A[i+1][n_{i+1}]$ of size~$B^{i+1}$. The blocks $A[i][B],\ldots,A[i][2B-1]$ become blocks $A[i][0],\ldots,A[i][B-1]$. (This only involves pointer changes. No items are moved. It is possible to avoid these pointer changes by considering $A[i]$ to be a cyclic array.) Finally, the number of blocks of size~$B^i$ is set to~$B$ and the number of blocks of size $B^{i+1}$ is increased by one, i.e., $n_i\gets B$ and $n_{i+1}\gets n_{i+1}+1$.

If \Rebuild\ or \Combine\ are not called, $\Grow(a)$ checks whether there is a vacant position in a block of size~$B$. If not, i.e., if $n_1=0$ (no blocks of size~$B$), or $n_1>0$ but $n_0=B$ (all blocks of size~$B$ are full), a new empty block of size~$B$, $A[1][n_1]$, is allocated, $n_1$ is incremented and $n_0$ is set to~$0$. Finally, in all cases, the new item~$a$ is placed in the first vacant position of the empty or partially filled~$B$ block, namely, $A[1][n_1-1][n_0]\gets a$ and~$n_0$ and~$N$ are incremented.

$\Shrink$ is similar. If $N=(B/4)^r$, the data structure is rebuilt with $B\gets B/2$. Otherwise, if $n_1=0$, i.e., there are no blocks of size~$B$, \Split\ finds the smallest $k\in[r-1]$ such that $n_i>0$. The last block of size $B^k$ is split into $B-1$ blocks of each size $B^{k-1},\ldots,B^2$ and into~$B$ blocks of size~$B$. Finally, in all cases, the last item in the last block of size~$B$ is discarded and $n_0$ and~$N$ are decremented. If~$n_0=0$, the empty block of size~$B$ is deallocated, $n_1$ is decremented and $n_1$ is set to~$B$. (It is actually better to delay the deallocation of the last block of size~$B$, as done for $r=2$ in Section~\ref{sub-HAT} and for $r=3$ in Section~\ref{S-simple}. It is not difficult to change the pseudocode accordingly.)

\begin{theorem}\label{T-general-r}
The implementation is an $(O(rN^{1/r}),O(N^{1-1/r}))$-implementation of resizable arrays. The amortized cost of grow and shrink operations is $O(r)$. Access operations can be supported in $O(1)$ worst-case time.
\end{theorem}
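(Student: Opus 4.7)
The plan is to verify the three parts of the theorem in turn: the space bounds, the $O(r)$ amortized cost of grow and shrink, and the worst-case $O(1)$ access time. I would first establish the invariant $B = \Theta(N^{1/r})$: since Rebuild doubles $B$ when $N = B^r$ and halves $B$ when $N = (B/4)^r$, between consecutive rebuilds $N$ lies in $[(B/4)^r, B^r]$, giving $N^{1/r} \le B \le 4N^{1/r}$. The storage is then the sum of the data-block sizes, the index-block sizes, and the two $O(r)$-sized auxiliary blocks $A$ and $n$. All data blocks except possibly the last $B$-block are full, so they hold at most $N + (B-1)$ words; the $r-1$ index blocks each have size $2B$, contributing $O(rB)$. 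This gives storage $N + O(rB) = N + O(rN^{1/r})$. For the resize bound, Combine and Split each work one level at a time, allocating at most one new block of size $\le B^{r-1}$ before deallocating the older blocks; Rebuild, implemented incrementally so that only one new block of the target structure is allocated at a time, also uses at most $O(B^{r-1})$ extra space. Hence the temporary space is $N + O(N^{1-1/r})$.

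For the amortized cost, the $O(1)$ work that a grow or shrink does outside of Combine, Split, or Rebuild is immediate. A Rebuild costs $O(N)$ but amortizes to $O(1)$ per operation, because immediately after a rebuild one has $N = B^r / 2^r$ while the next rebuild is triggered at $N = B^r$ or $N = B^r / 4^r$, so at least $\Omega(N)$ grow/shrink operations separate consecutive rebuilds. For Combine and Split I would argue by a charging scheme indexed by level. The crucial property supplied by the redundancy $n_i \in [0, 2B]$ is that a Combine cascading through level $i$ leaves $n_i \in \{B, B+1\}$, and a Split cascading through level $i$ leaves $n_i \in \{B-1, B\}$; in particular, after any Combine or Split that touches level $i$, at least $B-1$ further events that move $n_i$ toward $2B$ (resp.\ toward $0$) are required before the next Combine (resp.\ Split) at level $i$ can fire. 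Since such an event at level $i$ is either a grow/shrink (for $i=1$) or a Combine/Split at level $i-1$, this condition cascades inductively: a $k$-deep cascade, of either type, requires $\Omega(B^k)$ grow/shrink operations to set up, while its actual cost is only $O(B^k)$. So each possible cascade depth contributes $O(1)$ per operation to the amortized cost, and summing over the $r-1$ depths gives amortized $O(r)$ per grow and shrink. Equivalently, one can deposit $r$ credits per grow and per shrink and charge one credit per unit of move work; the above argument then certifies that the total number of moves over $M$ operations is $O(rM)$.

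The main obstacle is the joint treatment of interleaved Combines and Splits: after a Split at level $k$ an item can be sent back down the hierarchy and then later moved back up again by a Combine, so the naive bound ``each item is moved at most $O(r)$ times'' is not by itself valid. The redundancy argument above is precisely what replaces this naive bound; the care needed is to check uniformly that the ``$\Omega(B)$ events between consecutive level-$i$ Combines/Splits'' property holds regardless of whether the intervening events are grows, shrinks, Combines, or Splits, which reduces to the observation that every Combine or Split affecting level $i$ leaves $n_i$ in $\{B-1, B, B+1\}$. Finally, the $O(1)$ worst-case access time is deferred to Section~\ref{sub-access}: the level and offset of index $i$ can be computed directly from $i$ using precomputed cumulative sums (or, when $B$ is a power of $2$, using shift and mask operations), avoiding the $O(r)$ scan through the levels.
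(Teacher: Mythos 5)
Your proposal is correct and has the same skeleton as the paper's proof of Theorem~\ref{T-general-r} (space bounds from the preceding discussion, amortized cost via a separate lemma, $O(1)$ access deferred to Section~\ref{sub-access}), but your amortization argument takes a genuinely different route from the paper's Lemma~\ref{L-amort}.

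The paper's charging is asymmetric: an item sitting at level $i$ carries $r-i$ credits, so each item assignment in \Combine\ (which only ever moves items \emph{up} one level) is paid by that item's own credit, and \Grow\ amortizes to $O(r)$ with no global counting argument at all. For \Shrink/\Split\ this item-credit scheme breaks (items move \emph{down}, gaining credit), so the paper switches to level credits: each \Shrink\ deposits $3$ credits on each level, and a count of the items in levels $1,\dots,k-1$ shows at least $B^k$ \Shrink s separate consecutive touches of level~$k$, so level~$k$ holds $3B^k$ credits when a \Split\ fires, enough to cover both the $B^k$ copy cost and the $\le 2B^k$ item re-crediting. Your argument, by contrast, treats \Combine\ and \Split\ symmetrically via a single event-counting claim—after any cascade through level $i$ the redundancy forces $n_i\in\{B-1,B,B+1\}$, so $\Omega(B^k)$ grows/shrinks must intervene before the next depth-$k$ cascade—and sums the resulting $O(1)$-per-operation charge over $r$ depths. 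That is a valid alternative, and you are right that the naive ``each item moves $O(r)$ times'' bound fails under interleaving, which is exactly why the paper's \Grow\ and \Shrink\ arguments look so different; your version dispenses with the item-credit machinery entirely. One caution about your cascade induction as phrased: the clause ``such an event at level $i$ is ... a grow/shrink (for $i=1$)'' is off by a factor of $B$, since a unit change in $n_1$ costs $\Theta(B)$ grow/shrink operations (it takes $B$ grows to fill a level-$1$ block). The clean way to certify the $\Omega(B^k)$ bound is what the paper does for \Split: track the total number of items in levels $1$ through $k-1$, which changes by $\pm 1$ per grow/shrink and is unchanged by any cascade with top level $<k$, and observe that it sits near $B^k$ right after a level-$\ge k$ touch and must reach $\approx 2B^k$ (resp.\ $0$) before the next \Combine\ (resp.\ \Split) at level~$k$.
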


\begin{proof}
The correctness of the implementation and the $O(rN^{1/r})$ and $O(N^{1-1/r}))$ space bounds follow from the discussion above. The $O(r)$ bound on the amortized cost of grow and shrink operations is proved in Lemma~\ref{L-amort} below. A way of implementing access operations in $O(1)$ worst-case time is described in the proof of Lemma~\ref{L-access} below.
\end{proof}

By letting $r=\log N$, we get the following interesting corollary:

\begin{corollary}
There is a resizable array implementation that uses $N+O(\log N)$ space to store and array of size~$N$, supporting grow and shrink operations in $O(\log N)$ amortized time and access operations in $O(1)$ worst-case time.
\end{corollary}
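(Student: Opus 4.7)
The plan is to instantiate Theorem~\ref{T-general-r} with $r = r(N) = \lceil \log_2 N \rceil$ and verify that the bounds collapse to the claimed quantities. The key arithmetic identity is $N^{1/r} = 2^{(\log_2 N)/r} \le 2$, which reduces the storage bound $N + O(r N^{1/r})$ to $N + O(\log N)$. The same substitution turns the amortized grow/shrink cost into $O(r) = O(\log N)$, while the $O(1)$ worst-case access bound carries over unchanged from Theorem~\ref{T-general-r}.

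The subtlety, and the reason this is not literally a one-line substitution, is that $r$ depends on $N$, which changes over time, whereas the construction of Section~\ref{S-general-r} treats $r$ as a fixed parameter. To handle this, I would augment the rebuild mechanism: in addition to rebuilding when $B$ must be doubled (at $N = B^r$) or halved (at $N = (B/4)^r$), the data structure should rebuild from scratch with a new value of $r$ whenever $N$ crosses a power-of-two threshold that changes $\lceil \log_2 N \rceil$. A rebuild costs $O(N)$ time, but consecutive rebuilds of either kind are separated by $\Theta(N)$ grow or shrink operations, so the amortized cost per operation is $O(1)$, which is absorbed into the $O(\log N)$ bound.

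For very small $N$ one must be slightly careful, since $\lceil \log_2 N \rceil$ is tiny (or meaningless at $N \le 1$); the natural fix is to take $r = \max(2, \lceil \log_2 N \rceil)$, so the $r=2$ HAT-like regime of Sitarski and Brodnik et al.\ handles the base case, and transitions to larger $r$ are triggered by the same power-of-two rebuild rule. The storage bound $N + O(rN^{1/r}) = N + O(\log N)$ remains valid throughout, as does the amortized time bound.

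The main obstacle is really just confirming that changing $r$ interacts cleanly with the invariants on the block counts $n_i$ and the index blocks $A[i]$. I expect this to go through by simply re-running the allocation scheme of Section~\ref{S-general-r} on the current $N$ items once the new $r$ is fixed, which amounts to laying out $N$ in its base-$B$ representation with at most $2B$ blocks of each size, a routine computation. No new lower bound or adversarial analysis is needed, so the proof reduces to the arithmetic fact $N^{1/\log N} = 2$ together with a standard amortization argument for the occasional global rebuild.
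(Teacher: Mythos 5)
Your approach is the same as the paper's: the paper's ``proof'' of this corollary is literally just ``let $r=\log N$'' in the result of Section~\ref{S-general-r}, and the paper acknowledges (at the end of Section~\ref{sub-amortize}) that the analysis there ``implicitly assumes that $r$ is a constant'' and asserts that adapting to non-constant $r=r(N)$ is ``easy'' without spelling it out. You are right to flag the varying-$r$ issue as the only real obstacle, and your instinct to resolve it with an extra rebuild trigger and to clamp $r$ at $2$ from below is sound.

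However, the specific rebuild rule you propose has a genuine gap: you say to ``rebuild from scratch with a new value of $r$ whenever $N$ crosses a power-of-two threshold that changes $\lceil\log_2 N\rceil$,'' and then assert that ``consecutive rebuilds of either kind are separated by $\Theta(N)$ grow or shrink operations.'' That second claim is false for your trigger. If the user alternates a grow and a shrink around the boundary $N=2^m$, then $\lceil\log_2 N\rceil$ flips between $m$ and $m+1$ on every operation, so your rule fires a full $O(N)$-cost rebuild every step and the amortized cost blows up to $\Theta(N)$, not $O(1)$. The standard fix is hysteresis: decouple the grow and shrink thresholds (e.g., increment $r$ only when $N$ first exceeds $2^{r}$ during a grow, and decrement $r$ only when $N$ drops below $2^{r-2}$ during a shrink), so that after any rebuild the next rebuild of either direction is $\Omega(N)$ operations away. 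Equivalently, one can simply reuse the paper's existing $B$-doubling/halving rebuild triggers (which already have hysteresis built in, at $N=B^r$ and $N=(B/4)^r$) and fold the update of $r$ into those same rebuilds, never introducing a separate power-of-two trigger at all. Either patch makes your argument go through; without one, the amortization step does not hold as stated.

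One further small point: Lemma~\ref{L-amort} assumes $B\ge4$ at one step of the credit argument, so after each rebuild you should pick $B$ to be $4$ (not $2$), which the invariant $N^{1/r}\le B<4N^{1/r}$ permits when $r=\lceil\log_2 N\rceil$, and which still gives extra storage $O(rB)=O(\log N)$.
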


\subsection{Amortized cost of grow and shrink operations}\label{sub-amortize}

\begin{lemma}\label{L-amort}
The amortized cost of grow and shrink operations in the $(O(rN^{1/r}),O(N^{1-1/r}))$-implementation is~$O(r)$.
\end{lemma}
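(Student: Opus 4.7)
The plan is to give a direct counting argument: I will bound the total work done by all Combine and Split operations, over any sequence of $M$ grow and shrink operations, by $O(rM)$. Together with the $O(1)$ actual cost of each grow and shrink itself, this yields the claimed amortized $O(r)$ bound. The cost of Rebuild is handled separately by a standard doubling argument: between two consecutive rebuilds the value of $B$ changes by a factor of $2$, so at least $\Omega(N)$ grow/shrink operations must occur between them, paying for the $O(N)$ rebuild cost.

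For each level $k\in\{2,\ldots,r-1\}$ define $M_k$ to be the total number of items currently stored in data blocks of sizes $B,B^2,\ldots,B^{k-1}$, and call these the \emph{low levels} (relative to $k$). The key structural claim, which I will verify by inspecting the pseudocode of Figure~\ref{P-general-r}, is that: (i)~a single grow (resp.\ shrink) changes $M_k$ by $+1$ (resp.\ $-1$); (ii)~any Combine or Split whose cascade stays strictly below level $k$ merely rearranges items among the low levels and so preserves $M_k$; (iii)~any Combine that cascades to some level $k'\ge k$ performs the iteration with loop variable $i=k-1$, which merges $B$ full blocks of size $B^{k-1}$ into one block of size $B^k$, shifting $B^k$ items out of the low levels and decreasing $M_k$ by exactly $B^k$; and (iv)~symmetrically, any Split cascading to level $\ge k$ increases $M_k$ by $B^k$. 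A companion invariant---each low level holds at most $2B$ blocks at all times---further ensures $M_k\le 2B(B+B^2+\cdots+B^{k-1})=O(B^k)$ throughout.

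From these facts the rest of the argument is bookkeeping. A short calculation from the pseudocode shows that immediately after any event cascading to level $\ge k$ one has $M_k=\Theta(B^k)$, whereas immediately before the next such event $M_k$ is either $0$ (pre-Split) or $\approx 2B^k$ (pre-Combine); in either case $M_k$ must traverse $\Omega(B^k)$ in between, and since only grows and shrinks can move $M_k$ in that interval (each by $\pm 1$), at least $\Omega(B^k)$ of them must separate consecutive events at level $\ge k$. Hence a sequence of $M$ grow and shrink operations contains $O(M/B^k)$ events cascading to level $\ge k$, and in particular at most $O(M/B^k)$ events cascading to exactly level $k$. Each such event has cost $O(B^k)$, so the total Combine/Split work is $\sum_{k=2}^{r-1}O(B^k)\cdot O(M/B^k)=O(rM)$, and the lemma follows.

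The main technical hurdle will be step~(iii)/(iv): one has to read off from the pseudocode the values of $n_0,n_1,\ldots,n_{k-1}$ immediately before and after a Combine or Split---both when the cascade stops at level $k$ and when it passes through level $k$ to reach some higher level---including the final $n_0$-adjustment performed by the enclosing Grow or Shrink, in order to confirm that the post-event value of $M_k$ is indeed $\Theta(B^k)$ rather than $0$ or $2B^k$. Once those state transitions are pinned down, assembling the $O(rM)$ bound is a few lines of counting.
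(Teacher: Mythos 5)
Your proof is correct, and it takes a genuinely different route from the paper's. The paper runs a credit argument with two kinds of credits: item credits ($r-i$ units per item at level $i$, paying for future \Combine\ copies) and level credits ($3$ units per level per \Shrink, paying for future \Split\ copies plus the credit top-ups that demoted items require). You instead set up, for each $k$, a per-level potential $M_k$ (number of items in levels $<k$) and make a direct counting argument: each grow/shrink moves $M_k$ by $\pm1$, each cascade to level $\ge k$ moves it by $\mp B^k$, and since the post-cascade value is $\Theta(B^k)$ while the pre-cascade value is $0$ or $\approx 2B^k$, the cascades to level $\ge k$ are $\Omega(B^k)$ operations apart. Multiplying the $O(M/B^k)$ event count by the $O(B^k)$ per-event cost and summing over $k=2,\ldots,r-1$ gives $O(rM)$.

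What each approach buys: your formulation treats \Combine\ and \Split\ symmetrically, and avoids the asymmetric piece of the paper's argument in which a \Split\ costs item credit (demotions raise items' credit levels) and must therefore be paid from a separate level account. The paper's key claim for \Split\ --- that $B^k$ \Shrink\ operations separate consecutive splits at level $k$, because levels $1,\ldots,k-1$ hold $B^k$ items right after a level-$k$ event and $0$ items right before a split --- is exactly a special case of your $M_k$-traversal observation; you promote that local observation to carry the whole proof for both kinds of cascades at once. The price is that you have to read the post-event values of $n_0,\ldots,n_{k-1}$ out of the pseudocode with some care, which you correctly flag as the remaining technical step: the calculation does check out (after a cascade through level $k$ the counts satisfy $n_1=B$, $n_i\approx B\pm1$ for $2\le i<k$, giving $M_k\approx B^k$ between the $0$ and $\approx 2B^k$ extremes). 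Your treatment of \Rebuild\ by the usual doubling argument matches the paper's.
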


\begin{proof}
We define the cost of a \Grow\ or a \Shrink\ operation to be the number of item assignments it performs, and obtain an $O(r)$ amortized bound on these costs. Since the total number of operations performed by a \Grow\ or a \Shrink\ operation is proportional to this defined cost plus one, this is enough to prove the lemma. (The plus one is required to deal with \Shrink\ operations that do no item assignments.)

The cost of a \Grow\ operation is $1$, plus the cost of \Combine, if called. Similarly, the cost of a \Shrink\ operation is~$0$, plus the cost of \Split, if called. (Most \Shrink\ operations only decrease~$N$ and~$n_0$ and do no item assignments.) If a \Combine\ operation creates a block of size~$B^k$, then its cost is $B(B+B^2+\ldots+B^{k-1})\le 2B^k$ (assuming $B\ge 2$). If a \Split\ operation splits a block of size~$B^k$, then its cost is $B^k$.

An item currently in level~$i$, i.e., in a data block of size $B^i$, is assigned $r-i$ \emph{credit} units. Each credit unit can pay for one item assignment. The cost of a \Grow\ operation, excluding the cost of \Combine, if called, is $1$. The credit assigned to the new item is~$r-1$, since it is always placed at level~$1$, so the amortized cost of a \Grow\ operation, excluding the cost of \Combine, is~$r$.

The amortized cost of a \Combine\ operation is~$0$, since each item assignment moves an item from level~$i$ to level~$i+1$, which reduces the credit of the item by~$1$, paying for the assignment. Hence, the amortized cost of \Grow\ is at most~$r$.

To handle \Shrink\ operations we also assign credits to levels. Each \Shrink\ operation adds $3$ units of credit to each level. Hence the amortized cost of \Shrink, excluding the cost of \Split, is at most $3r$.

We next show that the amortized cost of a \Split\ operation is at most~$0$. We first note that at least~$B^k$ \Shrink\ operations must have been performed between two consecutive splits of a block of size~$B^k$. Indeed, following such a split, levels $1$ to~$k-1$ contain exactly $B^k$ items. Furthermore, after a \Combine\ operation that creates a new block of size~$B^k$, levels $1$ to~$k-1$ also contain at least $B^k$ items. Thus, there must be at least $B^k$ \Shrink\ operations between a \Split\ operation at level~$k$ and the previous \Combine\ or \Split\ operation at level~$k$. Similarly, there must be least $B^k$ \Shrink\ operations before the first \Split\ operation at level~$k$. Therefore, when a \Split\ occurs at level~$k$, the level has accumulated at least $3B^k$ unused units of credit.

The actual cost of a \Split\ operation at level~$k$ is $B^k$. The total amount of extra credit that needs to be added to items that are moved from level~$k$ to smaller levels is
\[\textstyle B(k-1)+(B-1)\sum_{i=1}^{k-1} (k-i)B^i \LE B\sum_{i=1}^{k-1}(k-i)B^i \LE B^{k+1}\sum_{j\ge 1}jB^{-j} \EQ \frac{B^{k+2}}{(B-1)^2} \LE 2B^k \;, \]
where the last inequality assumes that $B\ge 4$. Thus, the $3B^k$ units of credit of level~$k$ are sufficient to cover the cost of the \Split\ operation and its amortized cost is therefore at most~$0$.

Finally, we need to account for the cost of rebuilding. Each \Grow\ or \Shrink\ operation now adds $2r$ units of credit to the whole data structure. (This, of course, increases the amortized cost of \Grow\ and \Shrink\ by~$2r$.) The number of \Grow\ operations between a \Rebuild\ operation that doubles~$B$ and the previous \Rebuild\ operation is at least $B^r-(B/2)^r=N(1-2^{-r})N\ge \frac{N}{2}$. The number of \Shrink\ operation between a \Rebuild\ operation that halves~$B$ and the previous rebuild operation is at least $(B/2)^r-(B/4)^r=(2^r-1)N\ge N$. Thus, when a \Rebuild\ is about to take place, the data structure has accumulated at least $rN$ unused credit units. The actual cost of a \Rebuild\ is exactly~$N$. The total amount of credit that needs to be assigned to items is at most $(r-1)N$. Thus, the amortized cost of \Rebuild\ is~$0$.
\end{proof}

The amortized analysis of \Rebuild\ given above is very loose. When a \Rebuild\ doubles~$B$, the total credit of all items is actually decreased. Thus, each \Grow\ operation needs to add only $1/(1-2^{-r})\le 2$ units of credit to the data structure. Similarly, each \Shrink\ needs to add only $r/(2^r-1)\le 1$ units of credit to the data structure. This, of course, does not change the $O(r)$ amortized cost of \Grow\ and \Shrink\ operations.

The analysis above implicitly assumes that~$r$ is a constant, i.e., that it does not vary with~$N$. It is easy to adapt the analysis to the case of non-constant $r=r(N)$. 

\subsection{Accessing and modifying items in \texorpdfstring{$O(1)$}{} worst-case time}\label{sub-access}

\newcommand{\msb}{\mbox{\it msb}}
\newcommand{\lsb}{\mbox{\it lsb}}
\newcommand{\nn}{\bar{n}}
\newcommand{\NNN}{\bar{N}}
\newcommand{\floor}[1]{\lfloor #1 \rfloor}

In this section we show that the $(O(rN^{1/r}),O(N^{1-1/r}))$-implementation of Section~\ref{sub-details} can support access and modify operations in $O(1)$ worst-case time. We assume that the index of the most significant set bit in a machine word~$x$, which we denote by $\msb(x)$, can be determined in $O(1)$ time. (More on this assumption later.)

\begin{lemma}\label{L-access}
The $(O(rN^{1/r}),O(N^{1-1/r}))$-implementation can support access and modify operations in $O(1)$ worst-case time.
\end{lemma}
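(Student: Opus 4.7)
Given a query index $j\in[0,N)$, the task reduces to computing in $O(1)$ worst-case time a triple $(i,k,o)$ such that the $j$-th item is stored at $A[i][k][o]$; here $i$ denotes the level (block size $B^i$), $k$ the block index within level $i$, and $o$ the offset within that block. A modify operation only replaces the final read at $A[i][k][o]$ by a write, so it suffices to locate the item in $O(1)$ worst-case time.

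My plan is to arrange that $B$ is always a power of two, $B=2^b$, by initializing it that way and having each call to \Rebuild\ preserve the invariant. Then $B^i=2^{bi}$, and division or remainder by $B^i$ reduces to a shift and a mask. With items stored in the natural order (level $r-1$ at the lowest indices and level $1$ at the highest, so that the indices of existing items are preserved across a \Combine), let $S_i=\sum_{\ell>i}n_\ell B^\ell$ for $i=1,\ldots,r-1$; level $i$ then occupies $[S_i,S_{i-1})$ (with the convention $S_0=N$). Given the level $i$, the block index and intra-block offset are $k=(j-S_i)\gg(bi)$ and $o=(j-S_i)\wedge(B^i-1)$, both $O(1)$, after which $A[i][k][o]$ is reached by two pointer dereferences.

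The crux is computing $i$ itself in $O(1)$ worst-case. The plan is to maintain, alongside the array $(S_1,\ldots,S_{r-1})$ of level boundaries, an auxiliary lookup table $D$ indexed by $m=\msb(j)$, where $D[m]$ encodes the level (or a constant-size set of candidate levels) containing the indices $j'$ with $\msb(j')=m$. On a query we compute $m=\msb(j)$, read off a candidate $i^{*}=D[m]$, and verify by testing $S_{i^{*}}\le j<S_{i^{*}-1}$, adjusting $i^{*}$ by a small constant if necessary. The table $D$ changes only when some $S_i$ changes, and this happens only inside \Combine, \Split, or \Rebuild; an ordinary \Grow\ or \Shrink\ that does not trigger one of these modifies only $n_0$ and leaves every $S_i$ and $D$ untouched. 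A \Combine\ or \Split\ at level $k$ performs $\Omega(B^k)$ item assignments and modifies at most $k$ of the $S_i$'s, so updating the $O(k\log N)\le O(B^k)$ affected entries of $D$ is absorbed in the amortized bound of Lemma~\ref{L-amort}; a \Rebuild\ reconstructs $D$ alongside the data structure within its linear cost.

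The main obstacle is proving that a constant number of post-lookup comparisons always suffice to refine the candidate $i^{*}$ to the correct level. This should follow from a structural property of the sequence $(S_i)$: the geometric growth of the block sizes, together with the invariant $n_i\le 2B$ maintained by the redundant base-$B$ counter, should imply that only $O(1)$ level boundaries need be considered for any given $\msb$-value once the extra bookkeeping in $D[m]$ is consulted. Once this structural claim is in place, the overall access procedure consists of a constant number of word operations ($\msb$, shift, mask, compare, and dereference), yielding the claimed $O(1)$ worst-case bound for access, and hence for modify as well.
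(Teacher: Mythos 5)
Your reduction of access to locating the level $i$ (then shifting and masking by a power of $B=2^b$) is the same starting point as the paper, and the observation that only \Combine, \Split, and \Rebuild\ can alter level boundaries is sound. But there is a genuine gap, and it is exactly the ``structural claim'' you defer: under the indexing you chose it is \emph{false}, and the missing fix is precisely the ingredient on which the paper's proof rests.

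You compute $m=\mathit{msb}(j)$ on the forward index $j$ and assert that, thanks to $n_i\le 2B$ and geometric block sizes, each $\mathit{msb}$-class contains only $O(1)$ of the boundaries $S_i$. That fails because your $S_i=\sum_{\ell>i}n_\ell B^\ell$ are \emph{suffix} sums, all dominated by the single top term. Concretely, take $n_1=n_2=\cdots=n_{r-1}=1$. Then $S_{r-2}=B^{r-1}$ while $S_1=B^2+\cdots+B^{r-1}<\tfrac{B}{B-1}B^{r-1}\le 2B^{r-1}$, so $S_1,\ldots,S_{r-2}$ all lie within a factor of $2$ of each other and can fall in a single $\mathit{msb}$-class. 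Then $D[m]$ would have to hold $\Omega(r)$ candidate levels, and no constant number of post-lookup comparisons can disambiguate among them.

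The paper makes the structural claim true by reversing the index: it sets $x=(N-1)-j$ and uses the \emph{prefix} sums $N_k=\sum_{j<k}n_jB^j$, the number of items in levels $<k$ counted from the small-block end. Since $N_k\le 2B\sum_{j<k}B^{j-1}\le 3B^{k-1}$ when $B\ge 4$, each nonzero $n_k$ multiplies the boundary by at least $B/3>1$, so the distinct boundaries really do grow geometrically. Setting $\ell=\lfloor \mathit{msb}(x)/b\rfloor$ gives $x\ge B^{\ell}>3B^{\ell-1}\ge N_{\ell-1}$, hence $k\ge\ell-1$; then two comparisons (against $N_{\ell}$ and $N_{\ell+1}$) plus one $\mathit{lsb}$ call---to find the first non-empty level $\ell'>\ell$, which can be arbitrarily far above $\ell$ when intermediate levels are empty, so ``adjusting by a small constant'' would still be wrong even after the reversal---determine $k$ exactly. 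This also removes the need for the auxiliary table $D$ and its update bookkeeping. Your scheme can be repaired by switching to the reversed index and prefix sums (and by separating out $n_0$, the occupancy of the partial small block, as the paper does); without that, the central step is broken.
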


To simplify the efficient implementation of access operations we make one small changes to the data structure. We let~$n_1$ be the number of full data blocks of size~$B$, i.e., not including the partially filled block of size~$B$ which contains $n_0<B$ items, if there is one. The numbers $n_2,\ldots,n_{r-1}$ are still the number of full blocks of sizes $B^2,\ldots,B^{r-1}$. (All blocks of these sizes are full.) We also assume that~$B$ is a power of~$2$, i.e., $B=2^b$ for some integer $b>0$.

Note that $N=\sum_{j=0}^{r-1}n_jB^j=(n_{r-1},\ldots,n_1,n_0)_B=\sum_{j=0}^{r-1} n_j B^j$, where $0\le n_j\le 2B$, for $0\le j<r$. Thus, $(n_{r-1},\ldots,n_1,n_0)$ is a possibly redundant base-$B$ representation of~$N$. We store $n_{r-1},\ldots,n_1,n_0$ compactly in two words $N^0=(n^0_{n-1},...,n^0_1,n^0_0)_B$ and $N^1=(n^1_{n-1},...,n^1_1,n^1_0)_B$, where $n_j=n^0_j+Bn^1_j$, $0\le n^0_j<B$, $0\le n^1_j\le 2$, for $0\le j< r$. Note that $N^a$, for $a=0,1$, is the concatenation the $b$-bit representations of each one of $n^a_{r-1},\ldots,n^a_1,n^a_0$. Each $n^a_j$ can be extracted from $N^a$ in constant time using simple logical and shift operations. The data structure can easily maintain $N^0$ and $N^1$ in addition to~$N$.

We let $N_k=\sum_{j=0}^{k-1} n_j B^j$, for $0\le k<r$. (We have $N_{r}=N$ and $N_{0}=0$.) Note that $N_k$ is the total number of items in blocks of size at most $B^{k-1}$ and thus $N-N_k$ is the number of items stored in blocks of size at least $B^{k}$. Recall that larger blocks contain items with lower index. Thus, the $i$-th item resides in a block of size at most $B^k$ only if $N-N_{k+1}\le i$. (Recall that indices start from~$0$.)

The numbers $N_k$, for $0\le k<r$ can be explicitly maintained by the data structure, without affecting the asymptotic running times of the various operations. Alternatively $N_k=(N^0\land M_k)+B(N^1\land M_k)$, where $\land$ is the bitwise logical and operation and $M_k$ is a mask of $kb$ $1$'s, i.e., $M_k=2^{kb}-1=(1\ll kb)-1$, where $\ll$ denotes left shift. (The masks can be precomputed.) The multiplication by~$B$ can also be replaced by a left shift of size~$b$, i.e., $N^k=(N^0\land M_k)+(N^1\land M_k)\ll b$.

To access the $i$-th item in the resizable array, where $0\le i<N$, we need to find the unique $0\le k<r$ such that $N-N_{k+1}\le i< N-N_{k}$. The $i$-th item is then in position $(i-(N-N_{k+1})) \bmod B^k$ of the $\lfloor \frac{i-(N-N_{k+1})}{B^k} \rfloor$-th block of size~$B^k$. (If $k=0$ then the $i$-th item is the $(i-(N-N_1))$-th item in the single partially filled block of size~$B$.) Since $B^k$ is a power of~$2$, the $i$-th item can be easily accessed in constant time, using logical and shift operations, once $k$ is known. (Note that $x \bmod B^k = x \land M_k$ and $\floor{x/B^k}=x\gg bk$, where $\gg$ denotes a right shift.)

Let $x=(N-1)-i$ be the index of the $i$-th item from the end of the array, again starting the numbering from~$0$. We then need to find the unique $0\le k<r$ such that $N_{k}\le x < N_{k+1}$.

Let $x=(x_{r-1},\ldots,x_1,x_0)_B$ be the standard base-$B$ representation of~$x$. To find~$k$, we begin by finding the largest index~$\ell$ such that $x_\ell>0$. This can be easily done by finding the index of the most significant set bit in~$x$, i.e., $\ell\gets \floor{\msb(x)/b}$. (We assume again that indices start from~$0$.)

Since $x_\ell>0$ we have $x\ge B^\ell$. Also $N_{\ell-1}\le 2B\sum_{j=0}^{\ell-2}B^j=\frac{2B}{B-1}(B^{\ell-1}-1)\le 3B^{\ell-1}$, assuming $B\ge 4$. Thus, $N_{\ell-1}\le x$ and therefore $k\ge \ell-1$. If $x<N_\ell$ then $k=\ell-1$. Otherwise, if $N_\ell\le x<N_{\ell+1}$ then of course $k=\ell$. If $N_{\ell+1}\le x$, then $k=\ell'$ where $\ell'> \ell$ is the smallest index for which $n_{\ell'}>0$. (Such an index must exist as $x<N$.) This is equivalent to finding the smallest $\ell'> \ell$ for which either $n^0_{\ell'}>0$ or $n^1_{\ell'}>0$. This can be easily done as follows: $\ell'\gets \floor{\lsb((N^0\lor N^1)\land(\neg M_{\ell+1}))/b}$, where $\lsb(x)$ is the index of the least significant set bit in~$x$, $\lor$ is the logical or operation and $\neg$ denotes complementation.

If $x$ is a number, then $x\land (-x)$ leaves only the rightmost $1$ in $x$.
(See Equation (37) in Knuth \cite{knuth2009art}.) Thus, $\lsb$ can be easily reduced to $\msb$, namely $\lsb(x)=\msb(x\land(-x))$. (No reduction in the other direction is known.)

Computing the index of most significant set bit in of a number~$x$ is equivalent to computing the \emph{binary logarithm}, i.e., $\msb(x)=\floor{\lg x}$. Several ways of finding the index of the most significant set bit in a word in constant time are discussed in Knuth \cite{knuth2009art}. One option, if available, is to convert~$x$ to floating point and extract the exponent. Fredman and Willard \cite{FrWi93} describe a clever way of computing $\msb(x)$ using a constant number of logical, shift and arithmetical operations. Their method requires the use of multiplication. The use of multiplication is essential. Brodal, see Knuth \cite{knuth2009art}, devised a way of computing $\msb(x)$ using standard logical, arithmetical and shift operations, without using multiplication, in $O(\log\log w)$ time, where $w$ is the number of bits in a machine word. A matching lower bound was obtained by Brodnik et al.~\cite{BKMM97}.) 

Obtaining a constant access time independent of~$r$ is only of theoretical interest as in practice the value of~$r$ used is expected to be very small. The simple $O(\log r)$ binary search algorithm, or even the na\"{i}ve $O(r)$ algorithm, for locating the value of~$k$ would probably work best.

\section{A data structuring transformation}\label{S-transform}

If~$r$ is a constant, then the $(O(rN^{1/r}),O(N^{1-1/r}))$-implementation of the previous section is optimal, both in the amount of temporary storage used and in its $O(r)=O(1)$ amortized cost of grow and shrink operations. If $r=r(N)$ is a (slowly) growing function of~$N$, it is interesting to ask whether the $O(rN^{1/r})$ extra space needed for storing the array can be reduced to~$O(N^{1/r})$, and whether the amortized cost of grow and shrink operations must be $\Omega(r)$. We may assume that $r(N)=O(\log N)$, since $N^{1/O(\log N)}=O(1)$. 

We show here that the amount of extra space can be reduced to $O(N^{1/r})$, at least if $r(N)\le \frac12 \frac{\log N}{\log \log N}$. In Section~\ref{S-lower-grow} we show that the amortized cost of grow and shrink operations must be $\Omega(r)$. 

We reduce the extra space needed from $O(rN^{1/r})$ to $O(N^{1/r})$ using a surprisingly simple transformation.
To state the transformation, we need the following definition:

\begin{definition}[Standard $(s(N),t(N))$-implementation]\label{D-standard}
An $(s(N),t(N))$-implementation is \emph{standard} if it satisfies the following conditions: \textup{(i)} When a new block~$B$ is allocated, items from some other blocks $A_1,A_2,\ldots,A_k$ are immediately copied into it in sequential order. More specifically, first all the items of~$A_1$ are copied into the first positions of~$B$ and then~$A_1$ is released, then the items of~$A_2$ are copied to the next positions of~$B$, and~$A_2$ is released, and so on. \textup{(ii)} Following each such operation, the total space used by the data structure is $N+s(N)$. 
\end{definition}

In other words, an $(s(N),t(N))$-implementation is standard if the $t(N)$ temporary extra space is only briefly needed following the allocation of new blocks. It is not difficult to check that all implementations given in the paper are standard. We can now obtain the following result.

\begin{lemma}\label{L-transform}
A standard $(s(N),O(N))$-implementation can be converted into an $(s(N)+O(\frac{N}{t(N)}),\allowbreak s(N)+O(t(N)))$-implementation, for any non-decreasing function $t(N)$, without increasing the asymptotic worst-case cost of access operations and the asymptotic amortized cost of grow and shrink operations.
\end{lemma}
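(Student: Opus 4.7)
Given a standard $(s(N), O(N))$-implementation $\mathcal{I}$, the plan is to build $\mathcal{I}'$ by capping every physical block size at $\tau = \Theta(t(N))$ via a single extra level of indirection. For each logical block $B$ of $\mathcal{I}$ of size $L$, $\mathcal{I}'$ maintains a \emph{strip-index} block of size $\lceil L/\tau\rceil$ whose entries point to $\lceil L/\tau\rceil$ consecutive \emph{pieces} of $B$, each of size at most $\tau$. The index structure of $\mathcal{I}$ is kept verbatim, except that every pointer that used to target a block $B$ is redirected to $B$'s strip-index. Logical positions translate to physical ones by running $\mathcal{I}$'s access procedure to get a strip-index and an offset $j$, then following entry $\lfloor j/\tau\rfloor$ at position $j\bmod\tau$.

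For the stored-space bound, the extra overhead $\mathcal{I}'$ introduces beyond $\mathcal{I}$ is the total size of all strip-indices, which is at most $N/\tau + b$, where $b$ is the number of blocks in $\mathcal{I}$. In any standard implementation, every non-main block is the target of a pointer slot inside some index block, and all index blocks together fit inside the $s(N)$ overhead, so $b = O(s(N))$; this yields stored overhead $s(N) + O(N/t(N))$, as required. For the temporary bound, the only episodes that push memory above the steady state are $\mathcal{I}$'s allocate-and-copy sequences, in which a new block $B$ of some size $L$ is created and blocks $A_1,\dots,A_k$ are consolidated into it and released in turn. I will simulate these in $\mathcal{I}'$ by first allocating $B$'s strip-index and then creating and filling the pieces of $B$ one at a time, releasing each source piece of some $A_i$ as soon as it has been fully copied out. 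At any moment during such an operation, the excess above the post-operation stored state is at most a single piece under construction of size $O(\tau) = O(t(N))$; combined with the $O(N/t(N))$ stored bound, this gives a temporary overhead of $s(N) + O(t(N) + N/t(N))$, which collapses to $s(N) + O(t(N))$ in the regime $t(N) = \Omega(\sqrt{N})$ where the two claimed bounds are mutually consistent.

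For time, access is $O(1)$ worst-case: $\mathcal{I}$'s own access produces the right strip-index and offset in $O(1)$, and the extra pointer hop through the strip-index is also $O(1)$. Grow and shrink perform exactly the same item-copies as $\mathcal{I}$, so the item-copy work inherits $\mathcal{I}$'s amortized bound; the additional strip-index bookkeeping contributes amortized $O(1)$ per item by a standard credit argument, since each strip-index entry is paid for by the $\tau$ items it manages. Because $\tau$ depends on $N$, I would trigger a global rebuild of $\mathcal{I}'$ whenever $N$ has doubled or halved since the previous rebuild, recomputing $\tau$ and re-partitioning the array; each such rebuild costs $O(N)$ and amortizes to $O(1)$ per grow/shrink.

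The hardest point is the temporary-space accounting just sketched: the new block's strip-index may itself have size $\Theta(N/t(N))$, and naively this could blow the $O(t(N))$ temporary budget. The clean resolution is to observe that the lemma's two bounds can only be consistent with the trivial inequality (temporary $\ge$ stored) when $t(N) = \Omega(N/t(N))$, and precisely in that regime both the new strip-index and the one piece under construction fit comfortably inside $O(t(N))$. Everything else---pointer redirection, the per-item credit argument, the periodic rebuild, and checking that $\mathcal{I}'$ is itself standard so the transformation can be iterated---is routine amortized accounting.
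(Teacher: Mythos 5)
Your proposal is correct and follows essentially the same route as the paper: replace each allocation of an oversize block by a sequence of pieces of size $O(t(N))$, each reachable through one extra level of indirection, and observe that the piece-pointers add only $O(N/t(N))$ stored overhead while at most one piece is in flight during a copy (so the temporary overhead is $s(N)+O(t(N)+N/t(N))=s(N)+O(t(N))$ once $t(N)\ge\sqrt{N}$). Your explicit per-block strip-indices and the periodic global rebuild to re-synchronize $\tau$ with the current $N$ are reasonable extra scaffolding that the paper leaves implicit, but they do not change the underlying argument.
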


\begin{proof}
Given the definition of standard implementations, we only need to consider the allocation of a new block~$B$ that is to receive the items currently stored in $A_1,A_2,\ldots,A_k$. Let~$b$ be the length of~$B$. If $b\le t(N)$, there is no problem. Otherwise, instead of allocating~$B$, we sequentially allocate and fill $\lceil b/t(N)\rceil$ blocks $B_1,B_2,\ldots,B_{\lceil b/t(N)\rceil}$ of size $t(N)$. We of course need to keep pointers to these blocks. Since the total size of all blocks allocated at any given time is at most $N+s(N)$, the number of extra pointers needed is at most $(N+s(N))/t(N)$. The total space needed to store an array of size~$N$ is at most $N+s(N)+O(\frac{N}{t(N)})$. The total space needed at any time by the data structure is at most $N+s(N)+O(\frac{N}{t(N)})+t(N)$, which is $s(N)+O(t(N))$ since we may assume that $t(N)\ge \sqrt{N}$.
\end{proof}

Using the lemma, we easily get the following theorem:

\begin{theorem}\label{T-r}
For any integer $r=r(N)\le \frac12 \frac{\log N}{\log \log N}$, there exists an $(O(N^{1/r}),O(N^{1-1/r}))$-implemen\-ta\-tion that supports access and modify operations in $O(1)$ worst-case time and grow and shrink operations in $O(r)$ amortized time.
\end{theorem}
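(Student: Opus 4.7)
The plan is to apply Lemma~\ref{L-transform} to the $(O(rN^{1/r}),O(N^{1-1/r}))$-implementation of Theorem~\ref{T-general-r}, with temporary-space budget $t(N)=N^{1-1/r}$. First I would verify that this implementation is standard in the sense of Definition~\ref{D-standard}: each call to \Combine\ allocates a new block and immediately fills it by sequentially copying items from $B$ previously existing blocks of the next smaller size, releasing them as it goes; \Split\ and \Rebuild\ are entirely analogous; and between such operations the data structure returns to its steady-state $N+O(rN^{1/r})$ footprint. So with $s(N)=O(rN^{1/r})$ and the original temporary budget $O(N^{1-1/r})\subseteq O(N)$, the hypothesis of Lemma~\ref{L-transform} is met.

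Invoking Lemma~\ref{L-transform} with $t(N)=N^{1-1/r}$ then yields an implementation whose permanent storage is at most $N+s(N)+O(N/t(N))=N+O(rN^{1/r})+O(N^{1/r})$ and whose temporary storage is at most $N+s(N)+O(t(N))=N+O(rN^{1/r})+O(N^{1-1/r})$. The $O(r)$ amortized grow/shrink cost and the $O(1)$ worst-case access time are preserved by the lemma: the address computation of Section~\ref{sub-access} only needs one additional shift/mask to locate an item inside the correct $t(N)$-sized piece of the block that would have contained it in the original implementation.

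Finally I would simplify using the constraint $r\le\tfrac12\log N/\log\log N$, which yields $r\log r\le\tfrac12\log N$ and hence $r\le N^{1/(2r)}$; from this, $rN^{1/r}\le N^{3/(2r)}\le N^{1-1/r}$ for $r\ge 3$, absorbing the $O(rN^{1/r})$ term in the temporary-storage bound into $O(N^{1-1/r})$. The main obstacle is shrinking the $O(rN^{1/r})$ term in the permanent-storage bound down to $O(N^{1/r})$: after the transformation the $r-1$ original index blocks of the general-$r$ implementation (each of size $2B=\Theta(N^{1/r})$, together contributing $\Theta(rN^{1/r})$) become redundant, because they held pointers to blocks that have now been replaced by $t(N)$-sized pieces, and a single new pointer array of size $O(N/t(N))=O(N^{1/r})$ already tracks all of those pieces. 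Discarding the obsolete index blocks leaves only $O(r)=o(N^{1/r})$ words of residual bookkeeping (again by the constraint), yielding the claimed $(O(N^{1/r}),O(N^{1-1/r}))$-implementation.
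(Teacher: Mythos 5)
Your overall plan—apply Lemma~\ref{L-transform} to a general-$r$ style implementation with $t(N)=N^{1-1/r}$—is the right one, but you apply it to the wrong base implementation, and the fix you propose for the resulting shortfall does not work.

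You start from the parameter-$r$ implementation, for which $s(N)=O(rN^{1/r})$, so Lemma~\ref{L-transform} yields permanent storage $N+O(rN^{1/r})+O(N^{1/r})$. You correctly observe that the $O(rN^{1/r})$ term is the obstacle, but your remedy—discard the $r-1$ original index blocks because their target blocks are ``now replaced by $t(N)$-sized pieces''—does not hold. With $t(N)=N^{1-1/r}$ and $N^{1/r}\le B<4N^{1/r}$, only the level-$(r-1)$ data blocks (size $\approx B^{r-1}$) can exceed $t(N)$; every block of size $B^i$ for $i\le r-2$ satisfies $B^i\le 4^{r-2}N^{1-2/r}\ll N^{1-1/r}$ for $N$ large, so these blocks are \emph{not} split by the transformation. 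The $r-2$ index blocks that point to them are therefore still necessary and still cost $\Theta(rB)=\Theta(rN^{1/r})$. Moreover, even a wholesale replacement of every block by uniform $t(N)$-sized pieces would not rescue the bound: a single partially filled piece already wastes up to $t(N)=N^{1-1/r}$ words, so the extra storage would balloon rather than shrink. In short, the index blocks are not obsolete and cannot be discarded.

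The paper sidesteps this by applying Lemma~\ref{L-transform} with $t(N)=N^{1-1/r}$ to the parameter-$2r$ implementation (a standard $(O(rN^{1/(2r)}),O(N^{1-1/(2r)}))$-implementation with $O(2r)=O(r)$ amortized resize cost), not the parameter-$r$ one. This gives an $(O(rN^{1/(2r)}+N^{1/r}),\,O(rN^{1/(2r)}+N^{1-1/r}))$-implementation, and the hypothesis $r\le\frac12\log N/\log\log N$ is used precisely to show $2r\log r\le\log N$, hence $r\le N^{1/(2r)}$ and $rN^{1/(2r)}\le N^{1/r}$, absorbing the $s(N)$ term into $O(N^{1/r})$ without touching the index structure. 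Your own arithmetic at the end (showing $rN^{1/r}\le N^{3/(2r)}\le N^{1-1/r}$) handles only the \emph{temporary}-storage term and does nothing for the permanent one; the same hypothesis must instead be used to shrink the \emph{base} implementation's $s(N)$ by halving the exponent, which is what the $r\mapsto 2r$ substitution accomplishes.
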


\begin{proof}
Apply the transformation of Lemma~\ref{L-transform}, with $t(N)=N^{1-1/r}$, to the standard $(O(rN^{1/(2r)}),\allowbreak O(N^{1-1/(2r)}))$-imple\-men\-ta\-tion of Section~\ref{S-general-r}. The result is an $(O(rN^{1/(2r)}+N^{1/r}),O(N^{1-1/r}))$-implementation. If $r\le \frac12 \frac{\log N}{\log \log N}$ then $2r\log r\le \log n$ and hence $rN^{1/(2r)}\le N^{1/r}$. Thus the obtained implementation satisfies the requirements of the theorem.
\end{proof}

\section{The growth game}\label{S-growth-game}

To get an indication as to how a truly optimal data structure should handle grow operations, and to obtain an~$\Omega(r)$ lower bound on the amortized cost of the grow operations of a standard data structure that is only allowed $N+O(N^{1/r})$ space to store an array of size~$N$, we define and analyze an interesting solitaire, i.e., a one-player, game called the \emph{growth} game, which is an abstraction of the way standard data structures can handle grow operations. (See Definition~\ref{D-standard}.)

\subsection{The \texorpdfstring{$(N,k,\ell)$}{}-growth game.}\label{sub-game}

Let $N,k$ and $\ell$ be fixed parameters. We are required to insert~$N$ items, one by one, into~$k$ initially empty subarrays $A_1,A_2,\ldots,A_k$ that represent a resizable array~$A$. The first items of~$A$ are stored in $A_k$, the next in $A_{k-1}$, and so on. (This apparent reversal turns out to simplify things.)  Let $a_1,a_2,\ldots,a_k$ be the number of items in $A_1,A_2,\ldots,A_k$, respectively.

Some of the subarrays $A_1,A_2,\ldots,A_k$ may be \emph{empty}, i.e., contain no items. (If~$A_i$ is empty then $a_i=0$.) Empty subarrays, if any, are assumed to be the first subarrays, i.e., if $a_i=0$ then $a_1=a_{2}=\cdots=a_i=0$. The first nonempty subarray may contain up to~$\ell$ \emph{vacant} positions. All other subarrays are \emph{full}, i.e., the number of items stored in them is equal to the size of the memory block allocated for them. The extra space used to store the items is thus always at most $k+\ell$.

A grow operation, which adds a new item to the array, is implemented as follows. If the first non-empty subarray has vacant positions, the item is simply placed in its first vacant position. (See Figure~\ref{F-growth-game}(a).) If all non-empty subarrays are full, but there is at least one empty subarray, a new subarray of size $\ell+1$ is allocated and the new item is placed in its first position. The remaining~$\ell$ positions are left vacant. This subarray replaces the last empty subarray. (See Figure~\ref{F-growth-game}(b).) The cost of inserting the new item in both these cases is defined to be 1. (For simplicity, we ignore the cost of allocation.) Note that in these first two cases the player of the game has no real choice. 

The game is more interesting when all~$k$ subarrays are full. We now have~$k$ possible moves. For each $i\in [k]=\{1,2,\ldots,k\}$, we can allocate a new subarray of size $(\ell+1)+\sum_{j=1}^i a_j$, copy the items in $A_i,A_{i-1},\ldots,A_1$, in this order, to the new subarray, and then place the new item in the first vacant position of the new subarray. (The new subarray will still have~$\ell$ vacant positions.) The old subarrays $A_1,\ldots,A_i$ are deallocated. The new subarray becomes $A_i$ while $A_1,A_2,\ldots,A_{i-1}$ become empty. (See Figure~\ref{F-growth-game}(c).) The cost of the operation is defined to be $1+\sum_{j=1}^i a_j$. (Again we count only item assignments and ignore the cost of allocations and deallocations.) A move of the second type is actually a move of the third type, but~$i$ is forced to be the largest index for which $a_i=0$.

\begin{figure}[t]
\centering
\begin{tabular}{rrrrrr}
\includegraphics[scale=0.45,valign=b]{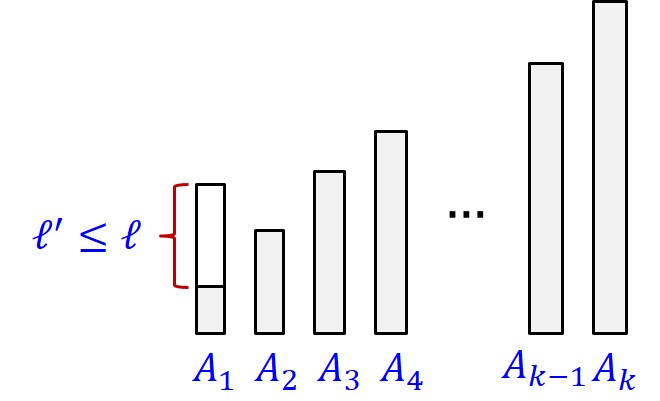} & &
\includegraphics[scale=0.45,valign=b]{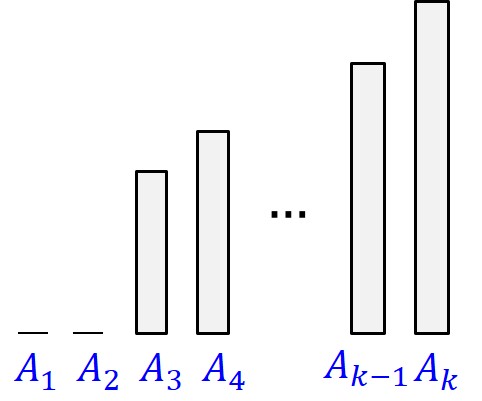} & &
\includegraphics[scale=0.45,valign=b]{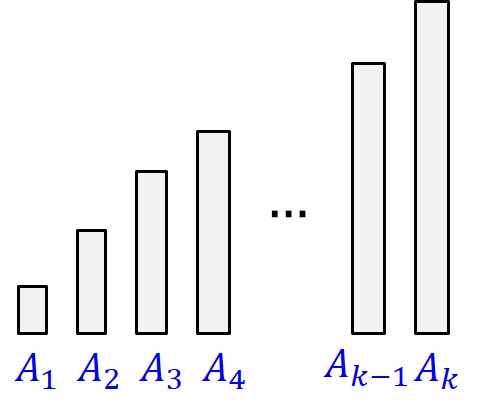} \\ \\[-0.15cm]
\multicolumn{1}{c}{$\big\Downarrow$} & & \multicolumn{1}{c}{$\big\Downarrow$} & & \multicolumn{1}{c}{$\big\Downarrow$} \\ \\[-0.3cm]
\includegraphics[scale=0.45,valign=b]{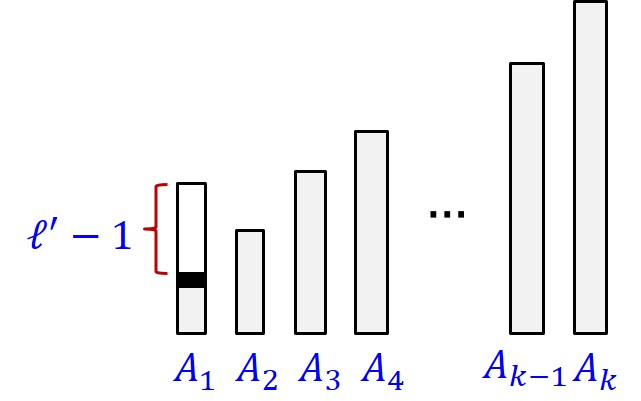} & &
\includegraphics[scale=0.45,valign=b]{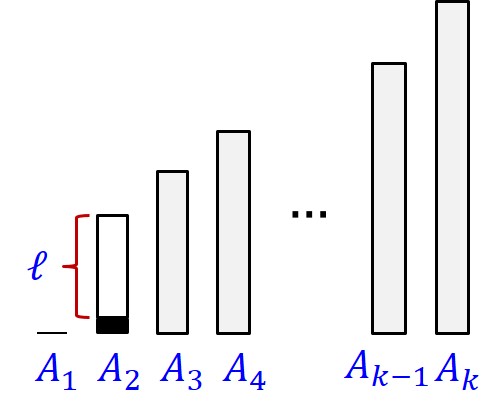} & &
\includegraphics[scale=0.45,valign=b]{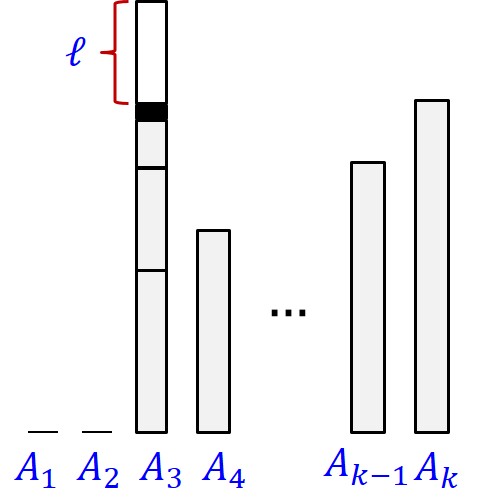} \\ \\[-0.3cm]
\multicolumn{1}{c}{(a)} & & \multicolumn{1}{c}{(b)} & & \multicolumn{1}{c}{(c)}
\end{tabular}
\caption{Possible moves in the growth game. (a) If the first non-empty subarray has vacant positions, a new item is inserted into its first vacant position. The cost of the operation is~$1$. (b) If there are no vacant positions, but there is at least one empty subarray, then a new array of size $\ell+1$ is allocated. (c) If there are no vacant positions, the subarrays $A_1,A_2,\ldots,A_i$, for some $i\in[k]$, are merged into a newly formed $A_i$, initially with $\ell+1$ vacant positions. The new item is placed in the first vacant position, leaving $\ell$ vacant positions. The subarrays $A_1,A_2,\ldots,A_{i-1}$ become empty. The cost of the operation is $1+\sum_{j=1}^i a_j$. (In the example, $i=3$.)}\label{F-growth-game}
\end{figure}

\begin{definition}[Total and amortized cost]  Let $C_{N,k,\ell}$ be the minimum total cost required to play the $(N,k,\ell)$-growth game. Let $A_{N,k,\ell}=\frac{C_{N,k,\ell}}{N}$ be the corresponding amortized cost of a single grow operation.
\end{definition}

We are interested in finding $C_{N,k,\ell}$, and hence $A_{N,k,\ell}$, and an optimal sequence of moves that achieves these values. The solution to these problems turns out to be fairly simple, and also fairly interesting.

The values of $A_{N,cN^{1/r},cN^{1/r}}$, for some constant $c>0$, can be used, as we shall see, to bound from below the amortized cost of a grow operation when only $cN^{1/r}$ extra space is allowed. We are more liberal here, since we assume that~$N$ is known in advance and we allow $cN^{1/r}$ extra space even when the array contains many fewer than~$N$ items. We also do not impose any bound on the amount of space used temporarily during grow operations. All this only makes our lower bound stronger.

\subsection{Reduction to the case \texorpdfstring{$\ell=0$}{}.}\label{sub-l=0}

It might seem that allowing up to $\ell$ vacant positions in one of the subarrays is an important feature of the game that helps reduce the amortized cost of growth operations. We begin by showing, perhaps surprisingly, that vacant positions do not help much, at least for most values of the parameters.
More specifically, we describe a reduction from the $\ell>0$ case to the $\ell=0$ case.

\begin{lemma}
For every~$N$ divisible by $\ell+1$, $C_{N,k,\ell}=(\ell+1)C_{\frac{N}{\ell+1},k,0}$ and $A_{N,k,\ell}=A_{\frac{N}{\ell+1},k,0}$.
\end{lemma}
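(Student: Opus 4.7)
The plan is to exhibit a cost-preserving (up to a factor of $\ell+1$) bijection between plays of the $(N,k,\ell)$-game and plays of the $(N/(\ell+1),k,0)$-game. Taking minima and dividing by $N$ then yields both claimed equalities at once.

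First I would argue that the rules of the $(N,k,\ell)$-game force every play to decompose into $N/(\ell+1)$ consecutive \emph{macros} of $\ell+1$ insertions each, with genuine choice occurring only on the first insertion of each macro. Every type-(b) or type-(c) move leaves its newly-allocated subarray with exactly $\ell$ vacant positions at its index, and that index is immediately the first non-empty one. The case analysis in the game's definition gives type~(a) strict priority whenever such a vacancy exists, so the next $\ell$ moves are forced to be type-(a) moves each of cost~$1$. After these $\ell$ fills complete, no vacancy remains, so the following move must again be of type~(b) or~(c); the very first move of the game is likewise forced (type~(b)), since all subarrays start empty.

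Next I would establish by induction on the macro index the invariant that at every macro boundary each $a_i$ is either~$0$ or a positive multiple of $\ell+1$. It holds initially, and is preserved: a type-(b) macro sets some empty $A_i$ to size $\ell+1$, and a type-(c) macro with index~$i$ sets $A_i$ to size $(\ell+1)+\sum_{j=1}^i a_j$, which is a multiple of $\ell+1$ by the inductive hypothesis. Defining $a_i' := a_i/(\ell+1)$ at macro boundaries then gives a legal $(N/(\ell+1),k,0)$-state, and a macro of type~(b) (respectively type~(c) with index~$i$) corresponds under this scaling to a move of the same type and same index in the $\ell=0$-game. The cost of a type-(b) macro is $1+\ell = (\ell+1)\cdot 1$, and the cost of a type-(c) macro of index~$i$ is $\bigl(1+\sum_{j=1}^i a_j\bigr)+\ell = (\ell+1)\bigl(1+\sum_{j=1}^i a_j'\bigr)$; in both cases the $\ell>0$ macro cost is exactly $\ell+1$ times the cost of the corresponding $\ell=0$ move. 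Summing over all $N/(\ell+1)$ macros and taking minima in both directions of the bijection yields $C_{N,k,\ell}=(\ell+1)\,C_{N/(\ell+1),k,0}$; dividing by $N$ gives $A_{N,k,\ell}=A_{N/(\ell+1),k,0}$.

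The only real subtlety---and hence the main potential obstacle---is in the first step: one must check that the macro structure is not merely optimal but truly compulsory, so that no hidden optimization over placement of type-(a) moves sneaks in on the $\ell>0$ side. This is immediate from the definition of a grow move, whose case analysis makes type~(a) the sole applicable rule whenever a vacant position exists, so the macro decomposition is imposed by the rules rather than being a strategic restriction.
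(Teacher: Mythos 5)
Your proof is correct and takes essentially the same approach as the paper's (much briefer) proof: both rest on the observation that the $\ell$ vacant positions left by a type-(b) or type-(c) move must immediately be filled by the next $\ell$ forced type-(a) moves, so any play decomposes into macros of $\ell+1$ insertions, yielding a cost-scaling bijection with the $\ell=0$ game. Your write-up simply makes explicit the divisibility invariant and the $(\ell+1)$-factor cost computation that the paper leaves to the reader.
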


\begin{proof}
Playing the $(N,k,\ell)$-growth game amounts to playing the $(\frac{N}{\ell+1},k,0)$-growth game on blocks of $\ell+1$ items. To see this, note that when a grow operation in the $(N,k,\ell)$-growth game leaves~$\ell$ vacant positions, the next~$\ell$ grow operations must immediately fill these positions.
\end{proof}

\subsection{Analysis of the \texorpdfstring{$\ell=0$}{} case.}

Given the simple reduction of Section~\ref{sub-l=0}, it is enough to consider the case $\ell=0$. To simplify the notation, let $C_{N,k}=C_{N,k,0}$ and $A_{N,k}=A_{N,k,0}$. We refer to the corresponding game as the $(N,k)$-growth game. Let $\NN=\{0,1,\ldots\}$ be the set of non-negative integers. 

\begin{definition}[States and their cost]\label{D-C}
Let 
\[ P_{N,k} \EQ \left\{\, \aaa=(a_1,a_2,\ldots,a_k)\in \NN^k \;\middle|\; \sum_{i=1}^k a_i=N \text{ \rm and } \,a_i=0\, \Rightarrow \,a_{i-1}=0\, \;,\; i=2,\ldots,k \,\right\} \;, \]
be the set of all states of total size~$N$ in the $(N,k)$-growth game. For $\aaa\in P_{N,k}$, let $C(\aaa)=C_k(\aaa)$ be the minimum total cost needed to reach state $\aaa=(a_1,a_2,\ldots,a_k)$, i.e., $|A_i|=a_i$, for $i\in[k]$, starting from state $(0,0,\ldots,0)$. Clearly $C_{N,k}=\min\{\, C(\aaa) \mid \aaa\in P_{N,k} \}$.
\end{definition}

The following lemma gives simple recurrence relations for computing the costs of states. These relations by themselves do not provide an efficient way of computing $C_{N,k}$ and $C(\aaa)$, for every $\aaa\in P_{N,k}$. They allow us to obtain explicit formulas for these quantities, however.

\begin{lemma}\label{L-decompose}
For every $\aaa=(a_1,a_2,\ldots,a_k)\in P_{N,k}$: 
\begin{itemize}[topsep=0pt,left=15pt,itemsep=2pt,parsep=0pt]
\item[\textup{(i)}] $C_k(a_1,a_2,\ldots,a_k) = C_k(0,\ldots,0,a_k)+C_{k-1}(a_1,a_2,\ldots,a_{k-1})$.
\item[\textup{(ii)}] $C_k(a_1,a_2,\ldots,a_k) =\sum_{j=1}^k C_j(0,\ldots,0,a_j)$.
\item[\textup{(iii)}] $C_k(0,\ldots,0,a_k) = C_{a_k-1,k}+a_k$.
\item[\textup{(iv)}] $C(\aaa)=N+\sum_{j=i}^k C_{a_j-1,j}$, if $0=a_{i-1}<a_i$.
\end{itemize}
\end{lemma}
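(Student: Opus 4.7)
The plan is to prove the four parts in the order (i), (iii), (ii), (iv), since (ii) follows by repeated application of (i), and (iv) is an immediate consequence of (ii) and (iii). The central observation underlying everything is a \emph{locality} property of the game when $\ell=0$: the only operation that can modify subarray $A_k$ is a level-$k$ merge, namely a type (c) move with $i=k$. Consequently, any sequence of moves that uses only levels $i<k$ leaves $A_k$ untouched, which lets us isolate the work done at the top level from the work done below it.

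For (i), suppose $\aaa=(a_1,\ldots,a_k)$ has $a_k>0$ (the case $a_k=0$ reduces trivially to the $(k-1)$-subarray game, since $A_k$ is never used). For the lower bound, take any optimal play reaching $\aaa$ and locate the \emph{last} level-$k$ merge; since $A_k$ never changes afterwards, its final content $a_k$ must equal the size of $A_k$ produced by that merge, and immediately after the merge the state is $(0,\ldots,0,a_k)$. The cost up to and including that moment is therefore at least $C_k(0,\ldots,0,a_k)$. Everything after that moment uses only levels $1,\ldots,k-1$, never touches $A_k$, and hence is exactly an execution of a $(k-1)$-subarray game that starts from the all-empty state and ends at $(a_1,\ldots,a_{k-1})$, incurring cost at least $C_{k-1}(a_1,\ldots,a_{k-1})$. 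For the upper bound, simply perform an optimal play reaching $(0,\ldots,0,a_k)$ and then, treating $A_k$ as frozen, run an optimal $(k-1)$-subarray play on $A_1,\ldots,A_{k-1}$. Iterating (i) on the prefix $(a_1,\ldots,a_{k-1})$ immediately yields (ii).

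For (iii), consider any play reaching $(0,\ldots,0,a_k)$ with $a_k\ge 1$. The final move cannot be a merge at a level $j<k$, since such a merge produces a nonempty $A_j$. So it must be a level-$k$ merge, whose cost is $1+\sum_{j=1}^{k}a_j^{\mathrm{prev}}$, where $(a_1^{\mathrm{prev}},\ldots,a_k^{\mathrm{prev}})$ is the state just before the move. The new $A_k$ then has size $1+\sum_{j=1}^{k}a_j^{\mathrm{prev}}=a_k$, so the pre-state has exactly $a_k-1$ items, distributed in some legal way among $A_1,\ldots,A_k$. The minimum cost to reach such a pre-state is $C_{a_k-1,k}$, and the final merge contributes an additional $a_k$; this establishes both bounds simultaneously since any reaching strategy must end with a level-$k$ merge out of \emph{some} state of total size $a_k-1$.

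Finally, (iv) follows by plugging (iii) into (ii): the terms $C_j(0,\ldots,0,a_j)$ for $j<i$ vanish because $a_j=0$, and for $j\ge i$ they equal $C_{a_j-1,j}+a_j$, so the sum separates into $\sum_{j=i}^{k}a_j=N$ plus $\sum_{j=i}^{k}C_{a_j-1,j}$. The step I expect to require the most care is the lower-bound half of (i), where one must justify that after the last level-$k$ merge the remainder of the play truly behaves as a fresh $(k-1)$-subarray game; this hinges on the locality observation above together with the fact that the post-merge state $(0,\ldots,0,a_k)$ matches the initial condition of a $(k-1)$-subarray game on $A_1,\ldots,A_{k-1}$, so the optimal cost of that tail segment is at least $C_{k-1}(a_1,\ldots,a_{k-1})$ by definition.
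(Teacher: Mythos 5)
Your proof is correct and takes essentially the same route as the paper's: both proofs isolate the last move that changes $A_k$, observe that the state immediately after that move is $(0,\ldots,0,a_k)$ (since a level-$k$ merge empties $A_1,\ldots,A_{k-1}$), and split the total cost into the part up to that moment and the part after it, the latter being a $(k-1)$-subarray game. The paper states this much more tersely, but the decomposition and the derivation of (ii)--(iv) from (i) and (iii) are the same.
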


\begin{proof}
Let $(a_1,a_2,\ldots,a_k)\in P_{N,k}$. Consider a sequence of moves from $(0,0,\ldots,0)$ to $(a_1,a_2,\ldots,a_k)$. The last move to change $A_k$ must be to the state $(0,\ldots,0,a_k)$. The minimum cost of reaching this state is $C_k(0,\ldots,0,a_k)$. The minimum cost of moving from state $(0,\ldots,0,a_k)$ to $(a_1,a_2,\ldots,a_k)$ is $C_{k-1}(a_1,a_2,\ldots,a_{k-1})$, since only $A_1,A_2,\ldots,A_{k-1}$ can be used. This establishes (i); (ii) follows by induction.

The move to $(0,\ldots,0,a_k)$ must be from a state $(b_1,b_2,\ldots,b_k)$ with $\sum_{j=1}^{k}b_j=a_k-1$. The minimum cost of reaching such a state is exactly $C_{a_k-1,k}$ and the cost of the move to $(0,\ldots,0,a_k)$ is~$a_k$. This establishes (iii); (iv) follows by induction.
\end{proof}

The exact formula that we obtain for $C_{N,k}$ involves binomial coefficients. It is thus convenient to start by stating two well known binomial-coefficient identities.

\begin{lemma}\label{L-binom} For every $n,k\ge 0$:
\textup{(i)} $\binom{n}{k}=\binom{n-1}{k-1}+\binom{n-1}{k}$ and 
\textup{(ii)} $\sum_{i=0}^{k}\binom{n+i-1}{i} = \binom{n+k}{k}$.
\end{lemma}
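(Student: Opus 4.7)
Both parts are classical identities, so the plan is to dispatch (i) by a standard argument and then derive (ii) from (i) by a short induction on $k$.

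For (i), which is Pascal's rule, I would give the usual combinatorial proof: $\binom{n}{k}$ counts the $k$-element subsets of $\{1,2,\ldots,n\}$, which split according to whether a distinguished element (say $n$) is included, giving $\binom{n-1}{k-1}$ subsets that contain it and $\binom{n-1}{k}$ that do not. With the standard convention that $\binom{m}{j}=0$ when $j<0$ or $j>m$, and $\binom{0}{0}=1$, the identity holds uniformly for all $n,k\ge 0$, so no boundary cases need separate treatment. An algebraic proof from the factorial formula would work equally well.

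For (ii), the hockey-stick / upper-summation identity, I would induct on $k$. The base case $k=0$ reduces to $\binom{n-1}{0}=1=\binom{n}{0}$. For the inductive step, assume
\[ \sum_{i=0}^{k-1}\binom{n+i-1}{i} \EQ \binom{n+k-1}{k-1}. \]
Adding $\binom{n+k-1}{k}$ to both sides gives
\[ \sum_{i=0}^{k}\binom{n+i-1}{i} \EQ \binom{n+k-1}{k-1}+\binom{n+k-1}{k} \EQ \binom{n+k}{k}, \]
where the last equality is exactly (i) applied with $n$ replaced by $n+k$. This closes the induction.

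Since both identities are textbook facts and the induction is a one-line computation once (i) is established, there is no real obstacle here; the only point to be careful about is adopting the boundary convention for $\binom{\cdot}{\cdot}$ so that (i) is valid at the endpoints, which the induction above implicitly uses.
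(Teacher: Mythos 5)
Your proof is correct and takes essentially the same approach as the paper: part (i) is dispatched as a classical fact (you additionally supply the standard combinatorial argument, which the paper omits), and part (ii) is proved by the identical induction on $k$, adding $\binom{n+k-1}{k}$ to the inductive hypothesis and closing with Pascal's rule.
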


\begin{proof} (i) is well known. (It may even be used as the definition of the binomial coefficients, along with $\binom{0}{0}=1$.) (ii) is obtained by an easy induction on~$k$. For $k=0$, both terms are~$1$. (Note that $\binom{n}{0}=1$ for every~$n$.) The induction step follows easily:
\[ \textstyle \sum_{i=0}^{k}\binom{n+i-1}{i} \EQ \left(\sum_{i=0}^{k-1}\binom{n+i-1}{i}\right) + \binom{n+k-1}{k} \EQ \binom{n+k-1}{k-1} + \binom{n+k-1}{k} \EQ \binom{n+k}{k} \;. \qedhere
\]
\end{proof}

The following theorem gives an exact formula for computing the total cost $C_{N,k}$ and a characterization of all the states in~$P_{N,k}$ that achieve this minimum total cost. (The corresponding optimal sequences of moves are considered in Section~\ref{sub-play}.)


\begin{theorem}\label{T-main}
\textup{(i)} If $\binom{n+k-1}{k}-1\le N \le \binom{n+k}{k}-1$, for some $n\ge 0$, then
\[ \textstyle C_{N,k} \EQ (N+1)n - \binom{n+k}{k+1} \;.\] 
\textup{(ii)} If $\binom{n+k-1}{k}\le N < \binom{n+k}{k}$, for some $n\ge 1$, then
\[ C_{N,k}-C_{N-1,k} \EQ n \;.\]
\textup{(iii)} If $\binom{n+k-1}{k}-1\le N < \binom{n+k}{k}-1$, for some $n\ge 0$, and $\aaa\in P_{N,k}$, then 
\[ \textstyle C_{N,k}\EQ C(\aaa) 
\quad\Longleftrightarrow\quad
\binom{n+i-2}{i}\le a_i \le \binom{n+i-1}{i}\;, \text{ \rm  for every } i\in[k]\;.  \]
\end{theorem}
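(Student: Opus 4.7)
The plan is to prove (i), (ii), (iii) simultaneously by a nested induction: outer on $k$, inner on $N$. The base case $k=1$ is immediate since $P_{N,1}=\{(N)\}$ and unrolling Lemma~\ref{L-decompose}(iii) gives $C_{N,1}=N(N+1)/2$, which matches all three parts by direct calculation.

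For $k\ge 2$, combining Lemma~\ref{L-decompose}(i) with (iii) yields the recurrence
\[
C_{N,k} \EQ \min_{0\le m\le N}\bigl[\,C_{N-m,k-1}+\delta_m\,\bigr],
\]
where $\delta_0:=0$ and $\delta_m:=m+C_{m-1,k}$ for $m\ge 1$; here $m$ plays the role of $a_k$. All summands on the right are known in closed form by the outer hypothesis (for $C_{\cdot,k-1}$) and the inner one (for $C_{m-1,k}$, since $m-1<N$). First I would verify that both $m\mapsto C_{N-m,k-1}$ and $m\mapsto\delta_m$ are piecewise-linear and convex in $m$: this follows from part~(ii) applied inductively, which says the discrete derivative of $C_{\cdot,k'}$ is non-decreasing in its argument. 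Hence the objective is convex and its minimum is characterized by where its discrete slope crosses from $\le 0$ to $\ge 0$.

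Next I would read off this slope using (ii): the right-difference of the objective at $m$ equals $1+n_k(m)-n_{k-1}(N-m)$, where $n_{k'}(\cdot)$ denotes the level in the $k'$-subarray game as defined by the range in (i). Let $n$ be the level of $N$ in the $k$-subarray game. Using Lemma~\ref{L-binom} together with Pascal's rule, I expect to show that this right-difference is strictly negative for $m<\binom{n+k-2}{k}$ and strictly positive for $m>\binom{n+k-1}{k}$, so the convex minimum is attained precisely on the interval $\bigl[\binom{n+k-2}{k},\binom{n+k-1}{k}\bigr]$. This establishes the $a_k$-range in (iii); the ranges for $a_1,\ldots,a_{k-1}$ then follow by applying~(iii) inductively to the $(k-1)$-subarray subproblem on $N-a_k$ items, after checking that the level of $N-a_k$ in that game equals $n$ across the plateau. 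Plugging any such $m$ back into the recurrence and simplifying with Pascal's rule and Lemma~\ref{L-binom}(ii) yields the closed form (i); part~(ii) then drops out by differencing (i) at consecutive values of $N$.

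The main obstacle is the binomial-coefficient bookkeeping needed to pin down the optimal interval and evaluate the closed form: one must manipulate $\binom{n+k-1}{k}-\binom{n+k-2}{k}=\binom{n+k-2}{k-1}$ alongside the analogous identity for the $(k-1)$-subarray thresholds in order to align the slopes of the two summands across the window. A secondary subtlety is the boundary case $N=\binom{n+k-1}{k}-1$, where the level of $N$ is ambiguous (it lies at the junction of the level-$(n-1)$ and level-$n$ regimes), so the two instantiations of (i) must be verified to agree; fortunately this agreement is a direct consequence of Pascal's rule.
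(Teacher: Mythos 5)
Your proposal is correct in outline and takes a genuinely different path from the paper's. The paper exploits the full additive decomposition $C(\aaa)=N+\sum_j C_{a_j-1,j}$ (Lemma~\ref{L-decompose}(iv)) and proves that any $\aaa\notin Q_{N,k}$ is suboptimal by a local \emph{exchange argument}: using Claim~\ref{C-ij} it finds indices $i,j$ with $a_i$ too small and $a_j$ too large, moves one unit from $j$ to $i$, and invokes part~(ii) to show the two affected marginal costs differ by at least one, so the swap strictly lowers the cost (Claim~\ref{C-non-opt}); the closed form then follows by plugging the inductive formula into the sum (Claim~\ref{C-CNk}). You instead peel off only the coordinate $a_k$ via the one‑dimensional min‑recurrence $C_{N,k}=\min_m\bigl[C_{N-m,k-1}+m+C_{m-1,k}\bigr]$ and use univariate discrete convexity (again from part~(ii)) to locate the optimal plateau for $m$, then recurse on the $(k-1)$‑subproblem. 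Both proofs ultimately rest on the same convexity fact (the discrete derivative in~(ii) is nondecreasing), but the paper applies it ``in parallel'' across all coordinates, while you apply it ``serially'' one coordinate at a time. The trade‑off is real: the exchange argument never needs to track which values of $m$ are feasible for the inner subproblem, whereas your recursion must verify, as you flag, that the level of $N-m$ in the $(k-1)$‑game stays equal to $n$ across the plateau, and must treat the endpoints of the interval $[\binom{n+k-2}{k},\binom{n+k-1}{k}]$ with care since at $N=\binom{n+k-1}{k}-1$ only $m=\binom{n+k-2}{k}$ is actually attainable. One small slip: for $N>0$ the case $m=0$ is infeasible (it forces all $a_i=0$), so including it with $\delta_0=0$ silently relies on the monotonicity $C_{N,k-1}\ge C_{N,k}$; that inequality is true but deserves a one‑line justification rather than a convention. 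With those bookkeeping points filled in, your argument is a valid alternative proof.
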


Both inequalities in claim (i) are weak inequalities. Thus if $N=\binom{n+k}{k}-1$ for some $n\ge 0$, then condition (i) is satisfied by both~$n$ and $n+1$. This is consistent since for this value of~$N$,
\begin{align*}
   & \textstyle (N+1)(n+1) - \binom{n+k+1}{k+1} \EQ (N+1)n + (N+1-\binom{n+k+1}{k+1}) \\
  \EQ & \textstyle (N+1)n + (\binom{n+k}{k+1} -\binom{n+k+1}{k+1}) \EQ (N+1)n -\binom{n+k}{k+1} \;,
\end{align*}
where the last equality follows by Lemma~\ref{L-binom}(i).

Theorem~\ref{T-main}(iii) can be stated more succinctly using the following definition:

\begin{definition}[$Q_{N,k}$]\label{D-Q}
If $\binom{n+k-1}{k}\le N < \binom{n+k}{k}$, let
\[ \textstyle Q_{N,k} \EQ \left\{ \aaa=(a_1,a_2,\ldots,a_k)\in P_{N,k} \;\middle|\; \binom{n+i-2}{i}\le a_i \le \binom{n+i-1}{i}\;, \text{ \rm  for every } i\in[k]\right\}\;.\]
\end{definition}

Theorem~\ref{T-main}(iii) says that if $\aaa\in P_{N,k}$, then $C_{N,k}=C(\aaa)$ if and only if $\aaa\in Q_{N,k}$. That is, $Q_{N,k}$ is the set of optimal final positions in the $(N,k)$-growth game. For most values of~$N$ there are many optimal final positions. If $N=\binom{n+k}{k}-1$, then $(\binom{n}{1},\binom{n+1}{2},\ldots,\binom{n+k-1}{k})$ is the only optimal final state and the total cost is $\binom{n+k}{k}n-\binom{n+k}{k-1}=\frac{kn}{k+1}\binom{n+k}{k}$. (We say more about this in Section~\ref{sub-play}.)

We prove Theorem~\ref{T-main} by induction on~$N$ and $k$. We say that $(N',k')<(N,k)$ if $N'<N$ and $k'\le k$. We say that $(N',k')\le (N,k)$ if $N'\le N$ and $k'\le k$. We break the proof into small pieces, each stated as a separate claim. The basis of the induction is established by the following claim:

\begin{claim}\label{C-basis}
Theorem~\textup{\ref{T-main}} holds for $n=1$, i.e., when $0\le N\le k$.
\end{claim}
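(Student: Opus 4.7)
Plan of proof. First I would unpack what the three parts of Theorem~\ref{T-main} say when $n=1$. Part (i) asserts $C_{N,k}=(N+1)\cdot 1-\binom{k+1}{k+1}=N$ for all $0\le N\le k$. Part (ii) becomes $C_{N,k}-C_{N-1,k}=1$ for $1\le N\le k$, which follows immediately from (i). Part (iii), applicable for $0\le N<k$, uses $\binom{i-1}{i}=0$ and $\binom{i}{i}=1$, so it asserts that the optimal final states are exactly those $\aaa=(a_1,\dots,a_k)\in P_{N,k}$ with $a_i\in\{0,1\}$ for every~$i$; since $P_{N,k}$ already forces empty subarrays to come first, this pins down the unique state $(\,\underbrace{0,\dots,0}_{k-N},\underbrace{1,\dots,1}_{N}\,)$.

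Second, I would establish (i) by matching lower and upper bounds. For the lower bound, every grow operation assigns at least one item into the resizable array, so any sequence that reaches total size~$N$ has cost $\ge N$; hence $C_{N,k}\ge N$. For the upper bound I would exhibit an explicit strategy when $N\le k$: repeatedly perform the type-(b) move, allocating at each step a new subarray of size~$\ell+1=1$ (recall $\ell=0$) holding the new item, thereby arriving at the state $(0,\dots,0,1,\dots,1)$ with $N$ trailing $1$'s. Each of the $N$ steps costs exactly~$1$, giving total cost~$N$, so $C_{N,k}\le N$. Combined, $C_{N,k}=N$, and (ii) follows immediately.

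Third, I would prove the characterization in (iii) using the recurrence in Lemma~\ref{L-decompose}(iv). For $\aaa\in P_{N,k}$ with smallest nonzero index~$i$, that lemma gives $C(\aaa)=N+\sum_{j=i}^{k}C_{a_j-1,j}$. Since each $C_{m,j}\ge 0$, the equality $C(\aaa)=C_{N,k}=N$ holds if and only if $C_{a_j-1,j}=0$ for every $j\ge i$, i.e., $a_j-1=0$ for every such~$j$. Hence $\aaa$ is optimal iff every nonzero entry equals~$1$, which is exactly the interval condition $0=\binom{i-1}{i}\le a_i\le\binom{i}{i}=1$ claimed by part (iii).

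There is essentially no obstacle here: the base case is the trivial regime where the merging option of the growth game is never triggered (we are filling singletons one at a time), and the only real care needed is in checking that the binomial-coefficient expressions collapse correctly at $n=1$ so that the formulas in (i) and the interval in (iii) specialize to the elementary statements above. The rest of the induction (for $n\ge 2$) is then carried by the recursive structure from Lemma~\ref{L-decompose}.
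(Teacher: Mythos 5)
Your proof is correct and follows essentially the same route as the paper's (which simply notes that each insertion costs exactly $1$ so $C_{N,k}=N$, and that $C(\aaa)=N$ iff every $a_i\in\{0,1\}$); you merely supply the matching lower/upper bounds and the Lemma~\ref{L-decompose}(iv) bookkeeping explicitly. The specialization of the binomial expressions at $n=1$ is handled correctly, so there is nothing to add.
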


\begin{proof}
If $0\le N\le k$, then $C_{N,k}=N$, i.e., the insertion of each item costs exactly~$1$, and if $\aaa\in P_{N,k}$, then $C(\aaa)=N$ if and only if $0\le a_i\le 1$, for every $i\in [k]$.
\end{proof}

Before establishing the induction step we prove that, in a sense, Theorem~\ref{T-main}(iii) implies Theorem~\ref{T-main}(i), and that Theorem~\ref{T-main}(i) implies Theorem~\ref{T-main}(ii). 

\begin{claim}\label{C-diff}
If Theorem~\textup{\ref{T-main}(i)} holds for $(N,k)$ and $(N-1,k)$, then Theorem~\textup{\ref{T-main}(ii)} holds for $(N,k)$, i.e., if $\binom{n+k-1}{k}\le N < \binom{n+k}{k}$, for some $n\ge 1$, then $C_{N,k}-C_{N-1,k}=n$.
\end{claim}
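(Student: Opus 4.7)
The plan is to show that the hypothesis $\binom{n+k-1}{k}\le N<\binom{n+k}{k}$ allows us to apply the given Theorem~\ref{T-main}(i) with the \emph{same} parameter $n$ to both $N$ and $N-1$, after which the difference $C_{N,k}-C_{N-1,k}=n$ falls out by direct subtraction.

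First I would unpack the hypothesis. Assume $n\ge 1$ and $\binom{n+k-1}{k}\le N<\binom{n+k}{k}$. Then $\binom{n+k-1}{k}-1\le N\le \binom{n+k}{k}-1$, which is exactly the condition under which Theorem~\ref{T-main}(i) applies to $(N,k)$ with this~$n$. Hence by hypothesis
\[ C_{N,k} \EQ (N+1)n - \binom{n+k}{k+1}\;. \]
For $(N-1,k)$, subtracting $1$ gives $\binom{n+k-1}{k}-1\le N-1\le \binom{n+k}{k}-2$. In particular $\binom{n+k-1}{k}-1\le N-1\le \binom{n+k}{k}-1$, so Theorem~\ref{T-main}(i) applies to $(N-1,k)$ with the same~$n$, yielding
\[ C_{N-1,k} \EQ Nn - \binom{n+k}{k+1}\;. \]

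Subtracting, $C_{N,k}-C_{N-1,k}=(N+1)n-Nn=n$, as required.

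The one subtlety to mention is the boundary case $N=\binom{n+k-1}{k}$, where $N-1=\binom{n+k-1}{k}-1$ lies at the left endpoint of the interval for~$n$ and is simultaneously the right endpoint of the interval for $n-1$ (i.e.\ $N-1=\binom{(n-1)+k}{k}-1$). The remark following the statement of Theorem~\ref{T-main} shows that the two choices of $n$ give the same value of $C_{N-1,k}$, so the subtraction is unambiguous. Since the remaining arithmetic is a single cancellation, there is no real obstacle here; the claim is essentially a bookkeeping consequence of (i).
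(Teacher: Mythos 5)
Your proof is correct and takes essentially the same approach as the paper's: both observe that the strict upper bound $N<\binom{n+k}{k}$ in the hypothesis, combined with the weak inequalities in the hypothesis of Theorem~\ref{T-main}(i), allows applying (i) with the \emph{same} $n$ to both $(N,k)$ and $(N-1,k)$, after which the result follows by subtraction. Your extra remark about the boundary case $N=\binom{n+k-1}{k}$ is a harmless bonus but not needed, since the argument never switches to a different~$n$.
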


\begin{proof}
If $\binom{n+k-1}{k}\le N < \binom{n+k}{k}$, for some $n\ge 1$, then 
\[ \textstyle \binom{n+k-1}{k}-1 \LT N \LE \binom{n+k}{k}-1 \quad,\quad \binom{n+k-1}{k}-1\LE N-1 \LT \binom{n+k}{k}-1 \;.\]
Theorem~\ref{T-main}(i) applied to $(N,k)$ and $(N-1,k)$ gives $C_{N,k}-C_{N-1,k}=n$. (Note that here we use the fact that Theorem~\ref{T-main}(i) requires only the weak inequality $N \le \binom{n+k}{k}-1$.)
\end{proof}

\begin{claim}\label{C-CNk}
If Theorem~\textup{\ref{T-main}(i)} holds for every $(N',k')< (N,k)$, $\binom{n+k-1}{k}-1\le N < \binom{n+k}{k}-1$, for some $n\ge 1$, and $\aaa\in Q_{N,k}$, i.e., $\aaa\in P_{N,k}$ and $\binom{n+i-2}{i}\le a_i \le \binom{n+i-1}{i}$ for every $i\in[k]$, then $C(\aaa)=(N+1)n - \binom{n+k}{k+1}$. (Note that this is the claimed value of $C_{N,k}$.)
\end{claim}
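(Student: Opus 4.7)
The plan is to decompose $C(\aaa)$ via Lemma~\ref{L-decompose}, invoke the inductive hypothesis on each summand with a parameter shift, and then recognize the residual binomial sum as an instance of Lemma~\ref{L-binom}(ii).

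I would begin by applying parts (ii) and (iii) of Lemma~\ref{L-decompose} to obtain
\[ C(\aaa) \EQ \sum_{j=1}^k C_j(0,\ldots,0,a_j) \EQ N \;+\; \sum_{j:\,a_j\ge 1} C_{a_j-1,j}\;, \]
the terms with $a_j=0$ vanishing. The key observation is that the defining inequalities $\binom{n+j-2}{j}\le a_j\le \binom{n+j-1}{j}$ of $Q_{N,k}$ translate, after subtracting $1$ and setting $n' := n-1$, into $\binom{n'+j-1}{j}-1 \le a_j-1 \le \binom{n'+j}{j}-1$, which is exactly the hypothesis of Theorem~\ref{T-main}(i) for $(a_j-1,j)$ with parameter $n'$. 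Since $a_j\le N$ forces $a_j-1<N$, we have $(a_j-1,j)<(N,k)$, so the inductive hypothesis applies and yields $C_{a_j-1,j} = a_j(n-1) - \binom{n+j-1}{j+1}$ for each $j$ with $a_j\ge 1$.

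Summing this identity over $j$ and using $\sum_j a_j=N$ gives
\[ C(\aaa) \EQ Nn \;-\; \sum_{j=1}^k \binom{n+j-1}{j+1}\;. \]
The substitutions $i = j+1$ and $m=n-1$ rewrite the remaining sum as $\sum_{i=2}^{k+1}\binom{m+i-1}{i}$, which by Lemma~\ref{L-binom}(ii) equals $\binom{m+k+1}{k+1} - \binom{m-1}{0} - \binom{m}{1} = \binom{n+k}{k+1} - 1 - (n-1) = \binom{n+k}{k+1} - n$. Plugging back produces $C(\aaa) = Nn + n - \binom{n+k}{k+1} = (N+1)n - \binom{n+k}{k+1}$, as required.

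The only real obstacle is bookkeeping around the boundary. For $n\ge 2$ the lower bound $\binom{n+j-2}{j}\ge 1$ guarantees $a_j\ge 1$ for every $j$, so the restriction on the sum above does not bite. For $n=1$ some $a_j$ may be $0$, but then $\aaa$ is forced to be the unique vector with $N$ ones packed at the high end, $C(\aaa)=N$, and the claimed value $(N+1)\cdot 1 - \binom{k+1}{k+1}=N$ agrees directly; one can dispatch this case separately, or equivalently note that the algebraic formula $a_j(n-1)-\binom{n+j-1}{j+1}$ still evaluates to $0$ at $n=1$ for $a_j\in\{0,1\}$, so the same computation carries through. Once the parameter shift $n'=n-1$ is in place, the rest is a clean binomial manipulation.
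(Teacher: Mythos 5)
Your proof is correct and follows essentially the same route as the paper: decompose $C(\aaa)$ via Lemma~\ref{L-decompose}, apply the inductive hypothesis of Theorem~\ref{T-main}(i) to each $(a_j-1,j)$ with the parameter shift $n'=n-1$, and collapse the residual binomial sum via Lemma~\ref{L-binom}(ii). Your explicit treatment of the $n=1$ boundary (where some $a_j$ may be zero) is in fact slightly more careful than the paper's, which asserts $a_i\ge 1$ for all $i$ --- guaranteed only when $n\ge 2$ --- though the gap is harmless since the induction in the proof of Theorem~\ref{T-main} invokes this claim only for $n\ge 2$.
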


\begin{proof}
Suppose that $\aaa\in P_{N,k}$ and that $\binom{n+i-2}{i}\le a_i \le \binom{n+i-1}{i}$, for every $i\in[k]$. Then,
\begin{align*}
    C(\aaa) & \textstyle \EQ  N + \sum_{i=1}^k C_{a_i-1,i} \\
    & \textstyle \EQ N + \sum_{i=1}^k (a_i(n-1)-\binom{n+i-1}{i+1}) \\
    & \textstyle \EQ (N+1)n - ( 1 + (n-1) + \sum_{i=1}^k \binom{n+i-1}{i+1} ) \\
    & \textstyle \EQ (N+1)n - \sum_{j=0}^{k+1} \binom{n+j-2}{j} \EQ (N+1)n - \binom{n+k}{k+1} \;,
\end{align*}
In the first line we used Lemma~\ref{L-decompose}, using the fact that $a_i\ge 1$ for every $i\in[k]$. In the second line we used Theorem~\ref{T-main}(i) for $(a_i-1,i)<(N,k)$. The third line is obtained by a simple rearrangement. The fourth line follows by Lemma~\ref{L-binom}. Note that 
\[ \textstyle \sum_{j=0}^{k+1} \binom{n+j-2}{j} \EQ \sum_{j=0}^{k+1} \binom{(n-1)+j-1}{j} \EQ \binom{(n-1)+(k+1)}{k+1} \EQ \binom{n+k}{k+1}\; . 
\qedhere 
\]
\end{proof}

\begin{claim}\label{C-ij}
Suppose that $\binom{n+k-1}{k}-1\le N \le \binom{n+k}{k}-1$, for some $n\ge 1$, and that $\aaa\in P_{N,k}$.  \\
\textup{(i)} If $a_i<\binom{n+i-2}{i}$ for some $i\in [k]$, then there exists $j\in[k]$ such that $a_j>\binom{n+j-2}{j}$. \\
\textup{(ii)} If $a_j>\binom{n+j-1}{j}$ for some $j\in [k]$, then there exists $i\in[k]$ such that $a_i<\binom{n+i-1}{i}$. 
\end{claim}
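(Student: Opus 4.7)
The plan is to prove both parts by contradiction, using a simple summation together with the hockey-stick identity of Lemma~\ref{L-binom}(ii). The key preliminary observation is that, by applying Lemma~\ref{L-binom}(ii) and peeling off the $i=0$ term, one obtains two useful identities:
\[ \textstyle \sum_{i=1}^k \binom{n+i-2}{i} \EQ \binom{n+k-1}{k}-1 \qquad\text{and}\qquad \sum_{i=1}^k \binom{n+i-1}{i} \EQ \binom{n+k}{k}-1\; . \]
The first follows by taking Lemma~\ref{L-binom}(ii) with $n$ replaced by $n-1$ and subtracting the $i=0$ term $\binom{n-2}{0}=1$; the second uses the identity as written and subtracts $\binom{n-1}{0}=1$.

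For part (i), I will assume for contradiction that $a_j\le \binom{n+j-2}{j}$ for every $j\in[k]$, while the hypothesis gives some index $i$ with strict inequality $a_i<\binom{n+i-2}{i}$. Summing over $j\in[k]$ yields
\[ \textstyle N \EQ \sum_{j=1}^k a_j \;\le\; \Bigl(\sum_{j=1}^k \binom{n+j-2}{j}\Bigr) - 1 \EQ \binom{n+k-1}{k}-2\; , \]
which contradicts the assumed lower bound $N\ge \binom{n+k-1}{k}-1$. Note the constraint $a_i=0\Rightarrow a_{i-1}=0$ defining $P_{N,k}$ is not needed; only $\sum a_j = N$ is used.

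Part (ii) is entirely symmetric. Assume for contradiction that $a_i\ge \binom{n+i-1}{i}$ for every $i\in[k]$, while the hypothesis supplies some $j$ with $a_j>\binom{n+j-1}{j}$. Then
\[ \textstyle N \EQ \sum_{i=1}^k a_i \;\ge\; \Bigl(\sum_{i=1}^k \binom{n+i-1}{i}\Bigr) + 1 \EQ \binom{n+k}{k}\; , \]
contradicting the upper bound $N\le \binom{n+k}{k}-1$.

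There is no real obstacle here; the content of the claim is just a pigeonhole-style counting fact about sequences summing to~$N$, and the only nontrivial ingredient is locating the right instance of the hockey-stick identity. The mild care needed is the index shift in applying Lemma~\ref{L-binom}(ii), and tracking that the weak inequality $N\le\binom{n+k}{k}-1$ in the hypothesis becomes the strict inequality $N<\binom{n+k}{k}$ used in part~(ii).
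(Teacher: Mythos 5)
Your proof is correct and follows essentially the same route as the paper: argue by contradiction, sum the bounds on $a_1,\ldots,a_k$, and evaluate the sum by the hockey-stick identity of Lemma~\ref{L-binom}(ii) (with an index shift for part~(i)). The only cosmetic difference is that you make the role of the hypothesis's strict inequality explicit by subtracting or adding~$1$ to the sum, whereas the paper folds it into a strict $<$ or $>$ in the middle of the displayed chain; both yield the same contradiction with $\binom{n+k-1}{k}-1\le N\le \binom{n+k}{k}-1$.
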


\begin{proof}
(i) If $a_j\le \binom{n+j-2}{j}$ for every $j\in [k]$, then 
\[ \textstyle N \EQ \sum_{j=1}^k a_j \LT \sum_{j=1}^k \binom{n+j-2}{j} \EQ \binom{n+k-1}{k}-1 \;, \]
contradicting the definition of~$n$.

(ii) Similarly, if $a_i\ge \binom{n+i-1}{i}$ for every $i\in[k]$, then 
\[ \textstyle N \EQ \sum_{i=1}^k a_i \GT \sum_{i=1}^k \binom{n+i-1}{i} \EQ \binom{n+k}{k}-1 \;, \]
again contradicting the definition of~$n$.
\end{proof}

\begin{claim}\label{C-non-opt}
If Theorem~\textup{\ref{T-main}(ii)} holds for every $(N',k')< (N,k)$, $\binom{n+k-1}{k}-1\le N < \binom{n+k}{k}-1$, for some $n\ge 1$, and $\aaa\in P_{N,k}\setminus Q_{N,k}$, i.e., $\aaa\in P_{N,k}$ but $\binom{n+i-2}{i}\le a_i \le \binom{n+i-1}{i}$ is \emph{not} satisfied for some $i\in[k]$, then there exists $\bbb\in P_{N,k}$ such that $C(\aaa)>C(\bbb)$. As a consequence, $C(\aaa)>C_{N,k}$.
\end{claim}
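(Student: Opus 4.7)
The plan is to construct $\bbb \in P_{N,k}$ from $\aaa$ by a single one-item shift --- decrement some $a_s$ by one and increment some $a_d$ by one --- and verify $C(\bbb) < C(\aaa)$; the consequence $C_{N,k} \le C(\bbb) < C(\aaa)$ is then immediate. Combining Lemma~\ref{L-decompose}(ii)--(iii) with the inductive hypothesis that Theorem~\ref{T-main}(ii) holds for all $(M,k')<(N,k)$, the marginal cost $\mu_{k'}(M) := C_{M,k'}-C_{M-1,k'}$ equals the unique $n'$ with $\binom{n'+k'-1}{k'}\le M<\binom{n'+k'}{k'}$, and a direct computation gives
\[
C(\bbb)-C(\aaa) \EQ \begin{cases} \mu_d(a_d)-\mu_s(a_s-1) & \text{if } a_d\ge 1, \\ -\mu_s(a_s-1) & \text{if } a_d=0, \end{cases}
\]
whenever $a_s\ge 2$.

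Since $\aaa\notin Q_{N,k}$, at least one of the following two cases holds. \textbf{Case~A:} some $a_s>\binom{n+s-1}{s}$, so $a_s\ge 2$ (as $\binom{n+s-1}{s}\ge 1$ for $n\ge 1$) and $\mu_s(a_s-1)\ge n$, and Claim~\ref{C-ij}(ii) supplies a $d^*$ with $a_{d^*}<\binom{n+d^*-1}{d^*}$, which yields $\mu_{d^*}(a_{d^*})\le n-1$ whenever $a_{d^*}\ge 1$. \textbf{Case~B:} some $a_d<\binom{n+d-2}{d}$, which forces $n\ge 2$ and yields $\mu_d(a_d)\le n-2$ for $a_d\ge 1$, and Claim~\ref{C-ij}(i) supplies an $s^*$ with $a_{s^*}>\binom{n+s^*-2}{s^*}\ge 1$, so $a_{s^*}\ge 2$ and $\mu_{s^*}(a_{s^*}-1)\ge n-1$.

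The principal obstacle is keeping $\bbb\in P_{N,k}$, i.e., preserving the invariant that empty levels form a prefix; I handle it by case analysis on $i^*$, the smallest index with $a_{i^*}>0$. If $i^*=1$ no level is empty, every candidate $d$ from Claim~\ref{C-ij} has $a_d\ge 1$, the invariant is preserved automatically, and the first line of the cost formula gives $C(\bbb)-C(\aaa)\le -1$ in both cases. If $i^*\ge 2$ I instead take $d:=i^*-1$, so the shift only shrinks the empty prefix by one and the invariant is preserved; the second line of the cost formula then yields $C(\bbb)-C(\aaa)=-\mu_s(a_s-1)$, which is $\le -n\le -1$ in Case~A and $\le -(n-1)\le -1$ in Case~B (using $n\ge 2$ there). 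In every case $s\ne d$, since $a_s$ and $a_d$ lie on opposite sides of the relevant binomial threshold, so $\bbb$ differs from $\aaa$ and satisfies $C(\bbb)<C(\aaa)$, as required.
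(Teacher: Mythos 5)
Your proposal is correct and follows the same core strategy as the paper's proof: use the inductive hypothesis on Theorem~\ref{T-main}(ii) to bound marginal costs $\mu_{k'}(M)=C_{M,k'}-C_{M-1,k'}$, then exhibit a one-item shift $\bbb$ with $C(\bbb)<C(\aaa)$ by moving an item from a level $s$ whose marginal cost is provably $\ge n$ (resp.\ $\ge n-1$) to a level $d$ whose marginal cost is provably $\le n-1$ (resp.\ $\le n-2$), using Claim~\ref{C-ij} to guarantee the complementary index exists. The paper splits into Case~1 ($a_i<\binom{n+i-2}{i}$) and Case~2 ($a_j>\binom{n+j-1}{j}$), which are exactly your Cases~B and~A.

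Where you genuinely improve on the paper is in verifying $\bbb\in P_{N,k}$. The paper simply declares ``Let $\bbb\in P_{N,k}$ be such that $b_i=a_i+1$, $b_j=a_j-1$, \dots'' without checking the prefix-of-zeros invariant, and its cost identity $C(\bbb)-C(\aaa)=(C_{a_i,i}-C_{a_i-1,i})-(C_{a_j-1,j}-C_{a_j-2,j})$ silently assumes $a_i\ge 1$. If the level $i$ to be incremented happens to be an interior zero of the zero-prefix (e.g.\ $a_i=a_{i+1}=0$ with $i+1<i^*$), the paper's $\bbb$ violates the $P_{N,k}$ invariant. Your fix --- when $i^*\ge 2$, always redirect the increment to $d=i^*-1$, the last empty level, and note that the cost change is then $-\mu_s(a_s-1)$ which is still $\le -1$ in both cases --- is exactly what is needed and costs nothing. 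You also correctly observe that $a_s\ge 2$ is guaranteed in both cases (so the decremented level stays nonempty) and that $s\ne d$ always holds, completing the check that $\bbb\in P_{N,k}$. So: same decomposition, same key lemmas, but your version closes a small verification gap the paper leaves open.
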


\begin{proof}
There are two (non-exclusive) cases:

{\bf Case 1:} There exists $i\in [k]$ for which $a_i<\binom{n+i-2}{i}$.

By Claim~\ref{C-ij}(i), there exists $j\in [k]$ such that $a_j>\binom{n+j-2}{j}$. Let $\bbb=(b_1,b_2,\ldots,b_k)\in P_{N,k}$ be such that $b_i=a_i+1$, $b_j=a_j-1$, and $b_\ell=a_\ell$ for $\ell\in [k]\sm\{i,j\}$. As Theorem~\ref{T-main}(ii) is assumed to hold for $(a_i,i)$ and $(a_{j}-1,j)$, we have $C_{a_i,i}-C_{a_i-1,i} \le n-2$ while $C_{a_j-1,i}-C_{a_j-1,i} \ge n-1$. Thus
\[ C(\bbb)-C(\aaa) \EQ (C_{a_i,i}-C_{a_i-1,i}) - (C_{a_j-1,j}-C_{a_j-2,j}) \LT 0\;.\]

{\bf Case 2:} There exists $j\in [k]$ for which $a_j>\binom{n+j-1}{j}$.

By Claim~\ref{C-ij}(ii), there exists $i\in[k]$ such that $a_i<\binom{n+i-1}{i}$. Again, let $\bbb=(b_1,b_2,\ldots,b_k)\in P_{N,k}$ be such that $b_i=a_i+1$, $b_j=a_j-1$, and $b_\ell=a_\ell$ for $\ell\in [k]\sm\{i,j\}$. As Theorem~\ref{T-main}(ii) is assumed to hold for $(a_i,i)$ and $(a_{j}-1,j)$, we have $C_{a_i,i}-C_{a_i-1,i} \le n-1$ while $C_{a_j-1,i}-C_{a_j-1,i} \ge n$. Thus,
\[ C(\bbb)-C(\aaa) \EQ (C_{a_i,i}-C_{a_i-1,i}) - (C_{a_j-1,j}-C_{a_j-2,j}) \LT 0\;. 
\qedhere 
\]
\end{proof}

We now have all the pieces needed to prove Theorem~\ref{T-main} by induction.

\begin{proof}[Proof of Theorem~\textup{\ref{T-main}}]
The basis of the induction is established by Claim~\ref{C-basis}, which shows that the theorem holds for $0\le N\le k$. We thus need to show that if the theorem holds for every $(N',k')<(N,k)$, for some $N>k$, then it also holds for $(N,k)$. Note that if $N>k$ and $N\le \binom{n+k}{k}$, then $n\ge 2$. Claim~\ref{C-CNk} shows that for every $\aaa\in Q_{N,k}$ we have $C(\aaa)=(N+1)n - \binom{n+k}{k+1}$. Claim~\ref{C-non-opt} shows that if $\aaa\in P_{N,k}\setminus Q_{N,k}$ then $C(\aaa)>C_{N,k}$. It thus follows that $C_{N,k}=(N+1)n - \binom{n+k}{k+1}$, establishing Theorem~\ref{T-main}(i) and (iii) for $(N,k)$. Claim~\ref{C-diff} then establishes Theorem~\ref{T-main}(ii). This completes the proof of the induction step and hence the proof of Theorem~\ref{T-main}.
\end{proof}

A convenient corollary of Theorem~\ref{T-main} is the following: 

\begin{corollary}\label{C-main}
\textup{(i)} If $N=\binom{n+k}{k}-1$, then $C_{N,k}=\frac{kn}{k+1}(N+1)$.\\
\textup{(ii)} If $\binom{n+k-1}{k}\le N\le \binom{n+k}{k}-1$, then $C_{N,k}\ge \frac{k}{k+1}(n-1)N$ and thus $A_{N,k}\ge \frac{k}{k+1}(n-1)\ge \frac{1}{2}(n-1)$.
\end{corollary}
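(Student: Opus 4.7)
The plan is to derive both parts of the corollary directly from Theorem~\ref{T-main}(i), using only the two elementary binomial identities
\[\textstyle \binom{n+k}{k+1} \EQ \frac{n}{k+1}\binom{n+k}{k}\quad\text{and}\quad \binom{n+k}{k}\EQ\frac{n+k}{n}\binom{n+k-1}{k}.\]

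For part~(i), I would set $N=\binom{n+k}{k}-1$, so $N+1=\binom{n+k}{k}$. Theorem~\ref{T-main}(i) applies with this~$n$, giving $C_{N,k}=(N+1)n-\binom{n+k}{k+1}$. The first identity above turns the subtracted binomial into $\frac{n}{k+1}(N+1)$, and factoring~$(N+1)n$ yields $C_{N,k}=(N+1)n\bigl(1-\frac{1}{k+1}\bigr)=\frac{kn}{k+1}(N+1)$, as claimed.

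For part~(ii), the hypothesis $\binom{n+k-1}{k}\le N\le \binom{n+k}{k}-1$ places $N$ in the range where Theorem~\ref{T-main}(i) gives $C_{N,k}=(N+1)n-\binom{n+k}{k+1}$. To obtain the lower bound I need an \emph{upper} bound on $\binom{n+k}{k+1}$ in terms of~$N$. Combining the two identities above,
\[\textstyle \binom{n+k}{k+1}\EQ \frac{n}{k+1}\binom{n+k}{k}\EQ \frac{n+k}{k+1}\binom{n+k-1}{k}\LE \frac{n+k}{k+1}\,N,\]
using the hypothesis $\binom{n+k-1}{k}\le N$ in the last step. Substituting and simplifying,
\[\textstyle C_{N,k}\GE (N+1)n-\frac{n+k}{k+1}N \EQ \frac{n(k+1)-(n+k)}{k+1}N+n \EQ \frac{k(n-1)}{k+1}N+n\GE \frac{k(n-1)}{k+1}N.\]
Dividing by~$N$ gives $A_{N,k}\ge \frac{k(n-1)}{k+1}$, and since $k\ge 1$ implies $\frac{k}{k+1}\ge \frac12$, the final bound $A_{N,k}\ge \frac12(n-1)$ follows.

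There is no real obstacle here: everything reduces to careful binomial bookkeeping and a single monotonicity step ($N\ge\binom{n+k-1}{k}$). The only mild subtlety is making sure the case $N=\binom{n+k}{k}-1$ of part~(i) is consistent with part~(ii) under either choice of~$n$ permitted by the range convention of Theorem~\ref{T-main}(i); the remark immediately following that theorem already resolves this, so no extra work is needed.
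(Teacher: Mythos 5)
Your part (i) is identical to the paper's computation. Part (ii), however, takes a genuinely different route: the paper proves (ii) by induction on~$N$ within the band $\binom{n+k-1}{k}\le N<\binom{n+k}{k}$, using part~(i) at the left endpoint $N=\binom{n+k-1}{k}$ (via $C_{N-1,k}=\frac{k(n-1)}{k+1}N$) and Theorem~\ref{T-main}(ii) for the step $C_{N,k}=C_{N-1,k}+n$. You instead derive the bound in one shot from the explicit formula of Theorem~\ref{T-main}(i) by bounding $\binom{n+k}{k+1}=\frac{n+k}{k+1}\binom{n+k-1}{k}\le\frac{n+k}{k+1}N$ and simplifying. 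Both arguments are correct; your derivation is self-contained and avoids invoking Theorem~\ref{T-main}(ii) or any induction, while the paper's inductive version makes transparent the mechanism that each grow in this band costs exactly~$n$. One tiny point you should make explicit, as the paper does: the identity $\binom{n+k}{k}=\frac{n+k}{n}\binom{n+k-1}{k}$ presupposes $n\ge 1$; for $n=0$ the hypothesis forces $N=0$ and the claim is vacuous, so restricting to $n\ge 1$ costs nothing.
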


\begin{proof}
To prove (i) note that if $N=\binom{n+k}{k}-1$ then
\[\textstyle C_{N,k} \EQ \binom{n+k}{k}n-\binom{n+k}{k+1} \EQ \binom{n+k}{k}n-\frac{n}{k+1}\binom{n+k}{k} \EQ \frac{kn}{k+1}\binom{n+k}{k} \EQ \frac{kn}{k+1}(N+1)\;. \]
For each $n\ge 1$ we prove (ii) by induction on~$N$. If $N=\binom{n+k-1}{k}$ then, $C_{N,k}>C_{N-1,k}=\frac{k(n-1)}{k+1}N$. Now, if $\binom{n+k-1}{k}<N<\binom{n+k}{k}$, then by Theorem~\ref{T-main}(ii), 
\[\textstyle C_{N,k} \EQ C_{N-1,k}+n \GE \frac{k(n-1)}{k+1}(N-1)+n \GE \frac{k}{k+1}(n-1)N\;.
\qedhere
\]
\end{proof}

\subsection{Playing optimally - the binomial counter.}\label{sub-play}

An optimal solution of the $(N,k)$-growth game, i.e., a sequence of moves of minimum total cost, can be easily reconstructed with the help of Theorem~\ref{T-main} and Lemma~\ref{L-decompose}. In general such sequences are not unique. The optimal sequence is unique only if $N=\binom{n+k}{k}-1$ for some $n\ge 0$. 

\begin{lemma}
The $(N,k)$-growth game has a unique optimal sequence if and only if $N=\binom{n+k}{k}-1$ for some $n\ge 0$. 
\end{lemma}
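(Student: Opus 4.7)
The plan is to prove both directions of the biconditional by strong induction on $N$, leveraging Lemma~\ref{L-decompose} and Theorem~\ref{T-main}(iii).

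For the direction $N = \binom{n+k}{k}-1 \Rightarrow$ unique, the base case $N=0$ (i.e., $n=0$) is trivial. For the inductive step, Theorem~\ref{T-main}(iii) combined with the identity $\sum_{i=1}^k \binom{n+i-1}{i} = \binom{n+k}{k}-1$ pins down the unique optimal end-state $\aaa^* = (\binom{n}{1},\binom{n+1}{2},\ldots,\binom{n+k-1}{k})$. By Lemma~\ref{L-decompose}(i) and (iii), any optimal sequence must factor into three stages: an optimal play of the smaller $(a_k^*-1, k)$-game to reach some state $\bbb$ with $\sum b_j = \binom{n+k-1}{k}-1$, a forced merge at $i = k$ producing $(0,\ldots,0,a_k^*)$, and an optimal play of the $(N-a_k^*, k-1)$-sub-game on $A_1,\ldots,A_{k-1}$. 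Since both $a_k^*-1 = \binom{(n-1)+k}{k}-1$ and $N-a_k^* = \binom{n+(k-1)}{k-1}-1$ are again of the target form with strictly smaller parameters, the inductive hypothesis applies to both sub-games, making the concatenated full sequence unique.

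For the reverse direction, suppose $\binom{n+k-1}{k} \le N \le \binom{n+k}{k}-2$ for some $n \ge 1$. In the \emph{generic} regime $n \ge 2$ (where the zeros-at-front constraint is non-binding, since $\binom{n+i-2}{i}\ge 1$ for every $i$), I would construct two distinct elements $\aaa,\bbb \in Q_{N,k}$ by exploiting the available slack $s = N - \binom{n+k-1}{k} + 1 \in [1, \binom{n+k-1}{k-1} - 1]$: shifting one unit of slack between two coordinates that are not at their extremes yields two valid optimal end-states in $P_{N,k}$, and hence two distinct optimal sequences.

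The main obstacle lies in the edge regime $n = 1$ with $1 \le N \le k-1$, where $|Q_{N,k}| = 1$ because the $P_{N,k}$-constraint forces the unique representative $(0,\ldots,0,1,\ldots,1)$. Here non-uniqueness must come instead from distinct choices of the merge index $i$ at early moves. I would argue that in this regime every optimal sequence corresponds to a strictly decreasing chain $i_1 > i_2 > \cdots > i_N$ in $[k]$ (at each step, case (c) with an index smaller than the current smallest non-empty position adds one new $1$ at cost $1$, while any other choice either merges two $1$s or incurs extra cost), so the optimal sequences are in bijection with $N$-subsets of $[k]$, giving $\binom{k}{N}$ of them. This count equals $1$ precisely when $N = k = \binom{1+k}{k}-1$; for $1 \le N < k$ we have $\binom{k}{N} \ge 2$, yielding non-uniqueness and closing the argument.
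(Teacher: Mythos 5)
Your forward direction and the generic ($n\ge 2$) case of the reverse direction essentially mirror the paper: the paper observes that $Q_{N,k}$ (the set of optimal final states characterized by Theorem~\ref{T-main}(iii)) is a singleton exactly when $N=\binom{n+k}{k}-1$, and then invokes Lemma~\ref{L-decompose} to conclude that the intermediate states, and hence the whole sequence, are forced. Your three-stage decomposition into the $(a_k^*-1,k)$-game, the merge at $i=k$, and the $(N-a_k^*,k-1)$-subgame is a more explicit version of the same induction.

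Your flagging of the $n=1$ regime is astute and in fact exposes a real hole in the paper's own one-line argument: for $1\le N\le k-1$, the constraints $0\le a_i\le 1$ together with the zeros-as-prefix requirement on $P_{N,k}$ do force $|Q_{N,k}|=1$, so the paper's assertion that ``$|Q_{N,k}|>1$ as there is some freedom'' fails there. However, your proposed repair does not work. The game rules explicitly remove the player's choice whenever some subarray is empty: a move of type (b) places the new item in the \emph{last} empty subarray, i.e.\ it is the type (c) move with $i$ forced to the largest index for which $a_i=0$; the paper says in so many words that ``in these first two cases the player of the game has no real choice,'' and only ``when all $k$ subarrays are full'' do $k$ moves become available. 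Equivalently, every intermediate state must again lie in $P_{N',k}$, so inserting the $1$ at any position other than the unique forced one would produce a configuration with a zero sandwiched between nonzeros, which is not a legal state. Consequently for $n=1$, $1\le N\le k-1$, the optimal sequence is $(0,\ldots,0)\to(0,\ldots,0,1)\to\cdots\to(0,\ldots,0,1,\ldots,1)$ and is in fact \emph{unique}, not in bijection with $N$-subsets of $[k]$. So your observation uncovers a genuine defect in the lemma (and its proof) for these boundary values, rather than a gap that a $\binom{k}{N}$-counting argument can close.
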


\begin{proof}
Recall that $Q_{N,k}$ is the set of optimal final states. We first show that $|Q_{N,k}|=1$ if and only if $N=\binom{n+k}{k}-1$ for some $n\ge 0$. This follows easily as $\sum_{i=1}^k \binom{n-i+1}{i}=\binom{n+k}{k}-1=N$. Thus, if $N=\binom{n+k}{k}-1$, then the only optimal final state is $a_i=\binom{n-i+1}{i}$, for $i\in[k]$. If~$N$ is not of this form then $|Q_{N,k}|>1$ as there is some freedom in choosing the individual $a_i$'s. Finally, if $N=\binom{n+k}{k}-1$, then by Lemma~\ref{L-decompose}, all intermediate states are also of this form and the optimal sequence is unique.
\end{proof}

\newcommand{\Initialize}{\mbox{\it Initialize}}
\newcommand{\Increment}{\mbox{\it Increment}}

The unique optimal sequence of moves for $N=\binom{n+k}{k}-1$ can be expressed using a \emph{binomial counter} whose pseudocode is given in Figure~\ref{P-binomial-counter}. In addition to $(a_1,a_2,\ldots,a_k)$, the sizes of the subarrays, we also keep a counter $(b_1,b_2,\ldots,b_k)$. $\Initialize(k)$ sets $a_1,a_2,\ldots,a_k$ and $b_1,b_2,\ldots,b_k$ to~$0$. It also sets $b_{k+1}$ to $\infty$. $\Increment(k)$ increments the counter and performs a move in the $(N,k)$-growth game as follows: it finds the smallest~$i$ for which $b_i<b_{i+1}$. (It will follow, as we shall see, that $b_1=b_2=\cdots=b_i<b_{i+1}$.) It then performs the $i$-th move in the game, i.e., $a_i\gets 1 + \sum_{j=1}^i a_j$ while $a_1,a_2,\ldots,a_{i-1}\gets 0$. (This corresponds to merging $A_1,A_2,\ldots,A_i$.) It then increments the counter by letting $b_i\gets b_i+1$ and $b_1,b_2,\ldots,b_{i-1}\gets 0$. An example of a binomial counter in action, for $k=4$, is given in Figure~\ref{F-binomial-counter}.

\begin{lemma}\label{L-counter-1}
At any time, $a_i=\binom{b_i+i-1}{i}$ for every $i\in[k]$.
\end{lemma}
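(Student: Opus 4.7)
The plan is to prove the lemma by induction on the number of \Increment\ operations performed. The base case (before any increment) is immediate: all $b_i=0$ and $a_i=0$, and indeed $\binom{i-1}{i}=0$ for every $i\in[k]$. The inductive step is where the work happens, and it will rely on an auxiliary invariant about the counter together with the Vandermonde-type identity in Lemma~\ref{L-binom}(ii).

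The first task is to establish the auxiliary invariant that $b_1\le b_2\le\cdots\le b_k$ is maintained by \Increment. Assuming it holds, if $i$ is the smallest index with $b_i<b_{i+1}$, then for every $j<i$ the choice of $i$ gives $b_j\ge b_{j+1}$, while the invariant gives $b_j\le b_{j+1}$; hence $b_1=b_2=\cdots=b_i<b_{i+1}$. After \Increment\ sets $b_1,\ldots,b_{i-1}\gets 0$ and $b_i\gets b_i+1$, the resulting sequence is still non-decreasing (the prefix of zeros lies below the old common value $b_i$, and the new $b_i=b_i^{old}+1\le b_{i+1}^{old}=b_{i+1}$). So the monotonicity invariant is preserved, which in turn validates the parenthetical ``$b_1=b_2=\cdots=b_i<b_{i+1}$'' in the description of the binomial counter.

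The main step is then to verify $a_i=\binom{b_i+i-1}{i}$ after the increment. For indices $j<i$ we have new $a_j=0$ and new $b_j=0$, so $\binom{j-1}{j}=0=a_j$. For indices $j>i$ both $a_j$ and $b_j$ are unchanged and the identity holds by the induction hypothesis. For the index $i$ itself, let $b$ denote the common old value $b_1=\cdots=b_i$. Using the induction hypothesis on each $a_j$ for $j\le i$,
\[
a_i^{\text{new}} \EQ 1+\sum_{j=1}^{i} a_j^{\text{old}} \EQ 1+\sum_{j=1}^{i}\binom{b+j-1}{j}\;.
\]
By Lemma~\ref{L-binom}(ii) applied with $n=b$ and $k=i$, $\sum_{j=0}^{i}\binom{b+j-1}{j}=\binom{b+i}{i}$, and the $j=0$ term is $\binom{b-1}{0}=1$. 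Therefore $a_i^{\text{new}}=\binom{b+i}{i}$, which equals $\binom{b_i^{\text{new}}+i-1}{i}$ since $b_i^{\text{new}}=b+1$. This completes the induction.

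The only nontrivial obstacle is the auxiliary monotonicity invariant; everything else is a one-line identity calculation. Because the invariant is itself asserted (implicitly) by the pseudocode description, it is natural to prove both statements simultaneously by a single induction on the number of increments, verifying the invariant and the binomial formula in lockstep.
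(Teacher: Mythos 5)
Your proof is correct and follows essentially the same route as the paper's: induction on the number of \Increment\ operations, with the base case $\binom{i-1}{i}=0$ and the telescoping identity of Lemma~\ref{L-binom}(ii) doing the work at the merge index. The only difference is that you explicitly establish the monotonicity invariant $b_1\le\cdots\le b_k$ to justify $b_1=\cdots=b_i<b_{i+1}$, whereas the paper merely asserts this as a consequence of the pseudocode; this is a useful tightening of a detail, not a different argument.
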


\begin{proof}
By induction on the number of steps of the counter. The condition is satisfied initially, since $\binom{i-1}{i}=0$ for $i\in [k]$. Suppose that the claim holds for the current $(a_1,a_2,\ldots,a_k)$ and $(b_1,b_2,\ldots,b_k)$ and that $b_1=b_2=\cdots=b_i=n<b_{i+1}$. Let $a'i=1+\sum_{j=1}^i a_j$ and $b'_i=b_i+1$ be the new values of~$a_i$ and~$b_i$. By the induction hypothesis and Lemma~\ref{L-binom}(ii) we get
\[ \textstyle a'_i \EQ 1+\sum_{j=1}^i a_j \EQ 1 + \sum_{j=1}^i \binom{n+i-1}{i} \EQ \sum_{j=0}^i \binom{n+i-1}{i} \EQ \binom{n+i}{i} \EQ \binom{b'_i+i-1}{i} \;. 
\qedhere 
\]
\end{proof}

We thus get:

\begin{lemma}\label{L-counter-2}
If $b_1=b_2=\cdots b_k=n$ for some $n\ge 0$, then $N=\sum_{i=1}^k a_i = \binom{n+k-1}{k}-1$ and the sequence of moves of the binomial counter is the unique optimal solution of the $(N,k)$-growth game.
\end{lemma}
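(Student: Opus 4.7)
The plan is to establish the two assertions of the lemma in turn.

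First I would compute $N$ using Lemma~\ref{L-counter-1}: when $b_1=\cdots=b_k=n$, that lemma gives $a_i=\binom{n+i-1}{i}$, and Lemma~\ref{L-binom}(ii) (applied after a shift of the summation index) yields
\[
N \;=\; \sum_{i=1}^k \binom{n+i-1}{i} \;=\; \binom{n+k}{k}-1.
\]
(The coefficient $\binom{n+k-1}{k}-1$ in the stated lemma appears to be a typo; the surrounding discussion and the immediately preceding lemma both take the critical values to be $\binom{n+k}{k}-1$.)

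Next I would identify the final state $\aaa=(\binom{n}{1},\binom{n+1}{2},\ldots,\binom{n+k-1}{k})$ as the unique optimal state. Each coordinate hits the upper endpoint $\binom{n+i-1}{i}$ of the interval in Definition~\ref{D-Q}, so $\aaa\in Q_{N,k}$; and since the sum of these upper endpoints is exactly $N$, no coordinate has any slack, so $\aaa$ is the \emph{unique} element of $Q_{N,k}$. Theorem~\ref{T-main}(iii) then gives $C(\aaa)=C_{N,k}$.

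To show that the counter actually produces this unique optimal play, I would argue by double induction on $(n,k)$ (say on $n+k$). The bases $n=0$ and $k=0$ are vacuous. For the inductive step, the hypothesis at $(n-1,k)$ says that the first $\binom{n+k-1}{k}-1$ increments of the counter form the unique optimal play for the sub-game terminating at $b=(n-1,\ldots,n-1)$. The next increment is forced to pick $i=k$ (since $b_{k+1}=\infty$ is then the only strictly larger neighbour), executing the unique ``big merge'' last move isolated by Lemma~\ref{L-decompose}(iii) and creating a fresh $A_k$ of size $\binom{n+k-1}{k}$. The remaining $\binom{n+k-1}{k-1}-1$ increments keep $b_k$ pinned at $n$ and therefore run the binomial counter on the first $k-1$ positions with target $n$; by the hypothesis at $(n,k-1)$ this is the unique optimal play for the induced $(k-1)$-position sub-game. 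Lemma~\ref{L-decompose}(i), which factors $C_k(\aaa)$ as $C_k(0,\ldots,0,a_k)+C_{k-1}(a_1,\ldots,a_{k-1})$, then glues the two optimal sub-plays into an optimal play for the full $(N,k)$-game; combined with the preceding uniqueness lemma for $N=\binom{n+k}{k}-1$, this identifies the counter's output with \emph{the} unique optimal sequence.

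The main technical obstacle is the bookkeeping invariant for the second phase: I would verify by a side induction on the number of intermediate moves that so long as some $b_j<n$ with $j<k$, the smallest index $i$ with $b_i<b_{i+1}$ satisfies $i\le k-1$, so that $b_k$ is never incremented prematurely. This invariant is exactly what synchronizes the counter's smallest-$i$ rule with the recursive decomposition of Lemma~\ref{L-decompose} and lets the two inductive hypotheses be invoked cleanly.
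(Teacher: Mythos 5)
Your proof is correct, and it takes the route the paper intends but leaves implicit: the paper presents Lemma~\ref{L-counter-2} as a ``We thus get'' corollary of Lemma~\ref{L-counter-1} and the immediately preceding unnumbered uniqueness lemma, giving no detailed argument, and your double induction on $(n,k)$ synchronizing the counter's smallest-index rule with the recursive decomposition of Lemma~\ref{L-decompose} is precisely the missing glue. You are also right that the stated value $\binom{n+k-1}{k}-1$ is a typo for $\binom{n+k}{k}-1$, which is what Lemma~\ref{L-counter-1} together with Lemma~\ref{L-binom}(ii) gives and what the surrounding discussion uses.

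One small point of care: Theorem~\ref{T-main}(iii) as stated requires the strict inequality $N<\binom{n+k}{k}-1$, so at the boundary value $N=\binom{n+k}{k}-1$ you should either invoke it with parameter $n+1$ (so the box becomes $\binom{n+i-1}{i}\le a_i\le\binom{n+i}{i}$, whose lower endpoints sum to exactly $N$) or work directly from Definition~\ref{D-Q} with parameter $n$ (where the upper endpoints sum to exactly $N$); either way the unique optimal final state is the same corner $(\binom{n}{1},\ldots,\binom{n+k-1}{k})$. The bookkeeping invariant you flag (that $b_k$ is never incremented prematurely in the second phase) follows from the easy observation that the counter's $b$-vector is always weakly increasing, so as long as some $b_j<n$ with $j<k$, the smallest index with $b_i<b_{i+1}$ is strictly less than $k$.
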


\begin{figure}[t]
    \centering
    \parbox{2in}{
    \begin{algorithm}[H]
    \Fn{$\Initialize(k)$}
    {
        $a_1,a_2,\ldots,a_k\gets 0$ \;
        $b_1,b_2,\ldots,b_k\gets 0$ \;
        $b_{k+1}\gets \infty$ \;
    }
    \end{algorithm}
    }
    \parbox{3in}{
    \begin{algorithm}[H]
    \Fn{$\Increment(k)$}
    {
        $i\gets \min\{ j\in[k] \mid b_j<b_{j+1} \}$ \;
        \BlankLine
        $a_i\gets 1 + \sum_{j=1}^i a_i$ \;
        $b_i\gets b_i+1$ \;
        \BlankLine
        $a_1,a_2,\ldots,a_{i-1}\gets 0$ \;
        $b_1,b_2,\ldots,b_{i-1}\gets 0$ \;
    }
    \end{algorithm}
    }
    \caption{Pseudocode of the binomial counter.}
    \label{P-binomial-counter}
\end{figure}

\begin{figure}[t]
\centering
\begin{tabular}{ccc}
\includegraphics[scale=0.45,valign=b]{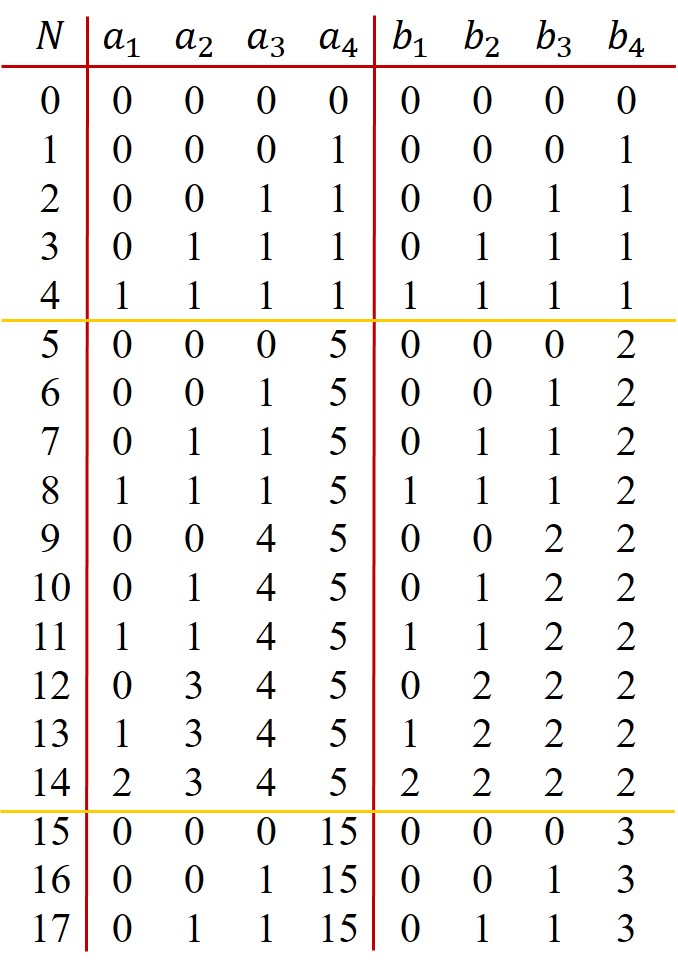} & \hspace*{0cm} &
\includegraphics[scale=0.45,valign=b]{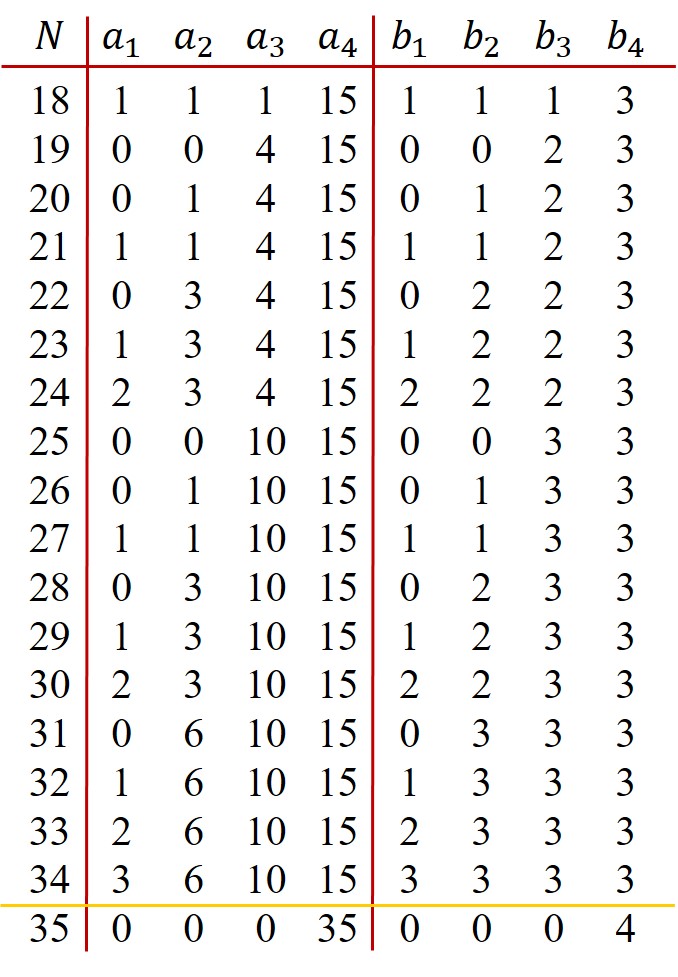}
\end{tabular}
\caption{The binomial counter for $k=4$.}\label{F-binomial-counter}
\end{figure}

\subsection{Extended rules.}\label{sub-extend-rule}

In this section we show that the rules of the $(N,k)$-growth game can be slightly extended without changing the optimal values. 
The third rule of the game, shown in Figure~\ref{F-growth-game}(c) can be written as $(a_1,a_2,\ldots,a_k) \to (0,\ldots,0,1+\sum_{\ell=1}^i a_\ell,a_{j+1},\ldots,a_k)$, for some $i\in[k]$. We first allow a move $(a_1,a_2,\ldots,a_k) \to (0,\ldots,0,a_1,\ldots,a_{i-1},\sum_{\ell=i}^j a_\ell,a_{j+1},\ldots,a_k)$, for $1\le i<j\le k$, in which $A_i,A_{i+1},\ldots,A_j$ are merged into a new $A_j$, the subarrays $A_1,\ldots,A_{i-1}$ become $A_{j-i+1},\ldots,A_{j-1}$ and $A_1,\ldots,A_{j-i}$ become empty. (No new item is added by this rule.) Many of the algorithms described in the paper use such moves. The cost of the move is $\sum_{\ell=i}^j a_\ell$. Note that the order of the items in the subarrays is still maintained.

\begin{lemma}\label{L-extend-1}
Allowing moves $(a_1,a_2,\ldots,a_k) \to (0,\ldots,0,a_1,a_2,\ldots,a_{i-1},\sum_{\ell=i}^j a_\ell,a_{j+1},\ldots,a_k)$, for $1\le i<j\le k$, does not change the optimal values of the $(N,k)$-growth game.
\end{lemma}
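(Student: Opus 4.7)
The plan is to establish the stronger claim $C^{\text{ext}}_k(\aaa) = C_k(\aaa)$ for every $\aaa\in P_{N,k}$ by strong induction on $N$ (with a secondary induction on $k$), so that minimizing over $\aaa$ yields the lemma. The direction $C^{\text{ext}}_k(\aaa)\le C_k(\aaa)$ is immediate since the extended game strictly contains the original. For the reverse direction, given an optimal extended play $P$ reaching $\aaa$, I would first observe that any trailing extended move contributes non-negative cost while leaving the number of grows fixed, so WLOG the last move of $P$ is a grow. Let $m_*$ denote the last move of $P$ modifying $A_k$ (which exists when $a_k>0$, i.e.\ $N>0$), and decompose $P=P_1\circ m_*\circ P_2$, where $P_2$ touches only $A_1,\ldots,A_{k-1}$.

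In Case~1, $m_*$ is a grow (necessarily allocating a new $A_k$). Then $P_1$ reaches some $(b_1,\ldots,b_k)$ with $a_k-1$ items, the post-$m_*$ state is $(0,\ldots,0,a_k)$, and $P_2$ takes the first $k-1$ subarrays from empty to $(a_1,\ldots,a_{k-1})$. Applying the induction hypothesis to $P_1$ (smaller $N$) and to $P_2$ (smaller $k$) yields
\[
\text{cost}(P)\ge C_{a_k-1,k}+a_k+C_{k-1}(a_1,\ldots,a_{k-1})=C_k(\aaa)
\]
via Lemma~\ref{L-decompose}.

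In Case~2, $m_*$ is an extended move $(i,k)$. The crucial observation is that because the last move of $P$ is a grow strictly later than $m_*$, $P_2$ contains at least one grow, so the pre-$m_*$ state $(b_1,\ldots,b_k)$ has only $B<N$ items and the induction hypothesis still gives $\text{cost}(P_1)\ge C_k(b_1,\ldots,b_k)$. For $P_2$, which starts from the non-empty state $(0,\ldots,0,b_1,\ldots,b_{i-1})$ on $k-1$ subarrays and ends at $(a_1,\ldots,a_{k-1})$, I would use a ``prepending'' argument: concatenating an optimal extended play reaching $(0,\ldots,0,b_1,\ldots,b_{i-1})$ from empty with $P_2$ produces an extended play ending at $(a_1,\ldots,a_{k-1})$, which by induction costs at least $C_{k-1}(a_1,\ldots,a_{k-1})$, giving $\text{cost}(P_2)\ge C_{k-1}(a_1,\ldots,a_{k-1})-C_{k-1}(0,\ldots,0,b_1,\ldots,b_{i-1})$. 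Combining with $\text{cost}(m_*)=a_k$, the bound $\text{cost}(P)\ge C_k(\aaa)$ reduces to the algebraic inequality
\[
C_k(b_1,\ldots,b_k)-C_{k-1}(0,\ldots,0,b_1,\ldots,b_{i-1})\ge C_{a_k-1,k}.
\]

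I would verify this last step---the main technical obstacle---by expanding both terms via Lemma~\ref{L-decompose}(ii), after which the left side splits as $\sum_{j=1}^{i-1}[C_j(0,\ldots,0,b_j)-C_{k-i+j}(0,\ldots,0,b_j)]+\sum_{j=i}^{k}C_j(0,\ldots,0,b_j)$. The first sum is nonnegative because $C_{N',k'}$ is non-increasing in $k'$ (additional subarrays can only reduce cost), while the second sum equals $C_k(0,\ldots,0,b_i,\ldots,b_k)\ge C_{a_k,k}\ge C_{a_k-1,k}$ by monotonicity of $C_{N',k}$ in $N'$, using $\sum_{j=i}^{k}b_j=a_k$. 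The subtlety that I would take care to rule out is the degenerate Case~2 scenario where $P_2$ consists entirely of extended moves (the $B=N$ situation), which would block the induction on $N$ for $P_1$; the WLOG reduction ensuring the last move of $P$ is a grow is precisely what forces $B<N$ and avoids this trap.
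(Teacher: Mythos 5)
The approach you take is genuinely different from the paper's. The paper proves a single inequality on states,
\[
C_k\bigl(0,\ldots,0,a_1,\ldots,a_{i-1},\textstyle\sum_{\ell=i}^j a_\ell,a_{j+1},\ldots,a_k\bigr)\LE C_k(a_1,\ldots,a_k)+\sum_{\ell=i}^j a_\ell,
\]
which says directly that performing an extended move never shortens the standard route to the resulting state; the lemma then follows by an easy induction over moves. You instead set up an induction on $N$ (and $k$), decompose a play by the last move touching $A_k$, and handle the two cases by a ``prepending'' trick together with the algebraic inequality $C_k(\bbb)-C_{k-1}(0,\ldots,0,b_1,\ldots,b_{i-1})\ge C_{a_k-1,k}$, which you verify correctly via Lemma~\ref{L-decompose}(ii), monotonicity of $C_{N',k'}$ in $k'$, and monotonicity in $N'$.

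The gap is in the WLOG reduction ``the last move of $P$ is a grow.'' You are proving the \emph{per-state} equality $C^{\text{ext}}_k(\aaa)=C_k(\aaa)$ for a fixed $\aaa$, by induction on $N$. But removing a trailing extended move changes the final state to some $\aaa'$ with the \emph{same} $N$, so the induction hypothesis gives you nothing about $\aaa'$, and the state $\aaa$ you started with is no longer the state your shortened play reaches. The WLOG is therefore circular as stated. This matters because the $B=N$ degenerate subcase of Case~2 (which you correctly flag) is precisely the situation where the last move is an extended move; there, $P_1$ reaches a state with $N$ items and the induction on $N$ is blocked. The usual fix is to add the number of moves (or the number of extended moves) as a secondary, lexicographic induction parameter: then $P_1$ is a strictly shorter play even when $B=N$, and the argument goes through. (Equivalently, one can first establish the paper's state-level inequality, after which induction on play length is immediate.) Aside from this, the case analysis and the algebraic verification are sound, and with the corrected induction your argument would constitute a valid, if more elaborate, alternative proof.
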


\begin{proof} We show that for every $\aaa\in P_{N,k}$ and every $1\le i<j\le k$ we have
\[\textstyle C_k(0,\ldots,0,a_1,a_2,\ldots,a_{i-1},\sum_{\ell=i}^j a_\ell,a_{j+1},\ldots,a_k) \LE  C_k(a_1,a_2,\ldots,a_k) + \sum_{\ell=i}^j a_\ell \;, \]
showing that the new moves do not offer any advantage. By Lemma~\ref{L-decompose} we have $C_k(a_1,a_2,\ldots,a_k)=C_j(a_1,a_2,\ldots,a_j)+C_k(0,\ldots,0,a_{j+1},\ldots,a_k)$ and similarly for the term appearing on the left hand side of the inequality. It is thus enough to prove the inequality for $j=k$. Now
\begin{align*}
    & \textstyle C_k(0,\ldots,0,a_1,a_2,\ldots,a_{i-1},\sum_{\ell=i}^k a_\ell) \\
    \LE\; & \textstyle C_{k-1}(0,\ldots,0,a_1,a_2,\ldots,a_{i-1}) + C_k(0,\ldots,0,\sum_{\ell=i}^k a_\ell) \\
    \LE\; & \textstyle C_{k-1}(0,\ldots,0,a_1,a_2,\ldots,a_{i-1}) + C_k(0,\ldots,0,a_i,a_{i+1},\ldots,a_k-1) + \sum_{\ell=i}^k a_\ell \\
    \LE\; & \textstyle C_{k-1}(0,\ldots,0,a_1,a_2,\ldots,a_{i-1}) + C_k(0,\ldots,0,a_i,a_{i+1},\ldots,a_k) + \sum_{\ell=i}^k a_\ell \\
    \LE\; & \textstyle C_k(a_1,a_2,\ldots,a_k) + \sum_{\ell=i}^j a_\ell \;.
\end{align*}
The last inequality follows by expanding the terms using Lemma~\ref{L-decompose}(iv). Note that $C_k(a_1,a_2,\ldots,a_k)$ expands to $\sum_{i:a_i>0} C_{a_i-1},i + \sum_{i=1}^k a_i$, while the previous line expands to $\sum_{\ell:a_\ell>0} C_{a_\ell-1,k_\ell} + \sum_{\ell=1}^k a_\ell$, where $k_\ell\ge \ell$, for every $\ell\in [k]$. More precisely, if $1\le \ell\le i-1$, then $k_\ell=(k-i)+\ell\ge\ell$, and if $i\le \ell\le k$, then $k_\ell=\ell$. (Clearly $C_{N,k'}\le C_{N,k}$, for $k'\ge k$.)
\end{proof}

We next show that allowing general merge operations that do not necessarily respect the order of the items in the subarrays also does change the solution of the $(N,k)$-growth game. More precisely, for every $I=\{i_1,i_2,\ldots,i_r\}\subseteq [k]$, where $i_1<i_2<\cdots<i_r$, we allow an $I$-move that merges $A_{i_1},A_{i_2},\ldots,A_{i_r}$, and a new item. The new subarray becomes the new $A_{i_r}$. The subarrays $A_{i_1},A_{i_2},\ldots,A_{i_{r-1}}$ become empty. All other subarrays are unchanged. In other words, $a_{i_r}\gets 1+\sum_{\ell=1}^r a_{i_\ell}$, while $a_{i_\ell}\gets 0$, for $\ell\in[r-1]$. The standard moves in the growth games are $[i]$-moves, for $i\in[k]$. (Note that the decision to let $A_{i_r}$ be the new merged array is arbitrary, since $i_r$ is just the index assigned to the new array. This, however, is the convenient choice, since otherwise we need to define a new cost function $\bar{C}(\aaa)$ on states equal to the minimum of $C(\aaa')$, where~$\aaa'$ is a permutation of~$\aaa$.) We can also allow a variant of an $I$-move in which a new item is not added. The proof of the following lemma is essentially the same as the proof of Lemma~\ref{L-extend-1} and is therefore omitted.

\begin{lemma}\label{L-extend-2}
Allowing $I$-moves for any $I\subseteq [k]$, does not change the optimal values of the $(N,k)$-growth game.
\end{lemma}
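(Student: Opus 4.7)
The plan is to follow the template of Lemma~\ref{L-extend-1} essentially verbatim, showing that any $I$-move can be simulated by standard moves at no greater cost. Concretely, for a state $\aaa=(a_1,\ldots,a_k)\in P_{N,k}$ and an $I$-move with $I=\{i_1<i_2<\cdots<i_r\}$ yielding a state $\aaa'$ with $a'_{i_r}=1+\sum_{\ell=1}^r a_{i_\ell}$, $a'_{i_\ell}=0$ for $\ell<r$, and $a'_j=a_j$ otherwise, I would establish
\[ C_k(\aaa') \LE C_k(\aaa) + 1 + \sum_{\ell=1}^r a_{i_\ell}, \]
which shows that the availability of $I$-moves cannot reduce the minimum cost of reaching any given multiset of subarray sizes.

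First I would apply Lemma~\ref{L-decompose}(iv) to expand both $C_k(\aaa)$ and $C_k(\aaa')$ as sums $\sum_{j:a_j>0} C_{a_j-1,j}$ plus the total number of items. After cancellation, the inequality reduces to
\[ C_{a'_{i_r}-1,\,i_r} \LE C_{a_{i_r}-1,\,i_r} + \sum_{\ell=1}^{r-1}\bigl( C_{a_{i_\ell}-1,\,i_\ell} + a_{i_\ell} \bigr) + a_{i_r}. \]
This I would interpret as the cost of an explicit sequence of standard moves: first reach a canonical configuration that places a block of size $a_{i_\ell}$ at index $i_\ell$ for each $\ell$ (at cost $C_{a_{i_r}-1,i_r}+\sum_{\ell<r} C_{a_{i_\ell}-1,i_\ell}$), then use $r-1$ extended merges of Lemma~\ref{L-extend-1} to pull the blocks at $i_1,\ldots,i_{r-1}$ successively into $i_r$ (total copying cost $\sum_{\ell<r} a_{i_\ell}$), and finally absorb the new item at cost $a_{i_r}+1$ via a standard merge along with the existing block at $i_r$.

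The main obstacle is bookkeeping around non-canonical intermediate states: an $I$-move with non-consecutive indices produces a configuration in which empty slots need not all occur at the bottom, so $C_k$ is not literally defined on $\aaa'$. I would handle this exactly as in the end of the proof of Lemma~\ref{L-extend-1}, by observing that the cost depends only on the multiset of nonzero block sizes together with the positions they occupy, and that moving a block to a higher index never increases cost (since $C_{N,k'}\le C_{N,k}$ for $k'\ge k$). Thus every term on the right-hand side can be matched with a term on the left after relabeling the underlying indices monotonically, and the inequality follows from Lemma~\ref{L-decompose}(iv) as in the final line of the proof of Lemma~\ref{L-extend-1}.
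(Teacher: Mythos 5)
Your proposal follows exactly the route the paper indicates — it omits the proof of Lemma~\ref{L-extend-2}, saying it is ``essentially the same as the proof of Lemma~\ref{L-extend-1}'' — and your high-level plan, the reduced inequality you display, and the key observations (expand via Lemma~\ref{L-decompose}(iv), cancel common terms, use $C_{N,k'}\le C_{N,k}$ for $k'\ge k$ to absorb the shifting of indices caused by scattered empty slots) are all correct and match that template. In particular the displayed inequality
\[ C_{a'_{i_r}-1,\,i_r}\LE C_{a_{i_r}-1,\,i_r}+\sum_{\ell=1}^{r-1}\bigl(C_{a_{i_\ell}-1,\,i_\ell}+a_{i_\ell}\bigr)+a_{i_r} \]
does follow cleanly by placing blocks of sizes $a_{i_1},\ldots,a_{i_r}$ at the consecutive positions $i_r-r+1,\ldots,i_r$ (each $\ge i_\ell$) and invoking Lemma~\ref{L-decompose}(iv) together with $C_{N,k'}\le C_{N,k}$.

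One caution: the middle paragraph offering an ``explicit sequence of standard moves'' does not quite hold up as stated. You cannot literally ``place a block of size $a_{i_\ell}$ at index $i_\ell$ for each $\ell$'' as a single canonical state when the $i_\ell$ are non-consecutive (zeros would then not be a prefix), and the claimed copy cost $\sum_{\ell<r}a_{i_\ell}$ for successively pulling blocks into $A_{i_r}$ undercounts, since each successive merge into $A_{i_r}$ must also recopy the already-accumulated contents of $A_{i_r}$. That paragraph should either be dropped or replaced by the consecutive-placement argument above; the rest of the proof, and especially your final paragraph, already carries the correct argument, so this is a local fix to the exposition rather than a gap in the proof idea.
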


With these extensions, the growth game now captures all the moves that can be performed by standard implementations. (See Definition~\ref{D-standard}.)

\section{Lower bound on the amortized cost of grow operations}\label{S-lower-grow}

We can now prove an $\Omega(r)$ lower bound on the amortized cost of grow operations for standard implementations that use only $N+O(rN^{1/r})$ space to store an array of size~$N$, showing that the constructions of Section~\ref{S-general-r} are essentially optimal.

\begin{theorem}
Any standard resizable array data structure that uses only $N+O(rN^{1/r})$ space to store an array of size~$N$, where $r=O(\log N)$, must have an amortized cost of $\Omega(r)$ for grow operations.
\end{theorem}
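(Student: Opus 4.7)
The plan is to reduce any standard implementation to a play of the $(N,K,K)$-growth game with parameter $K:=C_0 rN^{1/r}$, where $C_0$ absorbs the implicit constant in the $N+O(rN^{1/r})$ space bound, and then to appeal to the analysis of Section~\ref{S-growth-game}. First, consider a sequence of $N$ consecutive $\Grow$ operations starting from the empty array. At every intermediate size $N'\le N$, the extra space used is at most $C_0 r(N')^{1/r}\le K$, so both the number of simultaneously allocated blocks and the number of vacant positions are bounded by $K$. Standardness (Definition~\ref{D-standard}) guarantees that the item assignments performed correspond to a play of the $(N,K,K)$-growth game, possibly using the extended moves of Section~\ref{sub-extend-rule}; by Lemmas~\ref{L-extend-1} and~\ref{L-extend-2} these extensions do not decrease the game's optimum, so the total number of item assignments is at least $C_{N,K,K}$.

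Next, I would apply the game's closed-form lower bound. By the reduction in Section~\ref{sub-l=0}, $C_{N,K,K}=(K+1)\,C_{M,K}$ with $M:=\lfloor N/(K+1)\rfloor$ (rounding is immaterial); letting $n$ be the largest integer with $\binom{n+K-1}{K}\le M$, Corollary~\ref{C-main}(ii) gives $C_{M,K}\ge\tfrac{K}{K+1}(n-1)M$, and combining yields
\[
C_{N,K,K}\;\ge\;\frac{K(n-1)N}{K+1}\;\ge\;\tfrac12(n-1)N.
\]
Hence the amortized cost of a $\Grow$ is at least $(n-1)/2$, and the theorem reduces to producing an $n=\Omega(r)$ satisfying the defining binomial inequality.

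For the final step I would take $n:=\lfloor\beta r\rfloor+1$ for a constant $\beta=\beta(C_0,c_1)>0$, where $c_1$ is the constant in the hypothesis $r\le c_1\log N$. Using $\binom{n+K-1}{K}=\binom{K+n-1}{n-1}$ together with the standard estimate $\binom{a+b}{b}\le(e(a+b)/b)^b$ (with $b=n-1\le K$) gives
\[
\binom{n+K-1}{K}\;\le\;\bigl(2eK/(n-1)\bigr)^{n-1}\;=\;\bigl(2eC_0/\beta\bigr)^{\beta r}\,N^{\beta}.
\]
Requiring this to be at most $N/(K+1)\le N^{1-1/r}/(C_0 r)$ and taking logarithms reduces the entire argument to the single scalar inequality
\[
\beta r\ln(2eC_0/\beta)\;\le\;(1-1/r-\beta)\ln N-\ln(C_0 r),
\]
which, using $r\le c_1\ln N$ on the left and $r\ge 2$ with $\ln(C_0 r)=o(\ln N)$ on the right, is implied by $\beta c_1\ln(2eC_0/\beta)+\beta<\tfrac12$. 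Since $\beta\ln(1/\beta)\to 0$ as $\beta\to 0^{+}$, this condition holds for all sufficiently small $\beta$. Fixing such a $\beta$ yields $n\ge\beta r+1$ for all $r$ large enough that $\beta r\ge 1$, giving amortized cost $\Omega(r)$; for any bounded range of $r$ below this threshold the trivial bound of one item assignment per $\Grow$ already furnishes amortized cost $\Omega(r)$.

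The principal obstacle is the tightness of the binomial estimate. The naive bound $\binom{n+K-1}{K}\le e^{n+K}$ would be useless here, because $r=O(\log N)$ forces $N^{1/r}\ge e$, so $K=\Omega(\ln N)$ and $e^K$ alone already exceeds $M$. One must treat $n-1$, not $K$, as the small side of the binomial coefficient; only then does the left hand side scale like $\beta r\ln(1/\beta)\cdot\ln N$, which can be absorbed into an arbitrarily small fraction of $(1-\beta)\ln N$ by shrinking $\beta$. With that estimate secured, everything else---the reduction to the growth game, the $\ell=0$ reduction, and the invocation of Corollary~\ref{C-main}(ii)---is routine bookkeeping.
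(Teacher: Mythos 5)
Your proposal is correct and follows essentially the same route as the paper: reduce a standard implementation to the extended $(N,K,K)$-growth game (Section~\ref{sub-extend-rule}), apply the $\ell=0$ reduction and Corollary~\ref{C-main}(ii), then set $n\approx\beta r$ and verify the required binomial condition via $\binom{m}{j}\le(\e m/j)^j$ applied to the small index $j\approx\beta r$. The only cosmetic difference is that the paper absorbs the implicit constant in $O(rN^{1/r})$ by invoking the result with $r-1$ in place of~$r$ (using $\alpha r N^{1/r}<(r-1)N^{1/(r-1)}$ for large~$N$), whereas you carry the constant $C_0$ explicitly and absorb it by shrinking~$\beta$.
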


\begin{proof}
Assume that the extra space used by the data structure is $rN^{1/r}$. (If the extra space is $\alpha r N^{1/r}$ for some constant $\alpha>1$, apply the theorem for $r-1$, relying on the inequality $\alpha r N^{1/r}<(r-1)N^{1/(r-1)}$ for sufficiently large~$N$.) 
Since the operations of any standard implementation correspond to moves in the (extended) $(N,rN^{1/r},rN^{1/r})$-growth game, Corollary~\ref{C-main} implies that the amortized cost of grow operations is at least $A_{N,rN^{1/r},rN^{1/r}}=A_{\frac{1}{r}N^{1-1/r},rN^{1/r}}\ge \frac{1}{2}n$, provided that $\binom{n+rN^{1/r}}{n}< \frac{1}{r}N^{1-1/r}$. We thus want to show that when $n=\beta r$, for some $\beta>0$, we have $\binom{\beta r+ rN^{1/r}}{\beta r} < \frac{1}{r}N^{1-1/r}$ for sufficiently large~$N$.

To upper bound binomial coefficients, we use the well known inequality $\binom{n}{k}\le \bigl(\frac{\e n}{k}\bigr)^k$. We now have:
\[ \binom{\beta r+ rN^{1/r}}{\beta r} \LE \binom{(\beta+1)rN^{1/r}}{\beta r} \LE \left(\frac{\e(\beta+1)N^{1/r}}{\beta}\right)^{\beta r}
\EQ \left(\frac{\e(\beta+1)}{\beta}\right)^{\beta r} N^\beta\;.\] 
Choosing for example $\beta=\frac12$, and assuming that $3\le r\le \alpha\log N$, for an appropriate choice of~$\alpha$, we get that $r(3\e)^{r/2}N^{1/2}\le N^{3/4}\le N^{1-1/r}$, as required.
\end{proof}

\section{Concluding remarks}\label{S-concl}

We have presented a family of resizable array implementations that provide optimal trade-offs between the space needed to store an array, the space needed to resize, i.e., grow or shrink, the array, and the amortized cost of grow and shrink operations, while still maintaining $O(1)$ worst-case access cost. We believe this solves a very fundamental problem almost completely.

Our lower bound on the amortized cost of grow operations applies only to what we call standard algorithms. We believe that these lower bounds can be extended to cover all algorithms, assuming a suitable \emph{incompressability} assumption on items and pointers. (See the discussion after the proof of Theorem~\ref{T-lower1}.) In other words, we believe that non-standard algorithms do not offer any advantage over standard algorithms and that to prove this rigorously should not be conceptually hard.

To obtain the amortized lower bounds on grow operations, we defined an interesting growth game and then solved it exactly. While we believe that the exact solution of the game is illuminating, it would be interesting to know if there is a simpler way of obtaining, perhaps by induction, lower bounds on the values of the game that would be sufficient to prove asymptotically optimal amortized lower bounds.

\bibliographystyle{plain}
\bibliography{bibliography}

\end{document}